\title{Discerning Solution Concepts%
	\thanks{%
	    A previous version of this paper was circulated under the title ``Identification of Solution Concepts for Discrete Games''.
		We are thankful for the supervision of Joris Pinkse, Sung Jae Jun and Andr\'es Aradillas-L\'opez,
		as well as the useful comments from 
		Victor Aguiar,
		Roy Allen,
		Bulat Gafarov,
		Paul Grieco,
        Marc Henry,
        Brendan Kline,
		Robert Marshall,
		Francesca Molinari,
		Salvador Navarro, and
		Mark Roberts. 
		%%%
		We also thank the attendants of the 2014 Spring Midwest Theory and Trade Conference at IUPUI,
		the 2014 Summer Meeting of The Econometric Society at the University of Minnesota,
		the 2015 International Game Theory Conference at Stony Brook University,
		and the 11th World Congress of the Econometric Society.
		%%%
		We gratefully acknowledge the Human Capital Foundation (\href{http://www.hcfoundation.ru/en/}{http://hcfoundation.ru/en/}),
		and particularly Andrey P.\ Vavilov,
		for research support through the Center for the Study of Auctions, Procurement, and Competition Policy
		(\href{http://capcp.psu.edu/}{http://capcp.psu.edu/}) at the Pennsylvania State University.
		%%%
		All remaining errors are our own.}
}
\author{
	Nail Kashaev%
	\thanks{
		Department of Economics, Western University,
		{\href{http://nail.kashaev.ru/}{{nail.kashaev.ru}}},
		{\href{mailto:nkashaev@uwo.ca}{{nkashaev@uwo.ca}}}.
	}
	\and
	Bruno Salcedo%
	\thanks{
		Department of Economics, Western University,
		{\href{http://www.brunosalcedo.com/}{{brunosalcedo.com}}},
		{\href{mailto:bsalcedo@uwo.ca}{{bsalcedo@uwo.ca}}}.
	}
}
\date{\enspace First version: January 3, 2014\\
    This version: \today}
\newcommand{\ubar}[1]{\underaccent{\bar}{#1}}
\newcounter{aux}
\newcounter{egentry}
\newcommand{\zibri}{\"{U}///}
\newcommand{\quokkito}{\"{q}}
\providecommand{\psreset}{\psset{%
		linewidth=0.3pt,linestyle=solid,linecolor=black,
		dotsize=2.5pt,dotsep=2.5pt,arrowsize=4pt,
		fillstyle=none,fillcolor=white,
		showpoints=false,arrows=-,linearc=0,framearc=0,
		hatchsep=2pt,hatchwidth=0.2pt,nodesep=4pt,opacity=1}
	\psset{gridcolor=black!60, subgridcolor=black!30}
}
\titleformat{\section}[block]{\centering\large\bfseries\sffamily}{\thesection.}{0.5em}{}
\titleformat{\subsection}[block]{\flushleft\bfseries}{\thesubsection.}{0.5em}{}
\titleformat{\subsection}[block]{\flushleft\bfseries\sffamily}{\thesubsection.}{0.5em}{}
\titleformat{\subsubsection}[runin]{\normalsize\itshape}{\bfseries\upshape\sffamily\thesubsubsection.}{0.5em}{}[.--\:]
\renewcommand{\thesubsubsection}{\arabic{section}.\arabic{subsection}.\alph{subsubsection}}
\titlespacing{\section}{0ex}{10ex}{5ex}
\titlespacing{\subsection}{0in}{6ex}{3ex}
\titlespacing{\subsubsection}{0mm}{2ex}{0.5em}
\providecommand{\abstitle}[1]{{\par\vspace*{2ex}\small\bfseries\sffamily #1}\hspace*{1ex}}
\renewenvironment{abstract}%
{\begin{center}\begin{minipage}{0.82\linewidth}%
			%\setstretch{1.15}
			\setlength{\parindent}{0.0em}\abstitle{Abstract}\small}%
		{\end{minipage}\end{center}\vfill\clearpage}
\newtheoremstyle{defn}{3ex}{3ex}{}{}{\bfseries\sffamily}{}{.5em}%
{{\thmname{#1}\:\thmnumber{#2}}\:\thmnote{\mdseries({\small#3})\:}}%
\theoremstyle{defn}
\newtheorem{definition}{Definition}
\newtheorem{assumption}{Assumption}
\newtheoremstyle{prop}{3ex}{3ex}{\itshape}{}{\sffamily\bfseries}{}{.5em}%
{{\thmname{#1}\:\thmnumber{#2}}\:\thmnote{\mdseries({\small#3})\:}}
\theoremstyle{prop}
\newtheorem{proposition}{Proposition}[section]
\newtheorem{theorem}[proposition]{Theorem}
\newtheorem{lemma}[proposition]{Lemma}
\newtheorem{corollary}[proposition]{Corollary}
\newtheoremstyle{obs}{2ex}{2ex}{}{}{\itshape}{}{.5em}%
{{\thmname{#1}\:\thmnumber{#2}}\:\thmnote{[{\small#3}]\:}}
\theoremstyle{obs}
\newtheorem{example}{Example}%[section]
\providecommand{\blank}{{\,\cdot\,}}
\providecommand{\Exp}[2][\!]{\mathds{E}_{#1}\left[\,#2\,\right]}
\providecommand{\Conv}[1][]{\xrightarrow[#1]{\quad}}
\providecommand{\Char}{\mathds{1}}
\providecommand{\Real}{{\mathds{R}}}
\providecommand{\tr}{^{{\sf T}}}
\providecommand{\as}{\ensuremath{\mathrm{a.s.}}}
\providecommand{\rand}[1]{\mathbf{#1}}
\newcommand{\abs}[1]{\left\lvert#1\right\rvert}
\begin{document}
%%%
\maketitle
\begin{abstract}
	The empirical analysis of discrete complete-information games has relied on behavioral restrictions in the form of solution concepts, such as Nash equilibrium.
	%%%
    Choosing the right solution concept is crucial not just for identification of payoff parameters, but also for the validity and informativeness of counterfactual exercises and policy implications.
    %%%
    We say that a solution concept is discernible if it is possible to determine whether it generated the observed data on the players' behavior and covariates. 
    %%%
    We propose a set of conditions that make it possible to discern solution concepts.
    %%%
    In particular, our conditions are sufficient to tell whether the players' choices emerged from Nash equilibria.
    %%%
    We can also discern between rationalizable behavior, maxmin behavior, and collusive behavior.
    %%%
    Finally, we identify the correlation structure of unobserved shocks in our model using a novel approach.
	%%%
	\abstitle{Keywords} 
	Discrete Games
	$\cdot$ Testability 
	$\cdot$ Identification
	$\cdot$ Incomplete models
	$\cdot$ Market entry
	%%%
	\abstitle{JEL classification} 
	C52 $\cdot$ C72
\end{abstract}
	
%%%%%%%%%%%%%%%%%%%%%%%%%%%%%%%%%%%%%%%%%%%%%%%%%%%%%%%%%%%%%%%%%%%%%%%
%%%%%%%%%%%%%%%%%%%%%%%%%%%%%%%%%%%%%%%%%%%%%%%%%%%%%%%%%%%%%%%%%%%%%%%
%%%%%%%%%%%%%%%%%%%%%%%%%%%%%%%%%%%%%%%%%%%%%%%%%%%%%%%%%%%%%%%%%%%%%%%
%%%%%%%%%%%%%%%%%%%%%%%%%%%%%%%%%%%%%%%%%%%%%%%%%%%%%%%%%%%%%%%%%%%%%%%
\section{Introduction}\label{sec:intro}
%%%
In Game Theory, solution concepts impose restrictions on the behavior of players given their payoffs. 
%%%
The most popular solution concept is Nash equilibrium (NE) \citep{nash51}. 
%%%
Solution concepts are often used to establish theoretical results, to identify payoff parameters, and to derive policy and welfare implications from counterfactual analyses.%
%%%
\footnote{%
	The classic revealed-preference approach to the identification of payoff parameters in discrete games of complete information assumes that the choice of each player is a best response to the observed choices of other players \citep{bjorn84,jovanovic89,bresnahan90}.
	%%%
	This is tantamount to assuming that the players' choices constitute pure strategy NE. 
	%%%
	The approach can be generalized to allow mixed strategy NE \citep{tamer03,BHR},
		rationalizable strategies \citep{aradillas08,kline15},
		or general convex solution concepts \citep{BMM,galichon11}.
	%%%
	See \cite{dePaula13} for a review of the literature.} 
%%%
However, there may exist different solution concepts that are observationally equivalent but yield different payoff parameters, theoretical implications, and counterfactual predictions (see Section~\ref{sec:eg} for an example).
%%%
Thus, it is important to understand when one can tell a solution concept apart from other solution concepts.
%%%
In such cases, we sat that the solution concept is \emph{discernible}.

We consider multiplayer binary-action games of complete information, similar to the classic entry game from \cite{bresnahan90}.
%%%
We maintain the assumption that the players' choices can display any form of rationalizable behavior in the sense of \citet{bernheim1984} and \citet{pearce84}.
%%%
We provide a set of conditions that are sufficient to establish discernibility of any solution concept stronger than rationalizability.
%%%
For instance, it is possible to determine whether the players' decisions arise from NE.
%%%
Moreover, if they do arise from NE, then they cannot be consistent with any other form of rationalizable behavior.
%%%
We also identify all the payoff parameters, including those governing the correlation structure of the unobserved heterogeneity.
%%%
To the best of our knowledge, this is the first formal result that identifies the correlation parameters using a solution concept weaker than pure-strategy NE. 

Usually, the testable implications of a solution concept (e.g., NE) can be used to determine whether it is consistent with or \emph{could have} generated the observed data.%
%%%
\footnote{For example, one can construct a models specification test based on the results from \cite{BMM} or \cite{galichon11}.}
%%%
Our results allow the researcher to answer the question of whether the solution concept actually generated the data.
%%%
This question is important because of several reasons. 
%%%
A solution concept can be consistent with the data and, at the same time, yield misleading counterfactual predictions. 
%%%
For example, this could happen if an alternative solution concept generated the data, and the two solution concepts are observationally equivalent. 
%%%
We provide an example in Section~\ref{sec:eg}.
%%%
Establishing discernibility of NE precludes this possibility and helps to establish the validity of counterfactual analysis and policy implications.

Discernibility is also useful in making sharper counterfactual predictions. 
%%%
One can always assume a less restrictive solution concept (in our case rationalizabilty), build the confidence set for the payoff parameters, and then construct robust confidence bands for the counterfactual of interest.
%%%
However, these bands can be uninformative because of the weakness of the restrictions imposed on behavior.
%%%
If one shows that a stronger solution concept (e.g., NE) generated the data, and this solution concept is discernible, then one can build more informative bounds for the counterfactual predictions.
%%%
In other words, our methodology allows the researcher to determine the strongest restrictions on behavior that are still consistent with the observed data.

Discernibility also has practical implications that may reduce the computational burden. 
%%%
Suppose that the researcher is considering different solution concepts,
and establishes discernibility of all of them.
%%%
Discernibility implies that at most one of the solution concepts under consideration can explain the data.
%%%
Hence, if a given solution concept explains the data, then the researcher can automatically rule out the other alternatives. 

Our strategy to establish discernibility relies on two assumptions. 
%%%
First, we assume that the researcher observes covariates with full support satisfying an exclusion restriction. 
%%%
Second, we assume that the excluded covariates generate enough variation in the conditional distribution of payoffs conditional on covariates.
%%%
In particular, we require the family of these conditional distributions to be boundedly complete.%
%%%
\footnote{%
	Completeness of a family distribution is a well-known concept both in Statistical and Econometrics literature. See \cite{andrews11}.
	\cite{newey03} and \cite{darolles11} use a completeness assumption to establish non-parametric identification for conditional moment restrictions. \cite{blundell07} use it to achieve identification of Engel curves. \cite{hoderlein12} impose bounded completeness in the context of structural models with random coefficients.}
%%%
Using these assumptions, we identify the distribution of payoffs and the distribution of outcomes conditional on both the observed and unobserved characteristics of the environment.
%%%
Knowing these distributions allows us to establish discernibility of solution concepts. 

We are not the first to exploit the power of completeness assumptions coupled with exclusion restrictions to discriminate between behavior patterns.
%%%
\cite{berry14} apply a strategy similar to ours to a model of oligopolistic competition that allows, among other things, to discriminate between different models of competition.
	%%%
	A significant difference between their setting and ours is that they consider continuous games, while we consider discrete games.
	%%%
They crucially rely on having an uncountable set of outcomes to relax the completeness assumption to some extent.

Identification of the payoff parameters is not necessary for discernibility. 
In Section~\ref{sec:beyond}, we relax rationalizability and allow for some forms of collusive behavior and ambiguity aversion in the sense of \cite{gilboa89}. 
%%%
This comes at the expense that some of the payoff parameters are no longer point identified.
%%%
However, we still can establish discernibility of a large class of solution concepts. 

%%%%%%%%%%%%%%%%%%%%%%%%%%%%%%%%%%%%%%%%%%%%%%%%%%%%%%%%%%%%%%%%%%%%%%%
%%%%%%%%%%%%%%%%%%%%%%%%%%%%%%%%%%%%%%%%%%%%%%%%%%%%%%%%%%%%%%%%%%%%%%%
%%%%%%%%%%%%%%%%%%%%%%%%%%%%%%%%%%%%%%%%%%%%%%%%%%%%%%%%%%%%%%%%%%%%%%%
%%%%%%%%%%%%%%%%%%%%%%%%%%%%%%%%%%%%%%%%%%%%%%%%%%%%%%%%%%%%%%%%%%%%%%%
\section{Motivating Example}
\label{sec:eg}
%%%
We begin with a simple example to motivate the meaning and the importance of discernibility of solution concepts.
%%%
First, we show that two different solution concepts%
---pure strategy Nash equilibrium (PNE) and a behavioral solution concept called strategic ambiguity aversion (SAA)---can be observationally equivalent.
%%%
That is, they can generate the same distribution over observables.
%%%
Hence, it is impossible to discern PNE and SAA in our example.
%%%
Next, we show that PNE and SAA not being discernible can lead to incorrect quantitative and qualitative policy recommendations. 

%Consider an entry model similar to the classic entry game from \cite{bresnahan90}.
%%%
Two firms $i\in \{1,2\}$ simultaneously choose whether to enter a market $(y_i=1)$ or not $(y_i=0)$.
%%%
Firm $i$'s profit is given by
\[
%\label{eqn:profits-eg}
	y_{i} \cdot\big[\eta_0 (1- y_{-i}) - \rand{e}_{i} \big],%
%%%
\footnote{Throughout the paper, we use boldface font to denote random variables and vectors, and regular font for deterministic ones.}
\]
%%%
where 
    ({i}) $y_{-i}$ is the choice of $i$'s competitor;
    %%%
    ({ii}) $\eta_0 \ge0$ is a fixed parameter that measures the effect of competition and is unknown by the researcher;
    %%%
    and ({iii}) $\rand{e} = (\rand{e}_1,\rand{e}_2)$ is a vector of payoff shocks unobserved by the researcher.
    %%%
    We assume that $\rand{e}$ is supported on $\Real^2$ and admits a probability density function that is symmetric around the $45$-degree line (e.g., $\rand{e}_1$ and $\rand{e}_2$ are independent standard normal random variables).
    % and ({iii}) $\rand{e} = (\rand{e}_1,\rand{e}_2)$ is a vector of payoff shocks unobserved by the researcher and normally distributed with zero mean and the identity matrix as a covariance matrix.
%%%
The firms observe both $\eta_0$ and $\rand{e}$.
%%%
That is, the game is of complete information.
%%%
The researcher observes (can consistently estimate) the distribution of outcomes $\rand{y} = (\rand{y}_1,\rand{y}_2)$.
%%%
Many of the simplifications we make in this example are for exposition purpose only and are relaxed in subsequent sections.

%%%%%%%%%%%%%%%%%%%%%%%%%%%%%%%%%%%%%%%%%%%%%%%%%%%%%%%%%%%%%%%%%%%%%%%%%%%%%%%%%%%%%
%%%%%%%%%%%%%%%%%%%%%%%%%%%%%%%%%%%%%%%%%%%%%%%%%%%%%%%%%%%%%%%%%%%%%%%%%%%%%%%%%%%%%
%%%%%%%%%%%%%%%%%%%%%%%%%%%%%%%%%%%%%%%%%%%%%%%%%%%%%%%%%%%%%%%%%%%%%%%%%%%%%%%%%%%%%
\subsection{Failure of Discernibility}
\label{sec:eg-discernibility}
%%%
Following \cite{bresnahan90},
the classic approach to analyze entry games is to assume that the firms' choices always constitute PNE. 
%%%
Suppose that a researcher wants to test this assumption under the milder assumption that firm behavior is rationalizable. 
%%%
In our example, rationalizability is equivalent to assuming that the firms can choose any action that survives two rounds of elimination of strictly dominated strategies.
%%%
When $e_i < 0$, entering the market is strictly dominant for firm $i$.
%%%
When $e_i > \eta_0$, staying out of the market is strictly dominant for firm $i$.
%%%
This results in four regions of the payoff space in which the game has a unique rationalizable outcome.
%%%
In the remaining region---the \emph{multiplicity region}---rationalizability imposes no restrictions on behavior.

Consider the following solution concept. 
%%%
Firm behavior is rationalizable, but firms never enter when there are multiple rationalizable outcomes. 
%%%
This could happen, for instance, if the firms were ambiguity averse and used maxmin strategies when facing strategic uncertainty.
%when the game is not dominance solvable. 
%%%
This solution concept is called SAA and is analyzed in \cite{mass19}.
%%%
The predictions of SAA with $\eta_0=\eta$ for different realizations of $\rand{e}$ are illustrated in the left panel of Figure~\ref{fig:eg}.

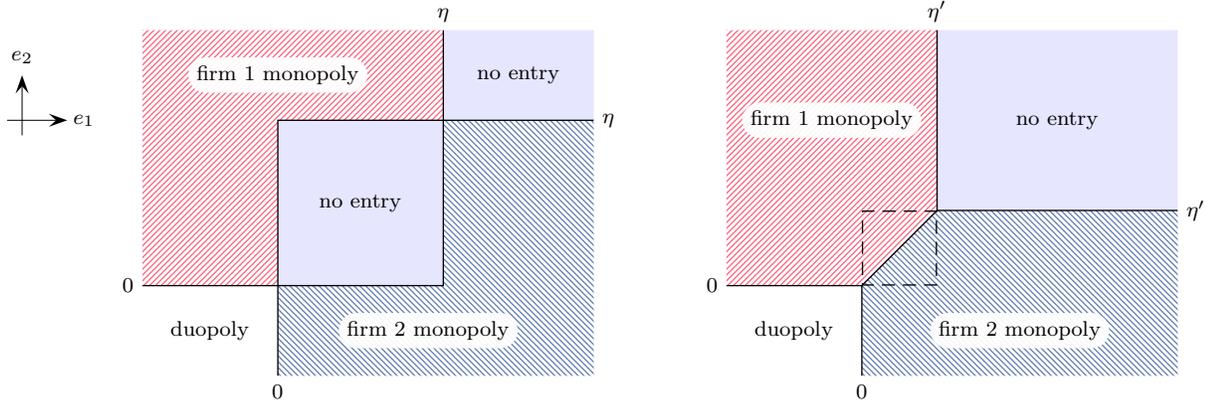
\begin{figure}[t]
	\centering
	\psset{unit=4mm}
	\newrgbcolor{navy}{0.1 0.3 0.5}
	\begin{pspicture}(-6,-4)(11,8.5)
	%\psgrid
	\scriptsize%
	\newrgbcolor{uno}{0.4 0.5 0.7}
	\newrgbcolor{dos}{0.99 0.40 0.50}
	% Axes
 	\psset{arrows=->,arrowsize=5pt}
 	\psline(-8.5,5)(-8.5,7) \rput[b](-8.5,7.4){$e_2$}
 	\psline(-9,5.5)(-7,5.5) \rput[l](-6.8,5.5){$e_1$} 
 	\psreset%
	% objects
	%\psset{hatchsep=1.75pt,hatchwidth=0.3pt}
	\psset{hatchsep=1pt,hatchwidth=0.5pt}
	% regions
	\psframe*[linecolor=blue!10](5.5,5.5)(10.5,8.5)
	\psframe*[linecolor=blue!10](0,0)(5.5,5.5)
	\pspolygon[linestyle=none,fillstyle=vlines,hatchcolor=uno](0,-3)(0,0)(5.5,0)(5.5,5.5)(10.5,5.5)(10.5,-3)
	\pspolygon[linestyle=none,fillstyle=hlines,hatchcolor=dos](-4.5,0)(0,0)(0,5.5)(5.5,5.5)(5.5,8.5)(-4.5,8.5)
	% delimiting lines
	\psset{linewidth=0.5pt}
	\psline(-4.5,0)(5.5,0)(5.5,8.5)
	\psline(0,-3)(0,5.5)(10.5,5.5)
	\psreset%
	% Labels
	\psset{linestyle=none,fillstyle=solid,opacity=0.95,framearc=1}
	% thresholds
	\rput[l](10.5,5.5){{\psframebox{$\eta$}}}
	\rput[r](-4.5,0){{\psframebox{$0$}}}
	\rput[b](5.5,8.5){\psframebox{$\eta$}}
	\rput[t](0,-3.0){{\psframebox{$0$}}}
	% regions
	\rput(8,7){{no entry}}
	\rput(2.75,2.75){{no entry}}
	\rput(0,7.0){\psframebox{firm $1$ monopoly}}
	\rput(5,-1.5){\psframebox{firm $2$ monopoly}}
	\rput(-2.25,-1.5){\psframebox{duopoly}}
	\psreset%
	\end{pspicture}
	\qquad
	%%% Figure: Rationalizing ambiguity with Nash
	\begin{pspicture}(-6,-4)(11,8.5)
	%\psgrid
	\scriptsize%
	\newrgbcolor{uno}{0.4 0.5 0.7}
	\newrgbcolor{dos}{0.99 0.40 0.50}
	% objects
	\psset{hatchsep=1pt,hatchwidth=0.5pt}
	% regions
	\psframe*[linecolor=blue!10](2.5,2.5)(10.5,8.5)
	\pspolygon[linestyle=none,fillstyle=vlines,hatchcolor=uno](0,-3)(0,0)(2.5,2.5)(10.5,2.5)(10.5,-3)
	\pspolygon[linestyle=none,fillstyle=hlines,hatchcolor=dos](-4.5,0)(0,0)(2.5,2.5)(2.5,8.5)(-4.5,8.5)
	%\psframe[linestyle=none,fillstyle=vlines,hatchcolor=red!60!blue](-0.5,-2)(4,1.5)
	% delimiting lines
	\psset{linewidth=0.5pt}
	\psline(-4.5,0)(0,0)(0,-3)
	\psline(2.5,8.5)(2.5,2.5)(10.5,2.5)
	\psline(0,0)(2.5,2.5)
	\psset{linestyle=dashed}
		\psframe(0,0)(2.5,2.5)
	\psreset%
	% Labels
	\psset{linestyle=none,fillstyle=solid,opacity=0.95,framearc=1}
	% thresholds
	%\rput[l](9,3){{\psframebox{$z_2+\eta$}}}
	\rput[l](10.5,2.5){{\psframebox{$\eta'$}}}
	\rput[r](-4.5,0){{\psframebox{$0$}}}
	\rput[b](2.5,8.5){\psframebox{$\eta'$}}
	\rput[t](0,-3){{\psframebox{$0$}}}
	% regions
	\rput(6.5,5.5){{no entry}}
	\rput(-1,5.5){\psframebox{firm $1$ monopoly}}
	\rput(5.25,-1.5){\psframebox{firm $2$ monopoly}}
	\rput(-2.25,-1.5){\psframebox{duopoly}}
	%\rput(1.75,-0.25){\psframebox[framesep=2pt]{$\begin{array}{\text{random} monopoly}}
	\psreset%
	\end{pspicture}
	\caption{Two observationally equivalent models:
		strategic ambiguity aversion (left)
		and pure-strategy Nash equilibrium (right).}
	\label{fig:eg}
\end{figure}

We will show that PNE and SAA can produce the exact same distributions over observables.
%%%
According to PNE, only one firm enters in the multiplicity region.
%%%
Since either of the two firms could be the one that enters in equilibrium, we need to specify an equilibrium selection rule. 
%%%
We assume that firms always play the equilibria according to which the most profitable firm is the one that enters.
%%%
The predictions of PNE with such selection rule and $\eta_0=\eta'$ are illustrated in the right panel of Figure~\ref{fig:eg}.

The duopoly region is the same under both solution concepts.
%%%
For a fixed value of the competition effect, the no-entry region is smaller under PNE.
%%%
However, since $\eta_0$ is unknown to the researcher, it is possible to set $\eta'<\eta$ so that both models assign the same probability to no entry.
%%%
Since both models imply the same probability of both duopoly and no entry, 
they also imply the same probability of having a monopoly. 
%%%
Note that both the distribution of shocks and each of the two models are symmetric around the 45-degree line.
%%%
Hence, each of the two monopolies are equally likely under both solution concepts. 
%%%
The formal proof is in Appendix~\ref{sec:eg-proof} in the online supplement.

Despite being very different, SAA and PNE can imply identical distributions over outcomes. 
%%%
Hence, in this example, it is impossible to determine whether the data \emph{is} generated by PNE. 
%%%
At best, the researcher can tell whether the data \emph{can be} explained by PNE.
%%%
That is, PNE is not discernible. 
% %%%
Exactly for the same reason, SAA is also not discernible. 
%%%
The fact that we use different parameter values for each of the two models is unimportant.
%%%
For instance, Proposition~\ref{prop:nash} in Section~\ref{sec:main-discernibility} establishes a general nondiscernibility result that can be applied even if the payoff parameters are assumed to be the same under different solution concepts.

%%%%%%%%%%%%%%%%%%%%%%%%%%%%%%%%%%%%%%%%%%%%%%%%%%%%%%%%%%%%%%%%%%%%%%%%%%%%%%%%%%%%%
%%%%%%%%%%%%%%%%%%%%%%%%%%%%%%%%%%%%%%%%%%%%%%%%%%%%%%%%%%%%%%%%%%%%%%%%%%%%%%%%%%%%%
%%%%%%%%%%%%%%%%%%%%%%%%%%%%%%%%%%%%%%%%%%%%%%%%%%%%%%%%%%%%%%%%%%%%%%%%%%%%%%%%%%%%%
\subsection{Importance of Discernibility}
\label{sec:eg-policy}
%%%%
Next, we show that PNE and SAA can generate opposite counterfactual predictions in our example.
%%%
Therefore, the failure of discernibility can lead to incorrect policy recommendations.
%%%
Suppose that a policymaker wants to increase the number of markets that are served by at least one firm. 
%%%
As a policy instrument, she can choose to offer a subsidy $\tau>0$ to one firm, say Firm $1$, for entering markets in which Firm $2$ does not enter.
%%%
Entry subsidies are commonly used to incentivize the provision of strategic infrastructure such as broadband internet access \citep{goolsbee02}.
%%%
The specific subsidy scheme we analyze allows for a stark and simple exposition.
%%%
Appendix~\ref{sec:subsidy} in the online supplement presents a similar result with a more realistic subsidy scheme.

Suppose that the data is generated by SAA, but the policymaker evaluates the policy assuming that firms always play PNE.
%%%%
We have already established that the policymaker cannot refute her assumption because PNE and SAA are observationally equivalent.
%%%
However, under PNE the size of the competition effect must be smaller than the one under SAA (i.e., $\eta'<\eta$).
%%%
Thus, assuming the incorrect solution concept would lead to inconsistent estimates of $\eta_0$.
%%%
This, in turn, might lead to flawed welfare evaluations. 
%%%
Since both models are examples of rationalizable behavior,
one may expect that an inconsistent estimator of the competition effect will not affect the qualitative implications of different policy interventions. 
%%%
Here, this is not the case. 

%%%%%%%%%%%%%%%%%%%%%%%%%%%%%%%%%%%%%%%%%%%%%%%%%%%%%%%%%%%%%%%%%%%%%%%%%%%%%%%%%%%%%%%%%%%%%%%%%%%%
PNE in the multiplicity region always predict monopolies.
%%%
Hence, under the PNE hypothesis, all markets are served except for those in which not entering is dominant for both firms.
%%%
The policy being evaluated decreases the probability of the latter region (see Figure~\ref{fig:eg-policy}).
%%%
Therefore, under the policymaker's assumptions, the policy unambiguously reduces the number of markets without service independently of the parameter values.

\begin{figure}[t]
	\centering
	\psset{unit=4mm}
	\newrgbcolor{navy}{0.1 0.3 0.5}
	\begin{pspicture}(-11,-5)(20,8)
	%\psgrid
	\scriptsize%
	% Axes
% 	\psset{arrows=->,arrowsize=5pt}
% 	\psline(-8.5,5)(-8.5,7) \rput[b](-8.5,7.4){$e_2$}
% 	\psline(-9,5.5)(-7,5.5) \rput[l](-6.8,5.5){$e_1$} 
% 	\psreset%
	% regions
	\newrgbcolor{gain}{0.99 0.40 0.50}
	\newrgbcolor{loss}{0.4 0.5 0.7}
	\psset{hatchsep=1pt,hatchwidth=0.5pt,linestyle=none}
	% gains
	\psset{fillstyle=hlines,hatchcolor=gain}
	\psframe(3.5,3)(7,7)
	\psframe(11,-1)(13,1)
	\rput[l](13.5,0){\parbox{28mm}{\flushleft Shrinkage of no-entry region}}
	% losses
	%\psset{fillstyle=vlines,hatchcolor=loss}
	\psset{fillstyle=solid,fillcolor=blue!10}
	\psframe(3.5,-1)(7,3)
	\psframe(11,-4)(13,-2)
	\rput[l](13.5,-2.5){ \parbox{28mm}{\flushleft Growth of multiplicity region}}
	\psreset%
	% delimiting lines
	\psline(-2,-4)(-2,3)(11,3)
	\psline(-6,-1)(7,-1)(7,7)
	\psline(-6,-1)(3.5,-1)(3.5,7)
	% policy arrows
	\newrgbcolor{policy}{0.2 0.5 0.3}
	\psset{arrows=->,linecolor=policy,linewidth=1.5pt, arrowsize= 5pt}
	\psline(3.7,7.6)(6.8,7.6)
	\psreset%
	% labels
	\psset{linestyle=none,fillstyle=solid,opacity=0.95,framearc=1}
	% old thresholds
	\rput[l](11,3){{\psframebox{$\eta$}}}
	\rput[r](-6,-1){{\psframebox{$0$}}}
	\rput[br](3.5,7){\psframebox{$\eta$}}
	\rput[t](-2,-4){{\psframebox{$0$}}}
	% new threshholds
	\rput[bl](7,7){{\psframebox{$\eta + \tau$}}}
	% areas
	\rput(5.25,5){\psframebox{$E^+$}}
	\rput(5.25,1){{$E^-$}}			
	\psreset%
	\end{pspicture}
	\caption{Effect of the proposed policy.}
	\label{fig:eg-policy}
\end{figure}
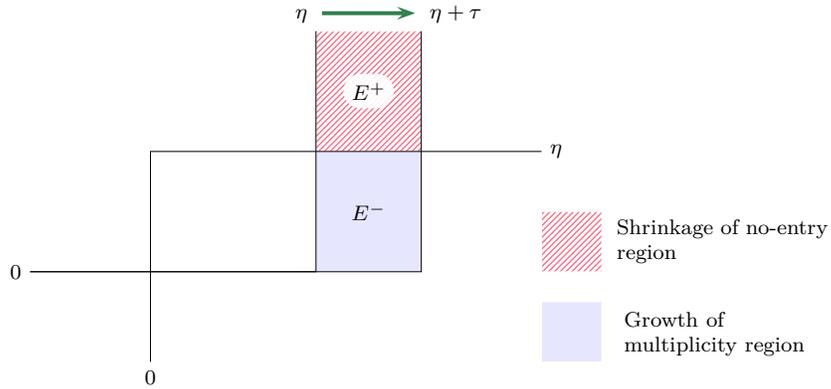

However, under SAA, the effect of the policy is always smaller than under PNE, and it can even have the opposite direction for some parameter values. 
%%%
This can happen because the policy also increases the probability of the multiplicity region and, under strategic ambiguity aversion, firms never enter in this region.
%%%
A firm might be willing to forego the subsidy for fear of another firm entering the market, which would result in negative profits. 
%%%
The net effect of the policy on the probability of monopolies is given by the difference between the probabilities of regions $E^+$ and $E^-$ in Figure~\ref{fig:eg-policy}. 
%%%
The adverse effect can actually dominate and the policy can increase the probability that a market is not served.
%%%
For example, one can verify that this is the case whenever $\rand{e}_1$ and $\rand{e}_2$ are independent standard normal random variables and $\Phi(\eta)>3/4$,
where $\Phi(\blank)$ is the standard normal cumulative distribution function.
%%%%%%%%%%%%%%%%%%%%%%%%%%%%%%%%%%%%%%%%%%%%%%%%%%%%%%%%%%%%%%%%%%%%%%%
%%%%%%%%%%%%%%%%%%%%%%%%%%%%%%%%%%%%%%%%%%%%%%%%%%%%%%%%%%%%%%%%%%%%%%%
%%%%%%%%%%%%%%%%%%%%%%%%%%%%%%%%%%%%%%%%%%%%%%%%%%%%%%%%%%%%%%%%%%%%%%%
%%%%%%%%%%%%%%%%%%%%%%%%%%%%%%%%%%%%%%%%%%%%%%%%%%%%%%%%%%%%%%%%%%%%%%%
\section{Framework}
\label{sec:model}

The motivating example from Secion \ref{sec:eg} shows that it is possible for very different solution concepts to be observationally equivalent while implying different policy recommendations. 
%%%
In what follows, we introduce a framework that rules out that possibility. 
%%%
Our assumptions on the data generating process and the covariates observed by the researcher guarantee that a large class of solution concepts are discernible, in a formal sense to be defined. 

%%%%%%%%%%%%%%%%%%%%%%%%%%%%%%%%%%%%%%%%%%%%%%%%%%%%%%%%%%%%%%%%%%%%%%%%%%%%%%%%%%%%%%%%%%%%%%%
\subsection{Payoffs}
\label{sec:model-u}
%%%
There are $d_I<+\infty$ players (firms) indexed by $i \in I = \{1,\ldots, d_I\}$.
%%%
Each player chooses an action $y_i\in Y_i = \{0,1\}$.
%%%
Appendix~\ref{sec:app-multi} presents a generalization to games with many actions. 
%%%
The set of outcomes is $Y = \times_{i\in I} Y_i$. 
%%%
Let $y_{-i}=(y_j)_{j\neq i}$ denote the vector of actions from $i$'s opponents.
%%%
Player $i$'s payoffs from outcome $y$ are given by
%%%
\[
y_{i} \cdot\big(
\alpha^{0}_{i,y_{-i}}(\rand{w}) + \beta^{0}_i(\rand{w}) \rand{z}_i - \rand{e}_i
\big),%
%%%
\footnote{We use boldface font to denote random variables and vectors.}
\]
where
({i}) $\rand{w}$ is a vector of observed player and market characteristics with support $W\subseteq \Real^{d_W}$;
({ii}) $\rand{z} = (\rand{z}_{i})_{i\in I}$ is a vector of player-specific covariates;
({iii}) $\rand{e} = (\rand{e}_{i})_{i\in I}$ is a vector of payoff shocks unobserved by the researcher;
and
({iv}) $\beta_i^{0},\alpha^{0}_{i,y_{-i}}:W\to\Real$ are unknown functions.
We assume that all the parameters and payoff shocks are common knowledge among the players. 
%%%
That is, the game is of complete information.

\setcounter{egentry}{\value{example}}
\begin{example}\label{eg:entry}
	If $\alpha^{0}_{i,y_{-i}}(w) = \delta_{ii}^0(w) + \sum_{j\neq i} \delta_{ij}^{0}(w) y_j$,
	then the specification corresponds to an entry game as in \cite{bresnahan90} or \cite{berry92}. 
	%%%
	The value of $\delta^{0}_{ii}(w)+\beta^{0}_i(w)z_i-e_i$ measures the monopoly payoff of firm $i$. 
	%%%
	For $i\neq j$ the value of $\delta_{ij}^{0}(w)$ measures the strategic effect of the market presence of firm $j$ on $i$'s profits.
	%%%
	The strategic effect of the presence of a firm on payoffs of competitors (the signs of $\delta_{ij}^{0}(w)$, $i\neq j$) is unrestricted. 
\end{example}
\begin{example}\label{eg:coordination}
	If $\alpha^{0}_{i,y_{-i}}(w) = \delta_i^{0}(w)\cdot \Char(\sum_{j\in I}y_j \geq \tau^{0}(w))$,
	with $\tau^{0}(w) \in [0,d_I]$,
	the model corresponds to a regime-change game. 
	%%%
	The term $\beta^{0}_i(w) z_i - e_i$ captures
	the individual cost of participating in a revolt. 
	%%%
	The threshold $\tau^{0}(w)$ determines the number of participants required for the revolt to 
	be successful. 
	%%%
	And $\delta_i^{0}(w)$ captures the benefit to $i$ from participating in a successful revolt. 
	%%%
	This payoff structure can also be used to study coordinated action problems \citep{rubinstein89}, bank-runs and currency attacks \citep{morris03}, 
	and tacit collusion in oligopolistic markets \citep{green13}.
\end{example}

We impose the following standard assumptions on payoffs (see, for instance, \citet{jia2008happens}, \citet{ciliberto09}, \citet{BHR}, and \citet{ciliberto2018market}).
	
\begin{assumption}\label{ass:z}{}\ 
	\begin{enumerate}
    \item The support of $\rand{z}$ conditional on $\rand{w}=w$ is $Z=\Real^{d_I}$ for all $w\in W$.
	\item $\beta_{i}^{0}(w)\neq 0$ for all $i$ and $w \in W$.
	\end{enumerate}
\end{assumption}
Assumption~\ref{ass:z} requires the player-specific covariates to have full support and be relevant.
%%%
In entry games, examples of continuous firm-specific covariates could be the logarithm of the distance of the market to the existing network of each firm, or to the firms' headquarters. 
%%%
These distances have been used by \cite{ciliberto09} and \cite{ciliberto2018market} to analyze the airline industry.
%%%
While there is a restriction on $\beta_i^{0}(\blank)$, we do not impose any restrictions on $\alpha^{0}_{i,-y}(\blank)$.

\begin{assumption}\label{ass:e}{}\ 
$\rand{e}|(\rand{z}=z,\rand{w}=w) \sim N(0,\Sigma^{0}(w))$ for all $z$ and $w$, where $\Sigma^{0}:W\to\Real^{d_I\times d_I}$ is such that $\Sigma^{0}(w)$ is a positive definite symmetric matrix and $\Sigma_{ii}^{0}(w)=1$ for all $w\in W$ and $i\in I$.
\end{assumption}

The normality assumption is common in applied work and helps to simplify the exposition.
%%%
In Appendix~\ref{sec:binary}, we replace it with two weaker assumptions.
%%%
The first one imposes restrictions on the tails of the distribution of $\rand{e}$.
The second one requires the distribution of $\rand{e}$ to constitute a boundedly complete family of distributions.
%%%
The normality assumption also implies that the probability that a player obtains the same payoffs from different outcomes is zero.
%%%
Thus, we do need to worry about situations when players may be indifferent between actions. 
%%%
The requirement $\Sigma^{0}_{ii}(w)=1$ is a scale normalization. 
%%%
Note that we allow the payoff shocks to be correlated across players.

To keep the notation tractable, we group covariates and payoff parameters as follows.
%%%
Let $\rand{x}=(\rand{z},\rand{w})$ be the vector of all observed covariates.
%%%
Let $\alpha=(\alpha_{i,y_{-i}}(\blank))_{i\in Y,y\in Y}$, $\beta=(\beta_i(\blank))_{i\in I}$, and $\theta=(\alpha,\beta,\Sigma(\blank))\in\Theta$.
%%%
Hence, we can define the \emph{payoff indices} $\pi(x,e,\theta)=(\pi_{i,y}(x,e,\theta))_{i\in I,y\in Y}$ by
\[
\pi_{i,y}(x,e,\theta) 
= y_i \cdot \left(\alpha_{i,y_{-i}}({w}) + \beta_{i}({w}) {z}_{i} - {e}_{i}\right).
\]
The true value of the payoff parameters is denoted by $\theta_0=(\alpha_0,\beta_0,\Sigma^{0}(\blank))$.

%%%%%%%%%%%%%%%%%%%%%%%%%%%%%%%%%%%%%%%%%%%%%%%%%%%%%%%%%%%%%%%%%%%%%%%%%%%%%%%%%%%%%%%%%%%%%%%
%%%%%%%%%%%%%%%%%%%%%%%%%%%%%%%%%%%%%%%%%%%%%%%%%%%%%%%%%%%%%%%%%%%%%%%%%%%%%%%%%%%%%%%%%%%%%%%
\subsection{Distribution of Play}
\label{sec:model-dop}
%%%
An important object for our analysis is the \emph{distribution of play} $h_{0}$,
defined as the conditional distribution of $\rand{y}$ given $\rand{x}$ and $\rand{e}$.
%%%
That is, 
\[
h_{0}(y,x,e) = \Pr(\rand{y} = y|\rand{x}=x,\rand{e}=e).
\]
%%%
The distribution of play describes the joint behavior of the players as a function of market and player characteristics. 
%%%
It is a nonparametric latent parameter. 
%%%
Let $h_{0}(x,e)=(h_{0}(y,x,e))_{y\in Y}$,
and let $H$ be the set of all possible distributions of play.
%%%
Given $h,h'\in H$, we say that $h=h'$ if and only if $h(\rand{x},\rand{e})=h'(\rand{x},\rand{e})\ \as$.
%%%
Note that, by construction, $h_{0}(x, e)$ belongs to the $d_Y$-dimensional simplex.

We impose the following restriction on the distribution of play. 
%%%
It limits the way the player-specific covariates and shocks 
can affect the behavior of the players.

\begin{assumption}[Exclusion Restriction]\label{ass:ER}
	There exists a measurable function $\tilde{h}_0$ such that
	$h_{0}(\rand{x},\rand{e})=\tilde{h}_0(\rand{w},\rand{v})\ \as$,
	where $\rand{v} = (\rand{v}_{i})_{i\in I}$ and $\rand{v}_{i} = \beta_{i}^{0}(\rand{w})\rand{z}_{i} - \rand{e}_{i}$.
\end{assumption}
Assumption~\ref{ass:ER} is a joint assumption on $h_{0}$ and $\theta_{0}$.
%%%
It requires that the player-specific covariates and shocks can affect choices \emph{only} via the index $\rand{v}$, whose distribution depends on the value of $\theta_{0}$.
%%%
It says that $\rand{y}$ is independent of $\rand{z}$ and $\rand{e}$ conditional on $\rand{v}$ and $\rand{w}$. 
%%%
Note that, given $\theta_{0}$, the realizations of $\rand{v}$ and $\rand{w}$ are sufficient to pin down the payoff indices.
%%%
Hence, Assumption~\ref{ass:ER} can be interpreted as requiring that,
after conditioning on the realization of $\rand{w}$, the players are payoff driven.
%%%
If there are two markets with the exact same payoff indices and the same realization of $\rand{w}$, then the distribution over outcomes should be the same. 
%%%
Assumption~\ref{ass:ER} is implied by the assumptions made in \cite{BHR}.%
%%%
\footnote{%
	More specifically, \cite{BHR} assume that players make choices 
	randomizing among the different NE of the game.
	Their Assumption 6 requires the selection probabilities
	to be measurable with respect to the latent utility indices, 
	in our notation.
	%%%
	An analogous assumption could be imposed on the selection mechanisms 
	of any model satisfying Assumptions 2.2--2.4 in \cite{BMM}.
	%%%
	Doing so would imply our Assumption~\ref{ass:ER}.}

%%%
Under Assumption~\ref{ass:ER} the distribution of play can be robust to policy interventions.
%%%
For example, if one wants to evaluate policies that only affect firms indirectly through the prices of inputs. 
%%%
The firms might care about the changes in prices, but not about the source of these changes. 
%%%
In situations where this assumption is reasonable, knowing $\theta_0$ and $h_0$ is sufficient to analyze policies that only operate through the payoff indices. 

%%%%%%%%%%%%%%%%%%%%%%%%%%%%%%%%%%%%%%%%%%%%%%%%%%%%%%%%%%%%%%%%%%%%%%%%%%%%%%%%%%%%%%%%%%%%%%%
%%%%%%%%%%%%%%%%%%%%%%%%%%%%%%%%%%%%%%%%%%%%%%%%%%%%%%%%%%%%%%%%%%%%%%%%%%%%%%%%%%%%%%%%%%%%%%%
\subsection{Solution Concepts}
\label{sec:model-sol}
%%%
Although Assumption~\ref{ass:ER} imposes some structural restrictions to the behavior of players, one may still want to impose additional economic restrictions.
%%%
These restrictions come in the form of solution concepts such as rationalizability or NE.
%%%
Solution concepts often depend on the characteristics of the environment. 
%%%
Hence, we allow for the restrictions arising from solution concepts to depend on the payoff parameters. 

\begin{definition}
	A \emph{solution concept} is a function $S:\Theta\to 2^H$.
\end{definition}

For example, suppose that players choose actions simultaneously and only use rationalizable strategies, i.e., strategies that survive the iterated elimination of strictly dominated strategies.
%%%
Let $S_{R}(\theta)$ be the set of $h$ such that, given the payoff indices $\pi(x,e,\theta)$, $h(x, e)$ assigns positive probability only to rationalizable outcomes for all $x$ and $e$.

The Nash hypothesis is that the behavior of the players always constitutes NE of the simultaneous-move game in pure or mixed strategies. 
%%%
There can be multiple NE and, in such cases, there is no consensus on which equilibria are more likely to arise. 
%%%
In order to assume as little as possible about the equilibrium selection, one must allow for arbitrary mixtures of equilibria. 
%%%
The actual distribution of outcomes could be any point in the convex hull of the set of the distribution over outcomes implied by NE. 
%%%
Let $S_{N}(\theta)$ be the set of $h$ such that, for all $x$ and $e$, $h(x,e)$ belongs to the convex hull of the set of distributions over outcomes implied by NE of the game given the payoff indices $\pi(x, e,\theta)$.
%%%
$S_N(\theta)$ exactly captures the predictions of the Nash hypothesis. 

Note that the NE solution concept is \emph{nested} into rationalizability.
%%%
That is, $S_{N}(\theta)\subseteq S_{R}(\theta)$ for all $\theta$.
%%%
Also $S_{N}$ is a \emph{convex} solution concept in that $S_{N}(\theta)$ is a convex set for all $\theta$.

The distribution of play completely characterizes behavior. 
%%%
However, in general, there are at least two reasons to work with restrictions that are coming from Economic Theory, that is, solution concepts.
%%%
First, solution concepts might make for more credible counterfactual analyses because they are also supported by nonempirical arguments (\citet{dawid16}).
%%%
For instance, one may argue that the policy intervention considered in Section~\ref{sec:eg-policy} would not affect whether firms play PNE.
%%%
However, since only one firm is subsidized, it is possible that the subsidized firm will be more likely to enter in markets with multiple PNE. 
%%%
In this case, the distribution of play would not be policy invariant,
but the predictions based on the solution concept would remain accurate.

A second reason to focus on solution concept is portability.
%%%
A solution concept can make predictions related to changes in some fundamental characteristics of the environment.
%%%
For example, NE is well defined for two-player and three-player games.
%%%
In contrast, the distribution of play cannot be easily extrapolated to make predictions if the number of players changes.

Moreover, a solution concept might be relevant beyond the specific application being considered. 
%%%
Many solution concepts from Economic Theory are general theories of behavior. 
%%%
Finding evidence in support for a solution concept in one setting, 
provides support for its use in other settings. 
%%%
For instance, \cite{walker01}, \cite{chiappori02}, and \cite{wooders16} have tested the implications of the Nash hypothesis in the context of penalty kicks and tennis serves.%
%%%
\footnote{%
	These papers consider only zero-sum games.
	%%%
	It is not entirely clear whether it is possible to generalize their methodology to general-sum games. 
}
%%%
And their analysis is often used to justify the use of NE in general settings unrelated to sports. 	
	
For most of the text, we assume that the players' behavior is rationalizable. 
%%%
This assumption is common in the literature (see, for instance, \citet{aradillas08} and \citet{kline15}).
%%%
Section~\ref{sec:beyond} relaxes this assumption.

\begin{assumption}[Rationalizability]\label{ass:rat}
	$h_{0} \in S_{R}(\theta_{0})$.
\end{assumption}

\subsection{Relation Between Solution Concepts, Distributions of Play, and Selection Mechanisms}

This section clarifies the relation between solution concepts, the distribution of play, and selection mechanisms.
%%%
Our distribution of play is a complete econometric model in the sense of \citet{tamer03} and \citet{manski88}
in that it ``asserts that a random variable $\rand{y}$ is a function of a random pair $(\rand{x}, $[$\rand{e}$]$)$ where $\rand{x}$ is observable and [$\rand{e}$] is not'' (\citet{tamer03}, pp.\ 150).
%%%
In other words, $h_0$ is an ``empirical'' solution concept that completely describes behavior of players without any economic restrictions.
%%%
In contrast, many solution concepts arising from Economic Theory are generally incomplete in that,
even knowing the value of the parameters and all the characteristics of the environment, there can be multiple solutions.
%%%
Both NE and rationalizability fall under this category.
%%%
Our approach is to take the distribution of play as a primitive, and 
model incomplete solution concepts as sets of complete models that depend on the parameters of the environment. 

An alternative approach to ours is to define a solution concept as a random set $\mathrm{Sol}(\rand{x},\rand{e},\theta)$ consisting of possible distributions $p$ over outcomes, which depend on the characteristic of the environment and the payoffs (e.g., \cite{BMM} and \citet{BHR}). 
%%%
Then, one would complete the model with a selection mechanism that assigns probabilities $\mathrm{sel}(\blank|x, e,\theta)$ to the different possible distributions emerging from the solution concept. 
%%%
The distribution of play could be defined as a weighted average of the different distributions with weights determined by the selection mechanism. 
%%%
For instance, if $\mathrm{Sol}(x, e,\theta)$ is finite,
\[
h(y,x,e) = \sum_{p\in \mathrm{Sol}(x, e,\theta)} p(y) \cdot \mathrm{sel}(p|x, e,\theta).
\]

Under some technical measurability assumptions, both approaches are mathematically equivalent in terms of the relation between solution concepts, distributions of play, and the data (see Section 2 in \cite{BMM}).
%%%
The difference between the two approaches is that our approach emphasizes the distribution of play, which is well defined independently of any solution concept. 
%%%
In contrast, selection mechanisms can only be defined relative to a specific solution concept. 
%%%
The following example demonstrates the relation between them.

\setcounter{aux}{\value{example}}
\setcounter{example}{\value{egentry}}
\begin{example}(continued)
	Fix some $x$ and suppose that $I=\{1,2\}$, 
	the strategic effects have negative signs ($\delta^0_{12}(w),\delta^0_{21}(w)<0$),
	and the firms always play NE, selecting each NE with equal probabilities.
	%%%
	Depending on the realizations of $\rand{e}$ there are at most three NE.
	%%%
	When there is a unique NE, then the distribution of play assigns full probability to the equilibrium outcome.
	When there are three NE (firm $1$ monopoly, firm $2$ monopoly, and a mixed one), then $h_0(x,e)$ is the equally-weighted mixture of the distributions over outcomes implied by these three NE.
	For instance, 
	\begin{align*}
	h_0((1,0),x,e)
	& =p_{1}((1,0),x,e)\cdot 1/3+p_{2}((1,0),x,e)\cdot 1/3+p_m((1,0),x,e)\cdot 1/3\\
	& = 1\cdot 1/3+0\cdot 1/3+p_m((1,0),x,e)\cdot 1/3\\
	& = \dfrac{1}{3} + \dfrac{1}{3}\cdot\dfrac{e_2-\delta^{0}_{22}(w)-\beta^{0}_2(w)z_2}{\delta^{0}_{21}(w)}
	\cdot\left(1-\dfrac{e_1-\delta^{0}_{11}(w)-\beta^{0}_1(w)z_1}{\delta^{0}_{12}(w)}\right),
	\end{align*}
	where $p_{i}(y,x,e)$ and $p_m(y,x,e)$ are the probability that outcome $y$ is played under firm $i$ monopoly NE and the mixed strategy NE, respectively.
	%%%
	Note that $h_0((1,0),x,e)$ is a function of $w$, $\beta^{0}_1(w)z_1 - e_1$, and $\beta^{0}_2(w)z_2 - e_2$.
	%%%
	Hence, the distribution of play in this example satisfies our exclusion restriction. 
\end{example}
\setcounter{example}{\value{aux}}
	
%%%%%%%%%%%%%%%%%%%%%%%%%%%%%%%%%%%%%%%%%%%%%%%%%%%%%%%%%%%%%%%%%%%%%%%
%%%%%%%%%%%%%%%%%%%%%%%%%%%%%%%%%%%%%%%%%%%%%%%%%%%%%%%%%%%%%%%%%%%%%%%
\section{Discernibility of Rationalizable Solution Concepts}
\label{sec:main}
%%%
\subsection{Definition of Discernibility}
\label{sec:main-definition}
Recall that $\theta_0$ and $h_0$ denote the true payoff parameters and the true distribution of play.
%%%
Let $\Psi\subseteq\Theta\times H$ denote a set of possible values that $(\theta_0,h_0)$ can take.
%%%
% That is, $\Psi$ is set of parameters that satisfy assumptions~\ref{ass:u}--\ref{ass:rat}.
We are interested in whether solution concepts, in particular NE, are discernible according to the following definition:
\begin{definition}\label{def:discernibility}
	Given a set $\Psi\subseteq \Theta\times H$, a solution concept $S$ is said to be \emph{discernible} relative to $\Psi$ if there do \emph{not} exist $(\theta,h)$, $(\theta',h')\in\Psi$ such that
	$h\in S(\theta)$, $h'\not\in S(\theta')$, and
	\[
	%\label{eqn:discernibility}
	\Exp{h(\rand{x},\rand{e})|\rand{x};\theta} 
	= \Exp{h'(\rand{x},\rand{e})|\rand{x};\theta'} \as.
	\]
\end{definition}
Note that 
\[
\Exp{h_0(y,\rand{x},\rand{e})|\rand{x};\theta_0} = \Pr(\rand{y}=y|\rand{x})\ \as,
\]
for all $y\in Y$.
%%%
Hence, $\Exp{h_0(\rand{x},\rand{e})|\rand{x};\theta_0}$ is identified (can be consistently estimated from observed data on outcomes and covariates).
%%%
Thus, the definition of discernibility leads to two properties that fully characterize the relationship between the observed (estimable) distribution of $\rand{y}$ conditional on $\rand{x}$ and any given solution concept $S$: 
\begin{enumerate}
\item \label{discernibility-yes} If the data is generated by a given solution concept, then it cannot be explained by something else: if $h_0\in S(\theta_0)$, then
	\[
	\left(\Pr(\rand{y}=y|\rand{x})\right)_{y\in Y}\neq\Exp{h(\rand{x},\rand{e})|\rand{x};\theta}
	\]
	with positive probability for all $(\theta,h)\in\Psi$ such that $h\not\in S(\theta)$.
\item \label{discernibility-no} If the data is not generated by a given solution concept, then it cannot be explained by it: if $h_0\not\in S(\theta_0)$, then
	\[
	\left(\Pr(\rand{y}=y|\rand{x})\right)_{y\in Y}\neq\Exp{h(\rand{x},\rand{e})|\rand{x};\theta}
	\]
	with positive probability for all $(\theta,h)\in\Psi$ such that $h\in S(\theta)$.
\end{enumerate}

%%%%%%%%%%%%%%%%%%%%%%%%%%%%%%%%%%%%%%%%%%%%%%%%%%%%%%%%%%%%%%%%%%%%%%%
%%%%%%%%%%%%%%%%%%%%%%%%%%%%%%%%%%%%%%%%%%%%%%%%%%%%%%%%%%%%%%%%%%%%%%%
\subsection{Identification of \texorpdfstring{$\theta_0$}{the Payoff Parameters} and \texorpdfstring{$h_0$}{the Distribution of Play}}
%%%
\label{sec:main-ID}
%%%
In order to establish discernibility of solution concepts, we first establish identification of the payoff parameters and the distribution of play.
%%%
Later on, we also consider environments where identification of payoff parameters fails to hold.

\begin{proposition}\label{prop:u}
	Under assumptions~\ref{ass:z}--\ref{ass:rat}, $\theta_{0}$ is identified.
\end{proposition}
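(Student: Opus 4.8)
The plan is to condition throughout on $\rand{w}=w$ (which is observed) and to identify $\beta^0_i(w)$, the thresholds $\alpha^0_{i,y_{-i}}(w)$, and the correlations $\Sigma^0_{ij}(w)$ in turn, exploiting the full support of $\rand{z}$ together with rationalizability. Writing $\mu_i=\beta^0_i(w)z_i$, Assumption~\ref{ass:e} gives $\rand{v}\mid(\rand{z}=z,\rand{w}=w)\sim N(\mu,\Sigma^0(w))$, and by Assumption~\ref{ass:ER} the map $v\mapsto\tilde h_0((\cdot),v)$ does not depend on $z$. Since $\Pr(\rand{y}=y\mid\rand{x})$ is identified, the argument is entirely about what these choice probabilities reveal as $z$ varies.

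First I would pin down $\beta^0_i(w)$ and the ``monopoly'' intercept $\alpha^0_{i,\mathbf 0_{-i}}(w)$ by a single-agent limiting argument. Sending $z_j$ for every $j\neq i$ so that $\mu_j\to-\infty$ (possible because $Z=\Real^{d_I}$ and $\beta^0_j(w)\neq 0$), the probability that some opponent fails to have staying out strictly dominant, $\Pr(\exists\,j\neq i:\rand{v}_j\geq-\max_{y_{-j}}\alpha^0_{j,y_{-j}}(w)\mid z)$, vanishes by a union bound. On the complementary event every opponent plays $0$, so rationalizability forces $\rand{y}_{-i}=\mathbf 0$ and $\rand{y}_i=\Char\{\rand{v}_i>-\alpha^0_{i,\mathbf 0_{-i}}(w)\}$; hence
\[
\Pr\big(\rand{y}=(1,\mathbf 0_{-i})\mid z\big)\;\longrightarrow\;\Pr\big(\rand{v}_i>-\alpha^0_{i,\mathbf 0_{-i}}(w)\big)=\Phi\big(\beta^0_i(w)z_i+\alpha^0_{i,\mathbf 0_{-i}}(w)\big),
\]
where the unit variance $\Sigma^0_{ii}(w)=1$ delivers the standard normal CDF. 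Inverting $\Phi$ and varying $z_i$ over $\Real$ identifies the slope $\beta^0_i(w)$ and the intercept as in a probit with a full-support regressor. The same device, now sending the opponents with $y_j=1$ toward entering-dominance and those with $y_j=0$ toward out-dominance, pins $i$'s opponents at an arbitrary profile $y_{-i}$ and yields $\Pr(\rand{y}=(1,y_{-i})\mid z)\to\Phi(\beta^0_i(w)z_i+\alpha^0_{i,y_{-i}}(w))$, identifying every remaining threshold $\alpha^0_{i,y_{-i}}(w)$.

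The correlations $\Sigma^0_{ij}(w)$ are the crux. Fixing a pair $(i,j)$ and sending all other players toward out-dominance reduces the problem to the two-player interaction between $i$ and $j$, whose index $(\rand{v}_i,\rand{v}_j)$ stays bivariate normal with the unaffected correlation $\Sigma^0_{ij}(w)$ and a mean $(\mu_i,\mu_j)$ ranging over all of $\Real^2$. On the orthant where entering is strictly dominant for both players, rationalizability pins the outcome to $(1,1)$, so $\Pr(\rand{y}=(1,1,\mathbf 0)\mid z)$ equals a bivariate-normal orthant probability plus a selection-leakage term from the region where one index lies in its multiplicity strip $\big(-\max_{y_{-i}}\alpha^0_{i,y_{-i}}(w),-\min_{y_{-i}}\alpha^0_{i,y_{-i}}(w)\big)$. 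The difficulty — and the reason this step is genuinely harder than under pure-strategy Nash — is that under rationalizability the outcome $(1,1)$ is itself rationalizable on those strips, so the leakage never vanishes at finite locations and the clean orthant identity used by \cite{tamer03} and \cite{BHR} is unavailable.

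My plan is therefore to view $\Pr(\rand{y}=(1,1,\mathbf 0)\mid z)$, as a function of $(\mu_i,\mu_j)$, as the convolution of the fixed map $v\mapsto\tilde h_0((1,1),v)$ with the $N(0,\Sigma^0_{ij}(w))$ density. Because this map equals $1$ on the both-enter-dominant orthant (thresholds $-\min_{y_{-i}}\alpha^0_{i,y_{-i}}(w)$ and $-\min_{y_{-j}}\alpha^0_{j,y_{-j}}(w)$, already identified) and $0$ outside the box whose lower corner is given by the larger thresholds $-\max_{y_{-i}}\alpha^0_{i,y_{-i}}(w)$ and $-\max_{y_{-j}}\alpha^0_{j,y_{-j}}(w)$ — so its unknown values are confined to a bounded-width L-shaped strip — and because the Gaussian location family swept out by the full support of $(z_i,z_j)$ is complete, the convolution determines $\Sigma^0_{ij}(w)$ uniquely. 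I expect the main obstacle to be exactly this separation of the known orthant mass from the unknown selection leakage on the strips; this is where normality (or, in the appendix, bounded completeness) does the work. Repeating the argument for every pair then recovers the full positive-definite matrix $\Sigma^0(w)$, completing the identification of $\theta_0$.
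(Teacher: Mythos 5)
Your first two steps and the pairwise reduction coincide with the paper's actual proof: Proposition~\ref{prop:u} is proved there by first identifying the sign of $\beta^0_i(w)$, then sending $z_k\to\pm\infty$ for $k\notin\{i,j\}$ to collapse the game to a two-player one, and then invoking Proposition~\ref{prop:two-player}, whose first steps are exactly your probit-style limiting arguments for $\beta^0_i(w)$ and $\alpha^0_{i,y_{-i}}(w)$. You have also correctly located the hard part: separating the Gaussian orthant mass from the unknown selection on the multiplicity region. The gap is in how you resolve it. Bounded completeness of the Gaussian location family says that, for a \emph{fixed} covariance, the map $g\mapsto\bigl\{\Exp{g(\rand{v})}\,:\,\mu\in\Real^2\bigr\}$ is injective on bounded functions $g$; this is precisely what the paper uses in Proposition~\ref{prop:dop} to identify $h_0$ \emph{after} $\theta_0$ is known. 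It says nothing about joint injectivity in the pair $(g,\Sigma)$, which is what your step requires: a priori, a different correlation $\Sigma'_{ij}(w)$ combined with a different $[0,1]$-valued selection on the multiplicity box could produce exactly the same convolution, and completeness cannot exclude this. Nothing in your argument uses the structure of $g$ (known outside a bounded box, $[0,1]$-valued inside) to rule out such compensation, and ruling it out is the entire content of the step.

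The paper closes this gap with a different tool (Lemma~\ref{lemma:aux} together with Lemma~\ref{lemma:multiplicity}, both feeding into Proposition~\ref{prop:two-player}): differentiate the no-entry probability with respect to $z_i$, normalize by the marginal density $f_{\rand{e}_i}(z_i)$, and take limits along rays $z_j=\tau z_i$ with $z_i\to-\infty$. The tail conditions of Assumption~\ref{ass:tail} (verified for the bivariate normal) force the normalized derivative of the multiplicity-region leakage to vanish in this limit, so the observable limit equals $\psi(\tau)=\lim_{z_i\to-\infty}F_{\rand{e}_j|\rand{e}_i}(\tau z_i|z_i)$; under the factor structure $\rand{e}_j=\sqrt{1-\rho^2}\,\bm\xi+\rho\,\rand{e}_i$ of Assumption~\ref{ass:corr}, $\psi(\tau)$ equals $1$ for $\tau<\rho$ and $0$ for $\tau>\rho$, so $\rho=\Sigma^0_{ij}(w)$ is identified as the unique jump point of $\psi$. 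Your intuition that ``normality does the work'' is right in spirit, but the mechanism is the Gaussian tail behavior along rays, not completeness; to repair your proof you would need an argument of this kind, or some other genuine proof of joint injectivity in $(g,\Sigma)$.
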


The proof of the proposition is in Appendix~\ref{sec:proofs}. 
%%%
The identification of the payoff parameters $\beta_{0}$ and $\alpha_{0}$ follows standard arguments that exploit the large support of our player-specific covariates.
%%%
However, to the best of our knowledge, Proposition~\ref{prop:u} is the first result in the literature that identifies the unknown correlation structure of payoffs in games of complete information under solution concepts weaker than PNE.%
%%%
\footnote{%
	See \citet{kline15} for an identification result assuming either PNE or independent shocks.}
%%%
The proof involves looking at the limits of the partial derivative of $\Pr(\rand{y}=y|\rand{x}=x)$	
with respect to one of the excluded covariates for a specific outcome vector $y$ along specific rays in the support of excluded covariates.%
%%%
\footnote{%
    Although the rays themselves  have probability zero, the partial derivatives use information in a neighborhood of those rays.
    %%%
    Because the covariates are continuous, these open neighborhoods have positive probability and thus observable implications.
}

To identify $h_0$ note that, under Assumption~\ref{ass:ER}, the excluded covariates $\rand{z}$ generate variation in the observed conditional distribution over outcomes without changing $h_0$.
%%%
This is because $h_0$ is affected by $\rand{z}$ \emph{only} via the conditional distribution of the index $\rand{v}$ conditional on $\rand{z}$.
%%%
This exogenous variation yields the following identification result.  

\begin{proposition}\label{prop:dop}
	Under assumptions~\ref{ass:z}--\ref{ass:ER}, if $\theta_{0}$ is identified, 
	then $h_{0}$ is identified.
\end{proposition}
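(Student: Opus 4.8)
The plan is to show that, once $\theta_0$ is known, the exclusion restriction turns the problem of recovering $h_0$ into a deconvolution-type problem that is solved by the bounded completeness of the family of conditional distributions of the index $\rand{v}$. First I would fix the true parameter value $\theta_0$ (which we may treat as known by hypothesis) and use Assumption~\ref{ass:ER} to write the observed conditional choice probabilities as an integral of the latent distribution of play against the conditional density of $\rand{v}$. Concretely, since $h_0(\rand{x},\rand{e}) = \tilde h_0(\rand{w},\rand{v})\ \as$ with $\rand{v}_i = \beta_i^0(\rand{w})\rand{z}_i - \rand{e}_i$, and since $\rand{e}$ is independent of $\rand{z}$ conditional on $\rand{w}$ with the normal law of Assumption~\ref{ass:e}, the conditional density of $\rand{v}$ given $(\rand{z},\rand{w})=(z,w)$ is a Gaussian density centered at $(\beta_i^0(w)z_i)_{i\in I}$ with covariance $\Sigma^0(w)$. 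Hence, for each outcome $y\in Y$,
\[
\Pr(\rand{y}=y\mid \rand{z}=z,\rand{w}=w)
= \int_{\Real^{d_I}} \tilde h_0(w,v)_y\, f_{\Sigma^0(w)}\big(v-(\beta_i^0(w)z_i)_{i\in I}\big)\, dv,
\]
where $f_{\Sigma}$ is the mean-zero Gaussian density with covariance $\Sigma$.

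The next step is to recognize the left-hand side as identified (it is a feature of the observed distribution of $(\rand{y},\rand{x})$) and the right-hand side as a linear integral operator applied to the unknown function $v\mapsto \tilde h_0(w,v)_y$. Because $\theta_0$ is known, the kernel $f_{\Sigma^0(w)}(\,\cdot\,)$ and the shift $(\beta_i^0(w)z_i)_{i\in I}$ are both known, so for each fixed $w$ I would regard $z\mapsto \Pr(\rand{y}=y\mid \rand{z}=z,\rand{w}=w)$ as a known function and the Gaussian convolution as a known operator. Identification of $\tilde h_0(w,\,\cdot\,)_y$ then amounts to injectivity of this operator, i.e.\ to the statement that two distinct candidate functions cannot produce the same conditional choice probabilities for almost every $z$. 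This is precisely a bounded-completeness statement for the family $\{f_{\Sigma^0(w)}(\,\cdot\,-(\beta_i^0(w)z_i)_i)\}_{z\in Z}$, and the full support of $\rand{z}$ from Assumption~\ref{ass:z}(i), together with $\beta_i^0(w)\neq 0$ from Assumption~\ref{ass:z}(ii), guarantees that the location parameter ranges over all of $\Real^{d_I}$.

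To close the argument I would invoke the completeness of the Gaussian location family. The Gaussian family shifted over a full-support location parameter is boundedly (indeed, $L_1$-)complete: this follows from injectivity of the Fourier/Laplace transform, since if a bounded function integrated against a family of Gaussian densities with all possible mean shifts vanishes identically in the shift, then its convolution with a Gaussian vanishes, and because the Gaussian characteristic function is everywhere nonzero the function itself must vanish almost everywhere. Applying this for each $y$ and each $w$ recovers $\tilde h_0(w,\,\cdot\,)_y$ uniquely, and hence $h_0$ by Assumption~\ref{ass:ER}.

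The main obstacle I anticipate is handling the fact that $\tilde h_0(w,\,\cdot\,)$ takes values in the simplex and is only a \emph{bounded} (not integrable) function of $v$, so the relevant completeness notion is \emph{bounded} completeness rather than ordinary completeness; I would therefore need to verify that the difference of two candidate distributions of play, which is a bounded measurable function, is annihilated by the Gaussian operator only if it is zero, and to be careful that the completeness holds for each $w$ on a set of full probability rather than pointwise everywhere. The normality assumption makes this clean via the nonvanishing Gaussian characteristic function, but in the general (non-Gaussian) case treated in Appendix~\ref{sec:binary} this is exactly where the explicit bounded-completeness hypothesis on the distribution of $\rand{e}$ would be doing the work.
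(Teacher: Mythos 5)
Your proposal is correct and takes essentially the same route as the paper's own proof: fix the known $\theta_0$, use Assumption~\ref{ass:ER} to rewrite the observed conditional choice probabilities as the Gaussian location family $\{N((\beta_i^0(w)z_i)_{i\in I},\Sigma^0(w))\}_{z\in Z}$ integrated against $\tilde h_0(w,\cdot)$, and conclude uniqueness from bounded completeness of that family, which is exactly what the paper does by citing the exponential-family completeness theorem (Theorem 2.12 in Brown, 1986). The only substantive difference is that you sketch the completeness step directly via a Fourier argument rather than citing it; since the difference of two candidate distributions of play is bounded but not integrable, that step is properly closed by Wiener's Tauberian theorem (or the bilateral Laplace-transform/analyticity argument), a subtlety you correctly flag as the place where bounded---rather than ordinary---completeness is doing the work.
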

\begin{proof}
	Suppose towards a contradiction that there exist $h' \neq h_{0}$ such that
	\[
	\Pr(\rand{y}=y|\rand{x}=x)
	= \Exp{h_{0}(y,\rand{x},\rand{e})|\rand{x}=x;\theta_{0}}
	= \Exp{h'(y,\rand{x},\rand{e})|\rand{x}=x;\theta_{0}}.	
	\]
	for all $y\in Y$ and $x\in X$.
	%%%
	By Assumption~\ref{ass:ER} there exist $\tilde h_{0}$ and $\tilde h'$ such that 
	\[
	%\label{eqn:proof-dop-A}
	\Exp{\tilde h_{0}(y,\rand{w},\rand{v})-\tilde h'(y,\rand{w},\rand{v})|\rand{x}=x;\theta_{0}}=0,
	\]
	for all $y\in Y$ and $x\in X$, where $\rand{v}$ denotes the index from Assumption~\ref{ass:ER}.
	%%%
	The collection of the conditional distributions of $\rand{v}|\rand{x}=x$ 
	%, $\{P_{\rand{v}|\rand{x}}(\blank|x)\:\:x\in X\}$, 
	forms a complete exponential family (see Theorem 2.12 in \cite{brown86}).
	%%%
	Hence, %(\ref{eqn:proof-dop-A}) implies that 
	\[
	\tilde h_{0}(y,\rand{w},\rand{v})- \tilde{h}'(y,\rand{w},\rand{v})=0\quad \as	
	\]
	for all $y\in Y$. 
	%	Consequently. 
	%	\begin{align*}
	%		h_{0}(y,\rand{x},\rand{e})- h'(y,\rand{w},\rand{v})\ \as	
	%	\end{align*}	
	%%%
	This establishes identification of $h_{0}$.
\end{proof}
The proof of Proposition~\ref{prop:dop} uses bounded completeness of the family of normal distributions.
%%%
This property is not unique to normal distributions and is satisfied by many other parametric families.
%%%
See Appendix~\ref{sec:proofs} for more details.

%%%%%%%%%%%%%%%%%%%%%%%%%%%%%%%%%%%%%%%%%%%%%%%%%%%%%%%%%%%%%%%%%%%%%%%
%%%%%%%%%%%%%%%%%%%%%%%%%%%%%%%%%%%%%%%%%%%%%%%%%%%%%%%%%%%%%%%%%%%%%%%
\subsection{Testable Solution Concepts}
\label{sec:main-discernibility}
%%%
Note that the definition of discernibility has the form ``there do \emph{not} exist $(\theta,h)$, $(\theta',h')\in\Psi$ such that\ldots.''
%%%
Hence, the smaller $\Psi$ is (the more restrictions are imposed), the easier it is to establish discernibility of a solution concept.
%%%
In particular, if $\Psi$ is a singleton, then every solution concept is trivially discernible. 
%%%
On the other hand, if $\Psi$ is very big (very few restrictions are imposed), then one should not expect many solution concepts to be discernible.
%%%
In this section, we provide two results.
%%%
The first one establishes discernibility of solution concepts under a small number of restrictions (Theorem~\ref{thm:main}).
%%%
The second one shows absence of discernibility despite $\Psi$ being very small (Proposition~\ref{prop:nash}).
%(e.g. PNE is not discernible in our motivating example).
%%%
\begin{theorem}\label{thm:main}
	Let $S$ be any solution concept nested into $S_{R}$ and let $\Psi_{R}$ be the set of parameters that satisfy assumptions~\ref{ass:z}--\ref{ass:rat}.
	%%%
	$S$ is discernible relative to $\Psi_R$.
\end{theorem}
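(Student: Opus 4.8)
The plan is to exploit the two identification results that precede this theorem, namely Propositions~\ref{prop:u} and~\ref{prop:dop}. The key observation is that under Assumptions~\ref{ass:z}--\ref{ass:rat}, the pair $(\theta_0,h_0)$ is point-identified from the observed conditional distribution $\Pr(\rand{y}=\blank|\rand{x})$. Discernibility, by Definition~\ref{def:discernibility}, requires ruling out the existence of $(\theta,h),(\theta',h')\in\Psi_R$ with $h\in S(\theta)$, $h'\notin S(\theta')$, and $\Exp{h(\rand{x},\rand{e})|\rand{x};\theta}=\Exp{h'(\rand{x},\rand{e})|\rand{x};\theta'}$ almost surely. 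So the whole argument should reduce to showing that observational equivalence forces $\theta=\theta'$ and $h=h'$, which immediately contradicts having $h\in S(\theta)$ while $h'\notin S(\theta')$.

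First I would suppose, toward a contradiction, that such a pair exists. Since both $(\theta,h)$ and $(\theta',h')$ lie in $\Psi_R$, both satisfy Assumptions~\ref{ass:z}--\ref{ass:rat}; in particular each is a legitimate candidate data-generating pair. The equality $\Exp{h(\rand{x},\rand{e})|\rand{x};\theta}=\Exp{h'(\rand{x},\rand{e})|\rand{x};\theta'}$ says precisely that both pairs generate the same observable distribution of outcomes conditional on covariates. Applying Proposition~\ref{prop:u} (identification of $\theta_0$) to this common observable distribution gives $\theta=\theta'$. Once the parameter is pinned down, Proposition~\ref{prop:dop} (identification of $h_0$ given identified $\theta_0$) applies to the shared $\theta$, yielding $h=h'$.

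But then $h\in S(\theta)$ and $h'=h\notin S(\theta')=S(\theta)$ is a direct contradiction, since the same object cannot both belong to and fail to belong to $S(\theta)$. This completes the argument, and note that the role of ``$S$ nested into $S_R$'' is only to guarantee that whenever $h\in S(\theta)$ we also have $h\in S_R(\theta)$, so that Assumption~\ref{ass:rat} is not violated and the candidate pair genuinely sits in $\Psi_R$; the result holds for \emph{any} such $S$ precisely because the identification machinery never refers to the particular solution concept.

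The main obstacle I anticipate is not in the logical skeleton above---which is short---but in verifying that Propositions~\ref{prop:u} and~\ref{prop:dop} can legitimately be invoked here. Specifically, both identification propositions are stated for the \emph{true} pair $(\theta_0,h_0)$, whereas the discernibility argument needs them to apply symmetrically to two arbitrary candidate pairs in $\Psi_R$ that merely share an observable distribution. The careful step is therefore to confirm that ``identified'' in those propositions means ``uniquely determined within $\Psi_R$ by the observable distribution,'' so that any two elements of $\Psi_R$ inducing the same $\Exp{h(\rand{x},\rand{e})|\rand{x};\theta}$ must coincide in both components. Granting that reading of identification, the proof is the two-line contradiction sketched above.
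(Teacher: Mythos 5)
Your proposal is correct and follows essentially the same argument as the paper: assume two observationally equivalent pairs in $\Psi_R$ with $h\in S(\theta)$ and $h'\notin S(\theta')$, invoke Propositions~\ref{prop:u} and~\ref{prop:dop} to conclude $\theta=\theta'$ and $h=h'$, and derive the contradiction $h'=h\in S(\theta)=S(\theta')$. Your careful reading of ``identified'' as ``uniquely determined within $\Psi_R$ by the observable conditional distribution'' is exactly how the paper uses those propositions, so no gap remains.
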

\begin{proof}
	Assume towards a contradiction that there exists $(\theta,h)$ and $(\theta',h')$ that satisfy assumptions~\ref{ass:z}--\ref{ass:rat}, and such that $h\in S(\theta)$, $h'\not\in S(\theta')$, and
	\[
	\Exp{h(\rand{x},\rand{e})|\rand{x};\theta}=\Exp{h'(\rand{x},\rand{e})|\rand{x};\theta'}\:\as.
	\]
	Propositions~\ref{prop:u} and~\ref{prop:dop} then imply that $h=h'$ and $\theta=\theta'$. The latter is not possible since by assumption $h'\not\in S(\theta')=S(\theta)$. The contradiction completes the proof.
\end{proof}

Under our assumptions, Theorem~\ref{thm:main} implies that we can discern {any} solution concept which implies rationalizable behavior, including $S_{N}$. 

\begin{corollary}\label{cor:nash}
$S_{N}$ is discernible relative to $\Psi_R$.
\end{corollary}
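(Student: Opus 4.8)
The plan is to obtain Corollary~\ref{cor:nash} as an immediate specialization of Theorem~\ref{thm:main}, since essentially all of the substantive work has already been carried out. Theorem~\ref{thm:main} asserts that \emph{every} solution concept nested into $S_{R}$ is discernible relative to $\Psi_{R}$, so the only thing that remains is to verify that the Nash solution concept $S_{N}$ satisfies this nesting hypothesis.

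To that end, I would recall the observation made when $S_{N}$ was introduced: for every $\theta$, any point in the convex hull of the distributions over outcomes implied by NE places positive probability only on rationalizable outcomes, because every NE strategy profile survives the iterated elimination of strictly dominated strategies. Hence $S_{N}(\theta)\subseteq S_{R}(\theta)$ for all $\theta$, i.e., $S_{N}$ is nested into $S_{R}$. With this inclusion in hand, setting $S=S_{N}$ in Theorem~\ref{thm:main} delivers the claim directly.

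There is no genuine obstacle here: the deep content is already packaged into Propositions~\ref{prop:u} and~\ref{prop:dop} (identification of $\theta_{0}$ and $h_{0}$) and into Theorem~\ref{thm:main} (which converts that identification into discernibility of any rationalizable solution concept). The corollary is therefore a one-line deduction, and the only point requiring any care---if one wished to be fully explicit---is confirming that the convexification used in the definition of $S_{N}$ does not push the distribution of play outside the rationalizable set. This follows because $S_{R}(\theta)$ is itself convex: a mixture of distributions of play each supported on rationalizable outcomes is again supported on rationalizable outcomes, so the convex hull of the NE distributions remains inside $S_{R}(\theta)$.
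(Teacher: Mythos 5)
Your proposal is correct and follows exactly the paper's route: the paper notes in Section~\ref{sec:model-sol} that $S_{N}(\theta)\subseteq S_{R}(\theta)$ for all $\theta$, and the corollary is then an immediate instantiation of Theorem~\ref{thm:main} with $S=S_{N}$, just as you argue. Your added remark that the convex hull of NE outcome distributions stays inside $S_{R}(\theta)$ because mixtures of distributions supported on rationalizable outcomes remain so supported is a correct and worthwhile elaboration of the nesting claim the paper states without proof.
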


Strengthening Assumptions~\ref{ass:z}--\ref{ass:rat} would not change the conclusion of Corollary~\ref{cor:nash}, because it would only shrink the set $\Psi_R$.
%%%
In contrast, the following proposition establishes the importance of our exclusion restrictions (Assumption~\ref{ass:ER}). 

\begin{proposition}\label{prop:nash}
Suppose Assumptions~\ref{ass:z}, \ref{ass:e}, and~\ref{ass:rat} hold.
%%%
Either $S_N(\theta_0)=S_{R}(\theta_0)$ or
$S_N$ is \emph{not} discernible relative to $\Psi^* = \{\theta_0\}\times S_{R}(\theta_0)$.
\end{proposition}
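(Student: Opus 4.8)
The statement is a disjunction, so the plan is to prove the contrapositive of its nontrivial branch: assuming $S_N(\theta_0)\neq S_R(\theta_0)$, I would construct two pairs in $\Psi^*$ that witness the failure of discernibility. Concretely, I would exhibit $h\in S_N(\theta_0)$ and $h'\in S_R(\theta_0)\setminus S_N(\theta_0)$ with $\Exp{h(\rand{x},\rand{e})|\rand{x};\theta_0}=\Exp{h'(\rand{x},\rand{e})|\rand{x};\theta_0}$ a.s.; since both $(\theta_0,h)$ and $(\theta_0,h')$ lie in $\Psi^*=\{\theta_0\}\times S_R(\theta_0)$, this is exactly a violation of Definition~\ref{def:discernibility}. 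The first step is to translate the hypothesis $S_N(\theta_0)\neq S_R(\theta_0)$ into a geometric statement: it holds if and only if there is a positive-probability set of $(x,e)$ on which the convex hull of the Nash outcome distributions is a strict subset of the simplex of distributions supported on rationalizable outcomes. This is the multiplicity region, and it is where $h$ and $h'$ will be allowed to differ.

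The engine of the construction is that, because Assumption~\ref{ass:ER} is \emph{not} imposed, the data reveal only the $\rand{e}$-average of the distribution of play conditional on $\rand{x}$; the map $h\mapsto\Exp{h(\rand{x},\rand{e})|\rand{x};\theta_0}$ is linear, and $h,h'$ may depend on $e$ (and on $z$) in completely different ways as long as these conditional averages coincide. Accordingly, I would take a baseline Nash $h$ that plays the interior mixed-strategy equilibrium throughout the multiplicity region $M(x)$ and the unique rationalizable outcome elsewhere. Inside a compact subregion $M'(x)\subseteq M(x)$ of positive Gaussian measure on which the mixed equilibrium stays bounded away from the boundary of the simplex, I would split $M'(x)$ into two pieces $A(x),B(x)$ of equal conditional mass and set $h'=h+t\,u$ on $A(x)$, $h'=h-t\,u$ on $B(x)$, and $h'=h$ elsewhere, for a small $t>0$ and a fixed outcome-simplex direction $u$ with $\sum_y u_y=0$. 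The two opposite perturbations then cancel in the conditional average, giving $\Exp{h'(\rand{x},\rand{e})|\rand{x};\theta_0}=\Exp{h(\rand{x},\rand{e})|\rand{x};\theta_0}$ a.s.

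It remains to choose $u$ so that $h'$ stays rationalizable everywhere but exits $S_N$ on $A(x)$. For two players I would use $u=\Char_{(0,0)}-\Char_{(1,1)}$. Every outcome is rationalizable in the multiplicity (anti-coordination) region, so $h'$ remains supported on rationalizable outcomes and thus $h'\in S_R(\theta_0)$; and a direct computation shows that $u$ is not parallel to the affine hull of the three Nash distributions (firm-$1$ monopoly, firm-$2$ monopoly, and the mixed equilibrium) --- the coefficient that coordinate $(0,0)$ forces has the opposite sign to the one coordinate $(1,1)$ forces --- so the perturbed point leaves the convex hull of the Nash distributions for every $t\neq0$. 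Hence $h'\notin S_N(\theta_0)$ while $h\in S_N(\theta_0)$, and since $A(x)$ has positive probability this makes $h'$ genuinely non-Nash. Choosing $t$ uniformly on $M'(x)$ is possible precisely because the mixed equilibrium is bounded away from the simplex boundary there, and measurability of $M(x),M'(x),A(x),B(x)$ in $x$ is routine.

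The main obstacle is the general $d_I$-player case, where the Nash correspondence varies with $(x,e)$ and a single constant direction $u$ transversal to the convex hull of the Nash distributions on a positive-measure set is no longer handed to us; worse, that convex hull can be full-dimensional inside the rationalizable simplex, so exiting it may require starting from a relative-boundary point rather than from an interior equilibrium. I would address this by a measurable selection: on the set where the Nash hull is a strict subset of the rationalizable simplex, select measurably a point $p^N(x,e)$ on its relative boundary (interior to the rationalizable simplex) and an outward direction $u(x,e)$ leaving the hull, and then restore exact cancellation of the conditional average by balancing the vector integrals $\int_{A(x)}u\,\phi$ and $\int_{B(x)}u\,\phi$ rather than the scalar masses --- a split that exists by the Lyapunov convexity theorem because the relevant vector measure is non-atomic and finite-dimensional. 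Verifying this balancing together with the measurable selection is the only genuinely delicate part of the argument; everything else rests on the single observation that, without the exclusion restriction, the conditional-on-$\rand{x}$ average of the distribution of play is all the data can reveal.
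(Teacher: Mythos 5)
Your proposal is correct in its core strategy and shares the skeleton of the paper's own proof: argue the contrapositive, and when $S_N(\theta_0)\neq S_R(\theta_0)$ exhibit $h\in S_N(\theta_0)$ and $h'\in S_R(\theta_0)\setminus S_N(\theta_0)$ with $\Exp{h(\rand{x},\rand{e})|\rand{x};\theta_0}=\Exp{h'(\rand{x},\rand{e})|\rand{x};\theta_0}$ a.s., exploiting exactly the observation you emphasize---that absent Assumption~\ref{ass:ER} the data pin down only the $\rand{e}$-average of the distribution of play (the paper itself remarks after its proof that its $h'$ violates Assumption~\ref{ass:ER}). Where you genuinely diverge is the device. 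The paper also takes $h$ to be the mixed-strategy NE on the multiplicity region, but defines $h'$ there as the \emph{average} of $h$ over the whole region (constant in $e$), and certifies $h'\notin S_N(\theta_0)$ by a one-line argument: some outcome $y^*$ is played in no PNE of the multiplicity region, so the mixed NE maximizes the probability of $y^*$ over the Nash hull pointwise, and the average must strictly exceed this pointwise maximum on a positive-measure set. Your balanced $\pm t\,u$ perturbation on equal-mass halves achieves the same end, but requires choosing a transversal direction and, in the general $d_I$ case, measurable selection plus Lyapunov balancing. The paper instead handles $d_I>2$ by showing that $S_R(\theta_0)\neq S_N(\theta_0)$ forces a pair of players with mutual payoff dependence and reducing to a two-player game on the positive-probability region where all remaining players have strictly dominant actions (sending their $z_k$ to $\pm\infty$). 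So: the paper's averaging buys uniformity across cases and avoids all the machinery you flag as delicate; your direct construction buys independence from the pair-reduction step, at the cost of heavier tools and unfinished steps.

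One concrete caveat on your two-player argument. Your stated justification---that $u=\Char_{(0,0)}-\Char_{(1,1)}$ is not parallel to the affine hull of the Nash distributions---is correct only for the anti-coordination multiplicity region (PNEs at $(1,0)$ and $(0,1)$), which is the configuration your parenthetical ``firm-$1$ monopoly, firm-$2$ monopoly'' presumes. The proposition places no sign restrictions on $\alpha^0_{i,1}-\alpha^0_{i,0}$: if both interactions are positive the PNEs are $(0,0)$ and $(1,1)$, and then $u=\delta_{(0,0)}-\delta_{(1,1)}$ \emph{does} lie in the affine hull of the three Nash points. The conclusion still holds there---$p_m+tu$ has barycentric coordinates $(t,-t,1)$ with respect to the Nash vertices, so it leaves the convex hull for every $t\neq 0$---and the no-PNE (cyclic) case is trivial because the hull is the single interior point $p_m$; but these cases must be argued separately, whereas the paper's $y^*$ argument disposes of all of them at once. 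With that case analysis completed (and the degenerate configurations in your general-$d_I$ sketch patched, e.g.\ a singleton Nash hull, whose relative boundary is empty), your route goes through.
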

It is important to note that the payoff parameters are known under $\Psi^*$.
%%%
That is, without Assumption~\ref{ass:ER}, the NE solution concept is not discernible even in settings with fully known payoff structure (e.g., laboratory experiments).

%%%%%%%%%%%%%%%%%%%%%%%%%%%%%%%%%%%%%%%%%%%%%%%%%%%%%%%%%%%%%%%%%%%%%%%%%%%%%%%%%%%%%%%%%%%%%%%
\section{Beyond Rationalizability and Identification of \texorpdfstring{$\theta_0$}{the Payoff Parameters}}
\label{sec:beyond}
%%%
The sole purpose of Assumption~\ref{ass:rat} is to establish identification of the payoff parameters.
%%%
However, point identification of $\theta_0$ is {not} necessary for discernibility of solution concepts.
%%%
In this section, we consider two departures from rationalizability that lead to partial identification of the payoff parameters.

Suppose that firms are ambiguity averse in the sense of \cite{gilboa89}.
%%%
That is, suppose that each firm ranks its actions in terms of its {minimum} possible payoff, and then chooses the action that maximizes this minimum.
%%%
This action is called the \emph{maxmin} action, and is generically unique (indifference between actions is ruled out by continuity of the distribution of the utility shocks). 
%%%
Let $S_{M}(\theta_{0})$ be the set of distributions of play that assign full probability to maxmin actions given the payoff parameter $\theta_0$. 
%%%
The middle panel of Figure~\ref{fig:beyond} illustrates the predictions of this solution concept for the two-firm entry-game from Example~\ref{eg:entry}.

Another possibility is that the firms are colluding.
%%%
Suppose that firms can compensate each other via transfers that are not observed by the researcher. 
%%%
In this case, the firms could agree to choose \emph{collusive} outcomes that maximize the sum of their individual profits, even if doing so does not maximize the individual profits of some of them. 
%%%
Let $S_{C}(\theta_{0})$ be the set of distributions of play that assign full probability to collusive outcomes. 
%%%
The predictions of the collusive solution concept are illustrated in the left panel of Figure~\ref{fig:beyond} given the parametrization from Example~\ref{eg:entry}.

\begin{figure}[t]
	\centering
	\psset{unit=4.5mm}
	\begin{pspicture}(2,10)(31,-3)
	\newrgbcolor{uno}{0.99 0.40 0.50}
	\newrgbcolor{dos}{0.4 0.5 0.7}
	\newrgbcolor{dos}{0.4 0.5 0.7}
	\scriptsize%	
	%\psgrid%
	% Guides
	\psset{hatchsep=1pt,hatchwidth=0.5pt,linestyle=none}%2,9,16,23
		\psframe[linestyle=solid](2,-2.5)(3,-1.5)
			\rput[l](3.5,-2){duopoly}
		\psframe*[linecolor=blue!10](6.5,-2.5)(7.5,-1.5)
			\rput[l](8,-2){no entry}
		\psframe[fillstyle=vlines,hatchcolor=dos](11.5,-2.5)(12.5,-1.5)
			\rput[l](13,-2){firm $1$ monopoly}
		\psframe[fillstyle=hlines,hatchcolor=uno](18.5,-2.5)(19.5,-1.5)
			\rput[l](20,-2){firm $2$ monopoly}
		\psframe[fillstyle=hlines,hatchcolor=uno,hatchcolor=red!60!blue](25.5,-2.5)(26.5,-1.5)
		\psframe[fillstyle=vlines,hatchcolor=dos,hatchcolor=red!60!blue](25.5,-2.5)(26.5,-1.5)
			\rput[l](27,-2){multiplicity}		
	\psreset%
	% NASH	\begin{pspicture}(-2.5,-2.5)(7.5,7.5)
	\rput(30.5,7.5){
		% Aereas
		\psset{hatchsep=1pt,hatchwidth=0.5pt,linestyle=none}
			\psframe*[linecolor=blue!10](0,0)(2,2)
			\psframe[fillstyle=hlines,hatchcolor=uno,hatchcolor=red!60!blue](0,0)(-2,-3)
			\psframe[fillstyle=vlines,hatchcolor=dos,hatchcolor=red!60!blue](0,0)(-2,-3)
			\pspolygon[fillstyle=hlines,hatchcolor=uno](-7,-3)(-2,-3)(-2,0)(0,0)(0,2)(-7,2)
			\pspolygon[fillstyle=vlines,hatchcolor=dos](-2,-7)(-2,-3)(0,-3)(0,0)(2,0)(2,-7)		
		\psreset%
		% Lines
			\psline(-2,-7)(-2,-3)(-7,-3)
			\psframe(0,0)(-2,-3)
			\psline(0,2)(0,0)(2,0)
		% Labels
		\psset{linestyle=none,fillstyle=solid,opacity=0.95,framearc=1}
			\rput[t](-2,-7){\psframebox{$\delta_{12}^0(w)$}}
			\rput[b]{90}(-7,-3){\psframebox{$\delta_{21}^0(w)$}}
			\rput[b](0,2){\psframebox{$0$}}
			\rput[l](2,0){\psframebox{$0$}}
	}
	%% MINIMAX \begin{pspicture}(-2.5,-2.5)(7.5,7.5)
	\rput(19.0,7.5){
		% Aereas
		\psset{hatchsep=1pt,hatchwidth=0.5pt,linestyle=none}
			\psframe*[linecolor=blue!10](-2,-3)(2,2)
			\psframe[fillstyle=hlines,hatchcolor=uno](-7,-3)(-2,2)
			\psframe[fillstyle=vlines,hatchcolor=dos](-2,-7)(2,-3)
		\psreset%
		% Lines
			\psline(-2,-7)(-2,2)
			\psline(-7,-3)(2,-3)
		% Labels
		\psset{linestyle=none,fillstyle=solid,opacity=0.95,framearc=1}
			\rput[t](-2,-7){\psframebox{$\delta_{12}^0(w)$}}
			\rput[b]{90}(-7,-3){\psframebox{$\delta_{21}^0(w)$}}
	}
	%% COLLUSION \begin{pspicture}(-7.5,-7.5)(2.5,2.5)
	\rput(7.5,7.5){
		% Aereas
		\psset{hatchsep=1pt,hatchwidth=0.5pt,linestyle=none}
			\psframe*[linecolor=blue!10](0,0)(2,2)
			\pspolygon[fillstyle=hlines,hatchcolor=uno](-7,-5)(-5,-5)(0,0)(0,2)(-7,2)
			\pspolygon[fillstyle=vlines,hatchcolor=dos](-5,-7)(-5,-5)(0,0)(2,0)(2,-7)		
		\psreset%
		% Lines
			\psline(-5,-7)(-5,-5)(-7,-5)
			\psline(0,0)(-5,-5)
			\psline(0,2)(0,0)(2,0)
		% Labels
		\psset{linestyle=none,fillstyle=solid,opacity=0.95,framearc=1}
			\rput[t](-3,-7){\psframebox{$\delta_{12}^0(w)+\delta_{21}^0(w)$}}
			\rput[b]{90}(-7,-3){\psframebox{$\delta_{12}^0(w)+\delta_{21}^0(w)$}}
			\rput[b](0,2){\psframebox{$0$}}
			\rput[l](2,0){\psframebox{$0$}}
	}
	\end{pspicture}
	\caption{Three solution concepts for two-firm entry games with
		$\delta_{12}(w),\delta_{21}(w) < 0$ and
		$\delta_{ii}^0(w) + \beta_i^0(w)z_i = 0$ for $i=1,2$:
		collusion (left), maxmin (middle), and rationalizability (right).
		}
	\label{fig:beyond}
\end{figure}
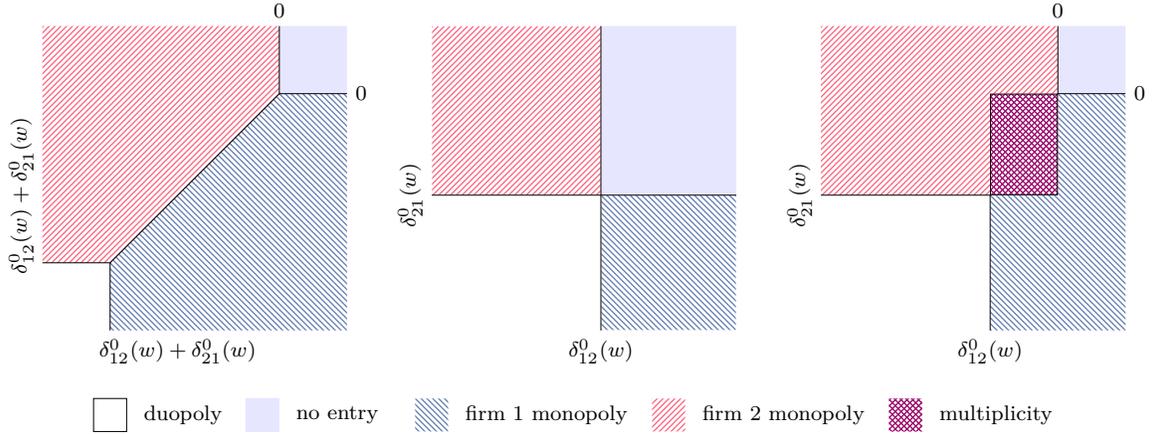

Formally, we assume that the behavior of players is consistent with either rationalizability, maxmin, or collusive behavior.
\begin{assumption}\label{ass:solutions}
	$h_0\in \bar S(\theta_0)=S_{R}(\theta_{0})\cup S_{M}(\theta_{0})\cup S_{C}(\theta_{0})$.
\end{assumption}

\begin{proposition}\label{prop:beyond}
	Suppose that assumptions~\ref{ass:z}--\ref{ass:ER}, and~\ref{ass:solutions} hold.
	Then
	\begin{enumerate}
		\item $h_0$, $\beta_{0}$, and $\Sigma^{0}$, are identified.
		\item If $h_{0}\in S_{M}(\theta_0)\cup S_{C}(\theta_0)$, then $\alpha_0$ is {not} point identified.
		\item Any solution concept nested into $\bar S$ is discernible relative to the set of parameters that satisfy assumptions~\ref{ass:z}--\ref{ass:ER}, and~\ref{ass:solutions}.
	\end{enumerate}
\end{proposition}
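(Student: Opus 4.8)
The plan is to prove the three claims in sequence. Claims~(i) and~(iii) reuse the identification argument behind Proposition~\ref{prop:u}, the bounded-completeness argument behind Proposition~\ref{prop:dop}, and the contradiction logic of Theorem~\ref{thm:main}; the novelty is that replacing Assumption~\ref{ass:rat} by Assumption~\ref{ass:solutions} destroys point identification of $\alpha_0$, so each step must be carried out using only the objects that remain identified, namely $h_0$, $\beta_0$, and $\Sigma^0$.

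For claim~(i), the crucial observation is that $S_R$, $S_M$, and $S_C$ all prescribe the same behavior on the \emph{dominance regions} of the payoff space: whenever an action yields a strictly positive payoff against every profile of opponents (resp.\ a strictly negative payoff against every such profile), that action is simultaneously the unique rationalizable action, the maxmin action, and the action selected by any collusive outcome. These are exactly the regions exploited in the proof of Proposition~\ref{prop:u}, where $\beta_0$ and $\Sigma^0$ are recovered from the limits of the partial derivatives of the choice probabilities along rays on which the excluded covariates diverge, so that $\rand{v}_i\to\pm\infty$ and dominance takes over. Because that argument only ever evaluates behavior in the limit, it is insensitive to which rationalizable solution concept generated the data, and the identification of $\beta_0$ and $\Sigma^0$ goes through verbatim under Assumption~\ref{ass:solutions}. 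Given $\beta_0$ and $\Sigma^0$, the conditional law of the index $\rand{v}$ is pinned down; since the proof of Proposition~\ref{prop:dop} invokes bounded completeness of this family and never uses $\alpha_0$, it identifies $h_0$ as before.

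For claim~(ii), I would display the failure of identification directly. Under $S_M$ the maxmin action of player $i$ depends on $\alpha$ only through the worst-case payoff $\min_{y_{-i}}\alpha_{i,y_{-i}}(w)$, and under $S_C$ the collusive outcome depends on $\alpha$ only through comparisons of summed payoffs across outcomes. Since $d_I\geq 2$ entails at least two opponent profiles, both maps $\alpha\mapsto\tilde{h}_0$ are many-to-one: perturbing any component of $\alpha_0$ that neither attains the relevant minimum nor alters the profit-maximizing outcome yields an $\alpha_0'\neq\alpha_0$ that leaves $\tilde{h}_0$, and therefore the observable distribution, unchanged while holding $\beta_0$ and $\Sigma^0$ fixed. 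Hence $\alpha_0$ is not point identified whenever $h_0\in S_M(\theta_0)\cup S_C(\theta_0)$.

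For claim~(iii), I would argue by contradiction as in Theorem~\ref{thm:main}. Given a pair $(\theta,h),(\theta',h')$ violating discernibility of some $S$ nested in $\bar S$, claim~(i) forces $h=h'$, $\beta=\beta'$, and $\Sigma=\Sigma'$; the obstruction absent from Theorem~\ref{thm:main} is that $\alpha\neq\alpha'$ remains possible, so one cannot conclude $S(\theta)=S(\theta')$ outright. The resolution is to show that $h$ and the identified pair $(\beta,\Sigma)$ already determine membership of $h$ in each of $S_R$, $S_M$, and $S_C$, so that no admissible $\alpha'$ can flip it, and hence neither can it flip membership in any $S$ nested in $\bar S$. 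The signatures are distinct: randomization on a positive-probability set of indices rules out the deterministic concepts $S_M$ and $S_C$; a maxmin rule is a product threshold rule, and the only $\alpha'$ for which such an $h$ remains in $\bar S(\theta')$ are those whose worst-case payoffs reproduce its thresholds, which place $h$ in $S_M(\theta')$; and genuinely collusive behavior has some player abstain on a positive-probability set of indices at which that player's entry is dominant, which is incompatible with $h\in S_R(\theta')$. I expect the last point to be the main obstacle: because only certain combinations of the duopoly payoffs are identified under collusion, one must rule out any $\alpha'$ that simultaneously shifts the region where entry is dominant and preserves the collusive selection in such a way that the observed abstention becomes rationalizable. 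The key is that the set of indices on which $S_C$ and $S_R$ disagree is itself recoverable from the identified behavior $h$, so the residual freedom in $\alpha'$ cannot explain it away; granting this, $h\in S(\theta)\iff h\in S(\theta')$, contradicting $h'\notin S(\theta')$ and proving discernibility.
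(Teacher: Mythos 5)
Your claims (i) and (ii) follow the paper's own route. The paper reduces the $d_I$-player game to two-player games by sending the other players' excluded covariates to $\pm\infty$, applies Proposition~\ref{prop:two-player} to identify $\beta_0$ and $\Sigma^0$ (Lemma~\ref{lemma:aux} is engineered precisely so that the limits it uses are valid under all three concepts), recovers $h_0$ by bounded completeness as in Proposition~\ref{prop:dop}, and obtains non-identification of $\alpha_0$ from parts~(iii)--(iv) of Proposition~\ref{prop:minimal}; your perturbation argument for (ii) is the same idea. Two cautions, though. First, your stated justification for (i) is wrong for collusion: a collusive outcome can keep a player out on a positive-probability set where that player's entry is strictly dominant (this is exactly why $\alpha_0$ is not identified under $S_C$, a fact you yourself use in (ii)). What saves identification is not agreement on dominance regions but that the disagreement is confined to a band of bounded width---the $q(z)$ term in Lemma~\ref{lemma:aux}---whose contribution vanishes along the limits used. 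Second, in (ii) a single-component perturbation of $\alpha_0$ generally does move the collusive regions; one needs perturbations preserving every total-profit sum $\sum_i y_i\alpha_{i,y_{-i}}$, e.g.\ $\alpha_{1,1}\mapsto\alpha_{1,1}+c$, $\alpha_{2,1}\mapsto\alpha_{2,1}-c$.

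On (iii) you correctly diagnose something the paper glosses over: its one-line appeal to the argument of Theorem~\ref{thm:main} needs $\theta=\theta'$, which is unavailable here because $\alpha$ is not identified, and your proposed repair---show that membership of $h$ in each of $S_R$, $S_M$, $S_C$ is determined by the identified objects $(h,\beta,\Sigma)$ alone---is the right strategy and more explicit than anything in the paper. But as executed it has a genuine hole, beyond the collusion step you concede with ``granting this'': the invariance claim is false in a degenerate admissible case. Take $d_I=2$, $\beta_i=1$, $\Sigma=I$, and let $h$ be the product rule ``$i$ enters iff $\rand{v}_i>0$.'' For $\theta$ with $\alpha_{i,0}=\alpha_{i,1}=0$, this rule is the unique rationalizable behavior, so $h\in S_R(\theta)$. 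For $\theta'$ with $\alpha'_{i,1}=0$ and $\alpha'_{i,0}=1$, the worst-case payoffs reproduce the same thresholds, so $h\in S_M(\theta')\subseteq\bar S(\theta')$; yet $h\notin S_R(\theta')$, because on the positive-probability region $\{v_1<-1\}\times\{-1<v_2<0\}$ iterated dominance under $\theta'$ forces the outcome $(0,1)$ while $h$ plays $(0,0)$. Both $(\theta,h)$ and $(\theta',h)$ satisfy Assumptions~\ref{ass:z}--\ref{ass:ER} and~\ref{ass:solutions}, and they generate identical observables since $\alpha$ enters neither $h$ nor the law of $(\rand{w},\rand{v})$; by Definition~\ref{def:discernibility} they witness non-discernibility of $S=S_R$, which is nested into $\bar S$. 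So ``no admissible $\alpha'$ can flip membership'' fails on $\Psi$ as stated, and the argument cannot be completed without a nondegeneracy restriction excluding zero strategic effects---the role Assumption~\ref{ass:generic} plays in Proposition~\ref{prop:minimal}(v) and which Proposition~\ref{prop:beyond} drops. To be fair, this defect is inherited rather than created: the paper's own terse proof of (iii) is silent on exactly this point.
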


The proof of Proposition~\ref{prop:beyond} is in Appendix~\ref{sec:proofs}.
%%%
Under collusive behavior, we can also identify some linear combinations of $\alpha^{0}$ parameters. 
%%%
But, without imposing assumptions that would reduce the dimensionality, we cannot identify all of them. 
%%%
See Proposition~\ref{prop:minimal} in Appendix~\ref{sec:proofs} to get a sense of which linear combinations can be identified. 
%%%

We conclude this section by noting that some of out results establish discernibility of solution concepts without point identification of either $\theta_0$ or $h_0$.
%%%
In particular Proposition~\ref{prop:beyond} does not require point identification $\theta_0$,
and Proposition~\ref{prop:minimal} in the appendix establishes discernibility of $S_{R}$, $S_{M}$, and $S_C$ without point identification of $h_0$.

%%%%%%%%%%%%%%%%%%%%%%%%%%%%%%%%%%%%%%%%%%%%%%%%%%%%%%%%%%%%%%%%%%%%%%%%%%%%%%%%%%%%%%%%%%%%%%%%%%%%
%%%%%%%%%%%%%%%%%%%%%%%%%%%%%%%%%%%%%%%%%%%%%%%%%%%%%%%%%%%%%%%%%%%%%%%%%%%%%%%%%%%%%%%%%%%%%%%%%%%
\section{Conclusion}
\label{sec:conclusion}
%%%
We have defined discernibility of a given solution concept to mean that the solution concept can explain observed data if and only if the data was generated by it. 
%%%
And we have shown that any solution concept stronger than rationalizability is discernible under commonly imposed assumptions.
%%%
We have also established identification of payoff parameters, including the correlation between unobserved payoff shocks, allowing for any form of rationalizable behavior.
%%%
Our results are robust to some departures from rationalizability including ambiguity aversion and collusive behavior.
%%%
Our exclusion restriction is necessary for discernibility of the NE solution concept in some settings, even when the payoff parameters are known.

It is possible to determine whether the data can be generated by any given convex%
\footnote{A solution concept is convex if $S(\theta)$ is a convex set for all $\theta$.}
solution concept (e.g., NE).
%%%
For instance, one can construct the sets of conditional moment inequalities characterizing the solution concept (see \citet{BMM} or \citet{galichon11}).
%%%
Then, the identified set (i.e., the set of parameters that satisfy these moment inequalities) is empty if and only if the data can be generated by the solution concept.
%%%
Our results imply that one can substantially strengthen this conclusion.
%%%
The identified set of payoff parameters is empty if and only if the data is in fact generated by the solution concept.
%%%%%%%%%%%%%%%%%%%%%%%%%%%%%%%%%%%%%%%%%%%%%%%%%%%%%%%%%%%%%%%%%%%%%%%%%%%%%%%%%%%%%%%%%%%%%%%%%%%%
%%%%%%%%%%%%%%%%%%%%%%%%%%%%%%%%%%%%%%%%%%%%%%%%%%%%%%%%%%%%%%%%%%%%%%%%%%%%%%%%%%%%%%%%%%%%%%%%%%%%
\bibliography{references}
\appendix

%%%%%%%%%%%%%%%%%%%%%%%%%%%%%%%%%%%%%%%%%%%%%%%%%%%%%%%%%%%%%%%%%%%%%%%%%%%%%%%%%%%%%%%%%%%%%%%
%%%%%%%%%%%%%%%%%%%%%%%%%%%%%%%%%%%%%%%%%%%%%%%%%%%%%%%%%%%%%%%%%%%%%%%%%%%%%%%%%%%%%%%%%%%%%%%
\section{Omitted Proofs}
\label{sec:proofs}

In this section, we first prove a version of Propositions~\ref{prop:u} for games with two players and without nonexcluded covariates $\rand{w}$, where we relax the normality and rationalizability assumptions.
%%%
This result shows that one does not need to point identify $\theta_0$ and $h_0$ in order to establish the discernibility of some solution concepts.
%%%
Then, we show how this results can be applied to games with covariates, with many players, and with many actions (Propositions~\ref{prop:u},~\ref{prop:beyond}, and~\ref{prop:multi}).

\subsection{Binary Two-Player Game}\label{sec:binary}
Suppose that $I=\{1,2\}$, $Y_i=\{0,1\}$ for $i\in I$, and 
$i$'s payoffs are given by 
\[
y_i\cdot (\alpha^{0}_{i,y_{-i}}+\beta_{i}^{0}\rand{z}_i - \rand{e}_i).
\]

\begin{assumption}\label{ass:minimal}{}\
\begin{enumerate}
	\item $\rand{z} = (\rand{z}_i)_{i=1,2}$ and $\rand{e} = (\rand{e}_i)_{i=1,2}$ are independent.
	\item $\rand{e}$ admits a probability density function (p.d.f.) $f_{\rand{e}}$ that is continuously differentiable and strictly positive on $\Real^2$.
	\item $\Exp{\rand{e}_i}=0$ and $\Exp{\rand{e}^2_i}=1$, $i=1,2$.
	%\item The support of $\rand{z}$ is $\Real^2$.
		%\item $\beta_{i}^{0}\neq 0$ for all $i\in I$.
		% 		\item $| \alpha^{0}_{1,1}-\alpha^{0}_{1,0} | \neq | \alpha^{0}_{2,1}-\alpha^{0}_{2,0} |$.
		%$\alpha^{0}_{1,1}-\alpha^{0}_{1,0}+\alpha^{0}_{2,1}-\alpha^{0}_{2,0}\neq 0$,
		%	and $\abs{\alpha^{0}_{1,1}-\alpha^{0}_{1,0}}\neq \abs{\alpha^{0}_{2,1}-\alpha^{0}_{2,0}}$.
	\end{enumerate}
\end{assumption}

Part (ii) of Assumption~\ref{ass:minimal} is a regularity condition needed for invertibility of the marginal cumulative distribution functions (c.d.f.) $F_{\rand{e}_i}$, $i=1,2$.
%%%
Part (iii) is a location and scale normalization.
%%%
Assumption~\ref{ass:minimal} is implied by Assumption~\ref{ass:e}.

\begin{assumption}\label{ass:generic}
	$\alpha^{0}_{1,1}-\alpha^{0}_{1,0} \neq \alpha^{0}_{2,1}-\alpha^{0}_{2,0}$.
\end{assumption}

Assumption~\ref{ass:generic} 
%is a generic constraint on the value of the parameters. %%%It 
requires the strategic effects of players' actions to be asymmetric.
%%%
Geometrically, it means that the multiplicity region has different
height and width. 

The next proposition shows that rationalizability, collusion, and maxmin are discernible under minimal restrictions on the distribution of shocks.
%%%
It also illustrates that point identification of neither $\theta_0$ nor $h_0$ is necessary to for solution concepts to be discernible. 

\begin{proposition}\label{prop:minimal}
	Suppose that assumptions~\ref{ass:z} and~\ref{ass:solutions}--\ref{ass:generic} hold.
	%%%
	Then
	\begin{enumerate}
		\item $\beta_i^0$ and $F_{\rand{e}_i}$ are point identified for all $i=1,2$.
		\item If $h_{0}\in S_{R}(\theta_0)$, then $\alpha_{i,y_i}^{0}$ is identified for all $i$ and $y_i$.
		\item If $h_{0}\in S_{M}(\theta_0)$, then only $\min_{y_{i}}\{\alpha^{0}_{i,y_i}\}$ is identified for all $i$.
		\item If $h_{0}\in S_{C}(\theta_0)$, then only $\alpha^{0}_{i,0}$, $i=1,2$, and $\alpha^{0}_{1,1}+\alpha^{0}_{2,1}$ are identified.
		\item $S_{R}$, $S_{C}$, and $S_{M}$ are discernible relative to the set of parameters that satisfy assumptions~\ref{ass:z} and~\ref{ass:solutions}--\ref{ass:generic}.
	\end{enumerate}
\end{proposition}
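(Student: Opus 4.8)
The plan is to obtain the identification claims (i)--(iv) by driving one player's excluded covariate to $\pm\infty$, so that in the limit that player has a strictly dominant action and the game collapses to a single-agent threshold-crossing problem for the opponent; discernibility in (v) then follows by comparing the joint outcome distributions the three concepts can generate once the common objects $\beta^0_i$ and $F_{\rand e_i}$ have been pinned down. For (i), fix $i$ and let $\rand v_{-i}=\beta^0_{-i}\rand z_{-i}-\rand e_{-i}$ diverge (possible since $\beta^0_{-i}\neq0$ and $\rand z_{-i}$ has full support). On the diverging event, entering or staying out becomes strictly dominant for player $-i$ with probability tending to one, and all three concepts respect strict dominance, so player $-i$'s action is pinned down and player $i$ best-responds through a single threshold: he enters iff $\rand e_i<c+\beta^0_i z_i$. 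Hence the limiting marginal equals $F_{\rand e_i}(c+\beta^0_i z_i)$, a location--scale family in $z_i$. Monotonicity reveals $\mathrm{sign}(\beta^0_i)$, and matching the variance of the implied latent index to the normalization $\Exp{\rand e_i}=0$, $\Exp{\rand e_i^2}=1$ delivers $|\beta^0_i|$, hence $\beta^0_i$, and then $F_{\rand e_i}$. The step is concept-free since only dominance is used.

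With $\beta^0_i$ and $F_{\rand e_i}$ in hand, parts (ii)--(iv) amount to reading off the limiting thresholds. Under $S_R$ the two limits give the best-response thresholds $\alpha^0_{i,1}$ (opponent in) and $\alpha^0_{i,0}$ (opponent out), so all four intercepts are identified. Under $S_M$ each player compares the worst-case entry payoff to zero, so $\Pr(\rand y_i=1|z)=F_{\rand e_i}(\ubar\alpha_i+\beta^0_i z_i)$ with $\ubar\alpha_i=\min\{\alpha^0_{i,0},\alpha^0_{i,1}\}$, which identifies $\ubar\alpha_i$ but nothing more: raising the larger intercept leaves the maxmin action, hence the entire observable distribution, unchanged. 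Under $S_C$, driving firm $-i$ out identifies $\alpha^0_{i,0}$ from the $(1,0)$-versus-$(0,0)$ threshold, while driving firm $-i$ in yields the $(1,1)$-versus-$(0,1)$ threshold $\alpha^0_{i,1}+\alpha^0_{-i,1}-\alpha^0_{-i,0}$; combined with $\alpha^0_{-i,0}$ this delivers $\alpha^0_{1,1}+\alpha^0_{2,1}$ and only the sum, because the collusive joint payoff of $(1,1)$ depends on $\alpha^0_{1,1},\alpha^0_{2,1}$ only through their sum.

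For (v), suppose one common observable joint distribution is generated by two of the concepts; by (i) the two models share $\beta^0_i$ and $F_{\rand e_i}$, so it suffices to compare conditional outcome probabilities region by region in $(\rand v_1,\rand v_2)$-space. The concepts fill the multiplicity region differently: $S_M$ assigns it entirely to the no-entry profile, $S_C$ splits it deterministically along the diagonal where the two monopoly joint profits coincide, and $S_R$ imposes no restriction; moreover $S_C$ can place mass on profiles that are individually strictly dominated, which $S_R$ forbids under the identified parameters. The plan is to show these patterns cannot coincide for all $z$. Matching marginals equates the relevant threshold locations, and Assumption~\ref{ass:generic} forces the multiplicity rectangle to be non-square (its width $\alpha^0_{1,0}-\alpha^0_{1,1}$ and height $\alpha^0_{2,0}-\alpha^0_{2,1}$ differ), so the collusive diagonal cut cannot be reproduced by the axis-parallel cuts of the other concepts, and the collusive individually-dominated profiles cannot be rationalized. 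Hence observational equivalence forces the two concepts to coincide, contradicting $h'\notin S(\theta')$.

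The main obstacle is exactly this last step. Because $S_M\subseteq S_R$ and $S_R$ is permissive, discernibility cannot rest on marginals alone: a knife-edge configuration of the intercepts (for instance $\alpha^0_{1,1}-\alpha^0_{1,0}=-(\alpha^0_{2,1}-\alpha^0_{2,0})$, which Assumption~\ref{ass:generic} does \emph{not} exclude) can make the marginals of $S_C$ mimic the opponent-covariate-invariant marginals of $S_M$, so the argument must exploit the full joint distribution over the four outcomes together with the exact geometry of the acceptance regions. The delicate part is ruling out \emph{every} alternative intercept vector $\theta'$ consistent with the already-identified $\beta^0_i$ and $F_{\rand e_i}$: one must verify that no relabeling of the $\alpha$'s can turn the diagonal partition or the dominated profiles produced by one concept into an admissible configuration of another, and Assumption~\ref{ass:generic} (non-square multiplicity rectangle) is the structural feature that makes this impossible.
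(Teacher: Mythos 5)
Your parts (i)--(iv) are correct and essentially reproduce the paper's own Steps 1--2: send $z_{-i}$ to $\pm\infty$ so the multiplicity region's contribution to the observable probabilities vanishes, note that every one of the three concepts then reduces to a one-dimensional threshold-crossing rule, recover $\beta_i^0$, the threshold constant, and $F_{\rand{e}_i}$ from the mean--variance normalization (the paper packages this as Lemma~\ref{lemma:aux}), and read off which functions of the $\alpha$'s the limiting thresholds reveal under each concept, including the correct non-identification directions for maxmin and collusion.

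Part (v) is where you have a genuine gap, and it sits exactly at the obstacle you flag but do not resolve. Your proposed tool---comparing outcome probabilities ``region by region in $(\rand{v}_1,\rand{v}_2)$-space'' and exploiting the ``exact geometry of the acceptance regions''---is not available under the assumptions of this proposition: part (i) identifies only the \emph{marginals} $F_{\rand{e}_i}$ (the joint distribution of $(\rand{e}_1,\rand{e}_2)$ remains a free nuisance parameter), the distribution of play is not identified (the paper says so immediately after the proposition), and under $S_R$ behavior on the multiplicity region is an arbitrary, $z$-dependent selection. So region-wise comparisons are not observable objects; only functionals of $\Pr(\rand{y}=y|\rand{z}=z)$ can be used. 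The paper's Step 3 accordingly discriminates with the scalar statistics $t_i=F_{\rand{e}_i}^{-1}\big(\lim_{z_{-i}\to+\infty}\mu_i(z)\big)-F_{\rand{e}_i}^{-1}\big(\lim_{z_{-i}\to-\infty}\mu_i(z)\big)$, which your Step 2 already computes: $S_R$ forces $(t_1,t_2)=(\delta_1^0,\delta_2^0)$ with $t_1\neq t_2$ (Assumption~\ref{ass:generic} constrains the alternative parameter $\theta'$ as well, since $(\theta',h')$ must also lie in $\Psi$), $S_C$ forces $t_1=t_2=\delta_1^0+\delta_2^0$, and $S_M$ forces $t_1=t_2=0$. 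This separates $S_R$ from the other two outright, and separates $S_C$ from $S_M$ whenever $\delta_1^0+\delta_2^0\neq0$.

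In the knife-edge case $\delta_1^0+\delta_2^0=0$ the escape you sketch provably fails: there the collusive objective is additive, so collusion prescribes exactly the axis-parallel rule $y_i=\Char(\alpha_{i,0}^0+\rand{v}_i>0)$---no diagonal cut and no dominated profiles---and this is observationally identical, jointly over all four outcomes and for every $z$, to a maxmin model with the same error distribution and parameters $\theta'$ chosen so that $\min\{\alpha'_{i,0},\alpha'_{i,1}\}=\alpha_{i,0}^0$ and $\delta_1'\neq\delta_2'$. Hence the non-square-rectangle geometry cannot do the work you assign to it, and your argument never rules out this pair. (The paper disposes of this case by asserting that $S_C(\theta)=S_M(\theta)$ whenever $\delta_1+\delta_2=0$, rendering discernibility vacuous for such pairs; your worry is aimed at the thinnest point of the paper's own proof, but your proposal contains no substitute for that assertion.) A related minor error: your premise $S_M\subseteq S_R$ is false---when entry is dominant for the opponent and $\delta_i^0>0$, maxmin tells player $i$ to stay out on the event $\alpha_{i,0}^0+\rand{v}_i<0<\alpha_{i,1}^0+\rand{v}_i$, where entering is the unique rationalizable action---which is precisely why the paper treats $S_M$ and $S_C$ as departures from rationalizability via the union $\bar{S}$.
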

\begin{proof}
	(\emph{Step 1---Identification of $\beta_i^{0}$ and $F_{\rand{e}_i}$})
	%%%
	Define $\delta^{0}_{i}=\alpha^{0}_{i,1}-\alpha^{0}_{i,0}$, $i=1,2$.
	%%%
	Note that under rationalizability and collusive behavior
	\[
	\Pr(\rand{y}_1=0,\rand{y}_2=0|\rand{z}=z)
	= \int_{\beta^{0}_1z_1}^{\infty}\int_{\beta^{0}_2z_2}^{\infty}
	\! f_{\rand{e}}(e) \, de
	+q(z),
	\]
	where under collusive behavior
	\[
	q(z) = \begin{cases}
	0, & \delta^{0}_1+\delta^{0}_2\leq 0,\\
	-\int_{\beta^{0}_1z_1+\delta^{0}_1+\delta^{0}_2}^{\beta^{0}_1z_1}\int_{\beta^{0}_2z_2+\delta^{0}_1+\delta^{0}_2}^{\beta^{0}_2z_2}h(\beta^{0}_1z_1-e_1,\beta^{0}_2z_2-e_2) \! f_{\rand{e}}(e)\, de, & \delta^{0}_1+\delta^{0}_2>0,
	\end{cases}
	\]
	and under rationalizability
	\[
	q(z) =
	\begin{cases}
	\int_{\beta^{0}_1z_1+\delta^{0}_{1}}^{\beta^{0}_1z_1}\int_{\beta^{0}_2z_2+\delta^{0}_2}^{\beta^{0}_2z_2}\!h_0(\beta^{0}_1z_1-e_1,\beta^{0}_2z_2-e_2)f_{\rand{e}}(e)\,de,& \delta^{0}_1,\delta^{0}_2\leq 0,\\
	\int_{\beta^{0}_1z_1}^{\beta^{0}_1z_1+\delta^{0}_1}\int_{\beta_2z_2+\delta^{0}_2}^{\beta^{0}_2z_2}\!h_0(\beta^{0}_1z_1-e_1,\beta^{0}_2z_2-e_2)f_{\rand{e}}(e)\,de,& \delta^{0}_1>0,\delta^{0}_2\leq 0,\\
	\int_{\beta^{0}_1z_1+\delta^{0}_1}^{\beta^{0}_1z_1}\int_{\beta^{0}_2z_2}^{\beta^{0}_2z_2+\delta^{0}_2}\!h_0(\beta^{0}_1z_1-e_1,\beta^{0}_2z_2-e_2)f_{\rand{e}}(e)\,de,& \delta^{0}_1\leq 0,\delta^{0}_2>0,\\
	-\int_{\beta^{0}_1z_1}^{\beta^{0}_1z_1+\delta^{0}_1}\int_{\beta^{0}_2z_2}^{\beta^{0}_2z_2+\delta^{0}_2}\!h_0(\beta^{0}_1z_1-e_1,\beta^{0}_2z_2-e_2)f_{\rand{e}}(e)\,de,& \delta^{0}_1,\delta^{0}_2>0.
	\end{cases}
	\]
	Under maxmin 
	\[
	\Pr(\rand{y}_1=0,\rand{y}_2=0|\rand{z}=z) 
	= \int_{\beta^{0}_1z_1+\min\{\delta^{0}_1,0\}}^{\infty}\int_{\beta^{0}_2z_2+\min\{\delta^{0}_2,0\}}^{\infty}
	\! f_{\rand{e}}(e)\,de.
	\]
	Hence, Lemma~\ref{lemma:aux} can be applied to $p(z)=\Pr(\rand{y}_1=0,\rand{y}_2=0|\rand{z}=z)$ under all three solution concepts.
	%%%
	We can thus identify $\beta^{0}_{i}$ and the marginal c.d.f.s $F_{\rand{e}_i}$, $i=1,2$, independently of the solution concept.
	%%%
	Since $z_i$, $i=1,2$, can be rescaled, assume without loss of generality that $\beta^{0}_i=1$, $i=1,2$.	
	
	(\emph{Step 2---Identification of $\alpha_{i,y_{-i}}^{0}$})
	Next, let 
	\[
	\mu_i(z) = \Pr(\rand{y}_i=1|\rand{z}=z)
	\]
	denote the (known) probability of firm $i$ entering,
	conditional on the value of $\rand{z}$.
	%%%
	We will take limits as $\rand{z}_{-i}$ goes to $\pm\infty$.
	%and consider the limit of $\rand{y}$ as $\rand{z}_{-i}$ goes to $\pm\infty$.
	%%%
	Under any of the three solution concepts,
	\[
	\lim_{z_{-i}\to +\infty} \mu_{-i}(z ) = 1
	\quad\text{and}\quad
	\lim_{z_{-i}\to -\infty} \mu_{-i}(z) = 0.
	\]	 
	Hence, under rationalizability,
	\begin{align*}
	\lim_{z_{-i}\to-\infty} \mu_i(z)
	= F_{\rand{e}_i}(\alpha^{0}_{i,0}+z_i)
	&\implies \alpha^{0}_{i,0} = F_{\rand{e}_i}^{-1}\big(
	\lim_{z_{-i}\to-\infty} \mu_i(z)	\big)-z_i,\\
	\lim_{z_{-i}\to+\infty} \mu_i(z)
	= F_{\rand{e}_i}(\alpha^{0}_{i,1}+z_i)
	&\implies \alpha^{0}_{i,1} = F_{\rand{e}_i}^{-1}\big(
	\lim_{z_{-i}\to+\infty} \mu_i(z)	\big)-z_i,
	\end{align*}
	where we used the facts that $\lim_{z_{-i}\to+\infty} \mu_i(z)$
	and $\lim_{z_{-i}\to-\infty} \mu_i(z)$ are well defined, 
	and $F_{\rand{e}_i}$ is known (from Step 1)
	and invertible (from part~(ii) of Assumption~\ref{ass:minimal}).
	%%%
	Similarly, under collusive behavior,
	\[
	\lim_{z_{-i}\to-\infty} \mu_i(z)
	= F_{\rand{e}_i}(\alpha^{0}_{i,0}+z_i)
	\implies \alpha^{0}_{i,0} = F_{\rand{e}_i}^{-1}\big(
	\lim_{z_{-i}\to-\infty} \mu_i(z)	\big)-z_i,
	\]
	and
	\begin{align*}
	\lim_{z_{-i}\to+\infty} \mu_i(z)
	& = F_{\rand{e}_i}(\alpha^{0}_{1,1}+\alpha^{0}_{2,1}-\alpha^{0}_{-i,0}+z_i)\\
	&\implies \alpha^{0}_{1,1}+\alpha^{0}_{2,1} = F_{\rand{e}_i}^{-1}\big(
	\lim_{z_{-i}\to+\infty} \mu_i(z)	\big)-z_i+\alpha^{0}_{-i,0}.
	\end{align*}
	Finally, under maxmin, each player $i$ chooses between a payoff of $0$ and
	\[
	\min\big\{\alpha^{0}_{i,0}+\rand{z}_{i}-\rand{e}_i,\ 
	\alpha^{0}_{i,1}+\rand{z}_{i}-\rand{e}_i\big\}
	=\rand{z}_{i}-\rand{e}_i+\min\{\alpha^{0}_{i,0},\,\alpha^{0}_{i,1}\}.
	\]
	Hence, under maxmin,
	\[
	\mu_i(z)
	= F_{\rand{e}_i}\big( \min\{\alpha^{0}_{i,0},\alpha^{0}_{i,1}\} + z_i\big)
	\implies
	\min\{\alpha^{0}_{i,0},\alpha^{0}_{i,1}\} 
	= F_{\rand{e}_i}^{-1}\big( \mu_i(z)\big) - z_i.
	\]
	
	(\emph{Step 3---Discriminating Between Solution Concepts})
	%%%
	Recall that we have defined $\delta_i^{0} = \alpha_{i,1}^{0}-\alpha^{0}_{i,0}$.
	%%%
	We can discriminate between rationalizability, collusive behavior, and maxmin by 
	examining the statistics
	\[
	t_i = F_{\rand{e}_i}^{-1}\left(\lim_{z_{-i}\to+\infty}\mu_i(z)\right) - F_{\rand{e}_i}^{-1}\left(\lim_{z_{-i}\to-\infty}\mu_i (z)\right) 
%	= \left\{\begin{array}{ll}
%	\delta_1^{0} + \delta_2^{0}&\ \text{under}\ S_C\\
%	\delta_i^{0}&\ \text{under}\ S_R\\
%	0&\ \text{under}\ S_M\\
%	\end{array}\right.,
,
	\]
	$i=1,2$.
	%%%
	Under maxmin, we have $t_1 = t_2 = 0$.
	%%%
	Under collusion, we have $t_1=t_2 = \delta_1^0+\delta_2^0$.
	%%%
	Under rationalizability, we have $t_1=\delta_1^0$ and $t_2=\delta_2^0$.
	%%%
	
	Assumption~\ref{ass:generic} implies that $\delta_1^0\neq\delta_2^0$.
	%%%
	Hence, the observed data can be used to discriminate between $S_R$ and $S_C$,
	and between $S_R$ and $S_M$.
	%%%
	If $\delta_1^{0} + \delta_2^{0}\neq 0$, then $S_M$ and $S_C$ can also be discriminated. 
	%%%%
	Otherwise, $S_C(\theta)=S_M(\theta)$ for all $\theta$ and hence $S_M$ and $S_C$ are discernible. 
\end{proof}

Proposition~\ref{prop:minimal} still does not identify the correlation structure nor the distribution of play.
%%%
For that purpose, we impose the following assumptions on the distribution of the unobserved payoff shocks. 
%%%
Let $f_{\rand{e}_i}$, and $F_{\rand{e}_i|\rand{e}_j}$ denote marginal p.d.f.\ of $\rand{e}_i$ and the conditional c.d.f.\ of $\rand{e}_i$ conditional on $\rand{e}_j$, respectively.
%%%
We use $\partial_{e_i}$ to denote the partial derivatives with respect to $e_i$.

\begin{assumption}\label{ass:tail}{}\ 
	\begin{enumerate}
		\item For all $\tau\in(-1,1)$ and all $\bar{e},\ubar{e}\in\Real$ such that $\bar{e} \geq \ubar{e}$,
		there exists a constant $e^* \in\Real$ such that 
		\[
		\big[ e_1 < e^* \enspace\text{and}\enspace \tau e_1 + \ubar{e} \leq e_{2} \leq \tau e_1 + \bar{e}\big]
		\implies \partial_{e_1} f_\rand{e}(e) \geq 0.
		\]
		\item For almost all (with respect to the Lebesgue measure) $\tau \in (-1,1)$, 
		and all real numbers $\gamma_1,\bar\gamma_{2},\ubar\gamma_{1}\in \Real$
		such that $\bar\gamma_{2}\geq \ubar\gamma_{2}$, 
		\[
		\lim_{z_{1}\to-\infty}\dfrac{f_{\rand{e}_{1}}(z_{1}+\gamma_{1})}{f_{\rand{e}_{1}}(z_{1})}
		\Big[F_{\rand{e}_{2}|\rand{e}_{1}}(\tau z_{1}+\bar\gamma_{2}|z_{1}+\gamma_1)
		-F_{\rand{e}_{2}|\rand{e}_{1}}(\tau z_{1}+\ubar\gamma_{2}|z_{1}+\gamma_1)\Big]
		=0.
		\]
	\end{enumerate}  
\end{assumption}
Condition (i) in Assumption~\ref{ass:tail} requires the tail of the joint density to be convex in $e_i$ along directions $e_{-i}=\tau e_i$.
%%%
Condition (ii) controls the rates of convergence to zero of the marginal p.d.f.\ and the conditional c.d.f..
%%%
Both conditions are satisfied by distributions that have exponential tails. %%%
\begin{example}
	Assumption~\ref{ass:tail} is satisfied for the bivariate normal distribution.
	%%%
	Indeed, for the bivariate distribution with unit variances and with correlation $\rho_0$
	\[
	\partial_{e_1}f_{\rand{e}}(e_1,e_2)
	= \dfrac{\rho_0 e_2-e_1}{1-\rho_0^2}\cdot f_{\rand{e}}(e_1,e_2)
	\geq \dfrac{(\rho_0\tau-1)e_1+\min\{\rho_0\ubar{e},\rho_0\bar{e}\}}{1-\rho_0^2}\cdot f_{\rand{e}}(e_1,e_2).
	\]
	%%%
	Hence, one can take $e^*=\min\{\rho_0\ubar{e},\rho_0\bar{e}\}/(1-\tau\rho_0)$.
	%%%
	In order to verify condition (ii), note that for the normal distribution we have
	\[
	\dfrac{f_{\rand{e}_i}(z_i+\gamma_i)}{f_{\rand{e}_i}(z_i)} = 
	\exp(-\gamma_i^2/2)\cdot\exp(\gamma_iz_i),
	\]
	and 
	\begin{align*}
	F_{\rand{e}_{-i}|\rand{e}_{i}}(\tau z_{i}+\bar\gamma_{-i}&|z_{i}+\gamma_i) 
	-F_{\rand{e}_{-i}|\rand{e}_{i}}(\tau z_{i}+\ubar\gamma_{-i}|z_{i}+\gamma_i)\\
	& = \Phi\left(\dfrac{(\tau-\rho_0)e_i+ \bar\gamma_{-i}-\rho_0\gamma_i}{\sqrt{1-\rho_0^2}}\right)
	-\Phi\left(\dfrac{(\tau-\rho_0)e_i+ \ubar\gamma_{-i}-\rho_0\gamma_i}{\sqrt{1-\rho_0^2}}\right),
	\end{align*}
	where $\Phi(\blank)$ denotes the standard normal c.d.f..
	%%%
	Thus, for $\gamma_i>0$, condition~(iii) follows trivially.
	%%%
	Now, suppose $\gamma_i<0$.
	%%%
	If $\tau\neq \rho_0$, then L'H\^opital's yields:
	\begin{align*}
	& \lim_{e_1\to-\infty} \dfrac{f_{\rand{e}_i}(z_i+\gamma_i)}{f_{\rand{e}_i}(z_i)}\Big[F_{\rand{e}_{-i}|\rand{e}_{i}}(\tau z_{i}+\bar\gamma_{-i}|z_{i}+\gamma_i) 
	-F_{\rand{e}_{-i}|\rand{e}_{i}}(\tau z_{i}+\ubar\gamma_{-i}|z_{i}+\gamma_i)\Big] \\
	& =\lim_{e_1\to-\infty}\dfrac{(\tau-\rho_0) \left[\phi\left(\dfrac{(\tau-\rho_0)e_i+ \bar\gamma_{-i}-\rho_0\gamma_i}{\sqrt{1-\rho_0^2}}\right)
		-\phi\left(\dfrac{(\tau-\rho_0)e_i+ \ubar\gamma_{-i}-\rho_0\gamma_i}{\sqrt{1-\rho_0^2}}\right)\right]}%
	{-\gamma_i \exp(-\gamma_1^2/2) \cdot \exp(-\gamma_i e_i)}\\
	& =\lim_{e_1\to-\infty}\dfrac{\exp\left(-\dfrac{(\tau-\rho_0)^2}{2(1-\rho_0^2)}e_i^2\right)}{\exp(-\gamma_ie_i)},
	\end{align*}
	where $\phi(\blank)$ denotes the standard normal p.d.f..
	The latter limit equals to zero because $\tau\neq \rho_0$.
\end{example}

We also need to model the correlation between $\rand{e}_1$ and $\rand{e}_2$.
\begin{assumption}\label{ass:corr}{}\
$\rand{e}_2=\sqrt{1-\rho_0^2}\,\bm\xi+\rho_0\rand{e}_1\:\as$, where $\bm{\xi}$ is independent from $\rand{e}_1$ and $\rho_0\in(-1,1)$ is an unknown parameter.
\end{assumption}

Assumptions~\ref{ass:tail} and~\ref{ass:corr} allow us to identify the correlation between unobservables.
%%%
Note that the roles of $\rand{e}_1$ and $\rand{e}_2$ are fully interchangeable in Assumptions~\ref{ass:tail} and~\ref{ass:corr} 
(e.g., one could define $\rand{e}_1=\sqrt{1-\rho_0^2}\,\bm\xi+\rho_0\rand{e}_2\:\as$).
%%%
In order to identify $h_0$, we use the following assumption.

\begin{assumption}\label{ass:completeness}
	The family of distributions $\{F_{\rand{e}}(e - t)\:|\:t\in\Real^2\}$ is boundedly complete. That is, for any bounded function $g:\Real^2\to\Real$
	\[
	\left[ \forall t\in\Real^2,\: 
	\int_{-\infty}^\infty g(e)f_{\rand{e}}(e-t)\,de = 0\right]
	\implies g(\rand{e})=0\ \as.
	\]
\end{assumption}
Assumption~\ref{ass:completeness} is a richness condition and is satisfied by many multivariate distributions. For example, it is satisfied by the multivariate normal (see, for instance, \citet{newey03}) and the Gumbel distributions.
\begin{example}
	Assume that $\rand{e}_1$ and $\bm\xi$ are i.i.d.\ according to the Gumbell distribution with parameter $(0,1)$, 
	that is, the p.d.f.\ of $\rand{e}_{1}$ is $f(e_1) = \exp(-e_1-\exp(-e_1))$.
	%%%
	The family of distributions $\{f_{\rand{e}}(e - t)\:|\:t\in\Real^2\}$ is boundedly complete since it belongs to the two-parameter exponential family with $t$ as a parameter (\citet{brown86}). Indeed, letting $\gamma_0:=\sqrt{1-\rho_0^2}$,
	\begin{align*}
	f_{\rand e}(e-t) 
	& = f_{\rand{e}_2|\rand{e}_1}(e_2-t_2|e_1-t_1)f_{\rand e_1}(e_1 - t_1) \\
	&= \dfrac{1}{\gamma_0} \cdot f(e_1-t_1)f\left(\dfrac{e_2-t_2-\rho_0(e_1-t_1)}{\gamma_0}\right)\\
	& = G(e) \exp\left( \sum_{i=1}^{2} \eta_i(t) T_i(e) + \xi(t)\right),
	\end{align*}
	where
	\begin{gather*}
	G(e) = \dfrac{1}{\gamma_0} \cdot \exp\left( \frac{\left(\rho_0-\gamma_0\right)e_1-e_2}{\gamma_0}\right),\quad
	\xi(t) = \dfrac{1}{\gamma_0} \cdot \left(\left(\gamma_0-\rho_0\right)t_1+t_2\right),\\
	T_1(e) = \exp(-e_1), \qquad
	T_2(e) = \exp\left(-\dfrac{e_2-\rho_0e_1}{\gamma_0}\right),\\
	\eta_1(t) = \exp(t_1) \quad\text{and}\quad
	\eta_2(t) = \exp\left(\dfrac{t_2-\rho_0t_1}{\gamma_0}\right).
	\end{gather*}
\end{example}

The following proposition establishes point identification of $h_0$ and the correlation parameter $\rho_0$.
\begin{proposition}\label{prop:two-player}
	Suppose that assumptions~\ref{ass:z}, \ref{ass:ER}, \ref{ass:solutions}, \ref{ass:minimal}, and \ref{ass:tail}--\ref{ass:completeness} hold.
	Then
	\begin{enumerate}
		\item All the conclusions of Proposition~\ref{prop:minimal} hold;
		\item $\rho_0$ and $h_0$ are identified;
		\item Any solution concept nested into $\bar{S}$ is discernible relative to the set of parameter values that satisfy assumptions~\ref{ass:z}, \ref{ass:ER}, \ref{ass:solutions}, \ref{ass:minimal}, and \ref{ass:tail}--\ref{ass:completeness}.
	\end{enumerate}
\end{proposition}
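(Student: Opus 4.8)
The plan is to layer two new identification steps on top of the results already available for this two-player game, and then read off discernibility. Part~(i) is immediate: the maintained assumptions subsume the hypotheses of Proposition~\ref{prop:minimal}, so all of its conclusions carry over unchanged. In particular, Step~1 of that proof already delivers $\beta^0_i$ (normalized to one) and the marginal c.d.f.s $F_{\rand{e}_i}$, Step~2 identifies the relevant $\alpha^0$ parameters for whichever solution concept holds, and Step~3 shows that $S_R$, $S_M$, and $S_C$ are themselves discernible, so the observed distribution reveals which of the three generated the data.

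For part~(ii) I would identify $\rho_0$ first and only then $h_0$. The crucial simplification is that deep in the tail---say as $z_1\to-\infty$---player~$1$ has a strictly dominant action (stay out), so the multiplicity region becomes irrelevant and all three solution concepts in $\bar S$ prescribe the same behavior; the argument is therefore solution-concept-free. I would track how a deterministic joint-outcome probability (e.g.\ the probability of joint entry, which equals $F_{\rand e}$ at thresholds identified in Proposition~\ref{prop:minimal}) vanishes along rays $z_2=\tau z_1+c$ as $z_1\to-\infty$. In that limit the probability tends to zero, but its \emph{rate} of decay along the ray encodes the dependence between $\rand e_1$ and $\rand e_2$. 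Writing the joint law through Assumption~\ref{ass:corr}, the governing ratio involves $f_{\rand e_1}(z_1+\gamma_1)/f_{\rand e_1}(z_1)$ multiplied by a conditional-c.d.f.\ increment of $\rand e_2$ around the moving line $e_2=\tau e_1$. Condition~(ii) of Assumption~\ref{ass:tail} forces this product to vanish whenever $\tau\neq\rho_0$, while condition~(i) guarantees that the surviving limits are monotone and well defined. As the worked normal example shows, the limit is degenerate for every $\tau\neq\rho_0$ and changes character exactly at $\tau=\rho_0$; hence $\rho_0$ is identified as the unique critical ray. Once $\rho_0$ is pinned down, Assumption~\ref{ass:corr} together with the known marginals determines the full joint density $f_{\rand e}$.

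With $f_{\rand e}$ in hand, identification of $h_0$ proceeds exactly as in the proof of Proposition~\ref{prop:dop}, but invoking Assumption~\ref{ass:completeness} in place of normal-family completeness. Under the exclusion restriction (Assumption~\ref{ass:ER}) $h_0$ is a function of $\rand v=\rand z-\rand e$, so the observed conditional distribution satisfies
\[
\Pr(\rand y = y \mid \rand z = z) = \int h_0(y, z - e)\, f_{\rand e}(e)\, de.
\]
If some $h'$ induced the same observed distribution, then $\int [h_0(y,v)-h'(y,v)]\,f_{\rand e}(z-v)\,dv=0$ for all $z$; since $h_0-h'$ is bounded, bounded completeness of $\{f_{\rand e}(\,\cdot-t):t\in\Real^2\}$ forces $h_0=h'$ almost surely. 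Note that this step uses only $f_{\rand e}$ and not the partially identified $\alpha^0$, which is why it survives even when $\alpha^0$ is not point identified under maxmin or collusion.

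Part~(iii) then follows by the template of Theorem~\ref{thm:main}. Suppose $(\theta,h)$ and $(\theta',h')$ both satisfy the maintained assumptions, $h\in S(\theta)\subseteq\bar S(\theta)$, $h'\notin S(\theta')$, yet both induce the same conditional distribution of $\rand y$ given $\rand z$. By part~(i) the data determine a common solution-concept class among $S_R,S_M,S_C$ and fix the identified components of $\theta$, and by part~(ii) they force $h=h'$; as in Theorem~\ref{thm:main} and Proposition~\ref{prop:beyond}, this precludes $h\in S(\theta)$ while $h'\notin S(\theta')$, giving the contradiction. The main obstacle is the identification of $\rho_0$: making the ray argument rigorous requires bounding the remainder terms uniformly in the tail---precisely the role for which Assumption~\ref{ass:tail} is engineered---and verifying that the critical-direction characterization is sharp rather than merely necessary, so that $\tau=\rho_0$ is genuinely pinned down.
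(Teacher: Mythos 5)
Your overall architecture tracks the paper's own proof: part~(ii) identifies $\rho_0$ by a ray/tail argument (this is exactly what Lemma~\ref{lemma:aux} and Lemma~\ref{lemma:multiplicity} deliver, with Assumption~\ref{ass:tail} controlling the multiplicity-region remainder and the jump in the limiting function pinning down $\rho_0$ as the critical direction), then identifies $h_0$ by bounded completeness exactly as in Proposition~\ref{prop:dop}, and part~(iii) runs the Theorem~\ref{thm:main} template. Your description of the tail step is slightly loose---the multiplicity region translates with $z$, so it never becomes ``irrelevant'' pointwise; what is true is that its contribution to the derivative normalized by $f_{\rand{e}_1}(z_1)$ vanishes under Assumption~\ref{ass:tail}---but you acknowledge this at the end, so that is cosmetic.

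The genuine gap is in part~(i), and it propagates into part~(iii). The maintained hypotheses of Proposition~\ref{prop:two-player} do \emph{not} subsume those of Proposition~\ref{prop:minimal}: Assumption~\ref{ass:generic} (that $\alpha^{0}_{1,1}-\alpha^{0}_{1,0}\neq\alpha^{0}_{2,1}-\alpha^{0}_{2,0}$) is \emph{not} assumed here. Conclusions (i)--(iv) of Proposition~\ref{prop:minimal} do carry over, but only because their proofs never use Assumption~\ref{ass:generic}---which is precisely how the paper justifies this step. Step~3 of that proof, however, discriminates among $S_R$, $S_M$, and $S_C$ through the statistics $t_1,t_2$ and relies crucially on $\delta^0_1\neq\delta^0_2$; it is therefore unavailable under the present assumptions. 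Your part~(i) imports Step~3 anyway (``the observed distribution reveals which of the three generated the data''), and your part~(iii) then leans on exactly this claim (``by part~(i) the data determine a common solution-concept class among $S_R,S_M,S_C$''). That chain is broken. The paper's route is the reverse of yours: discrimination among solution concepts is obtained as a \emph{consequence} of point identification of $h_0$---once $h_0$ is known as a function of the index $\rand{v}$ via the completeness step, membership of $h_0$ in $S(\theta)$ for $\theta$ consistent with the identified payoff components can be checked directly, which is what the Theorem~\ref{thm:main} argument exploits. Indeed, the remark immediately following the proposition states that this is the entire point of the result: unlike Proposition~\ref{prop:minimal}, it does \emph{not} use Assumption~\ref{ass:generic}, because point identification of $h_0$ replaces the asymmetry-of-the-multiplicity-region argument. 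To repair your proof, part~(iii) must be rewritten to run only through $h=h'$ (from your completeness step) and the identified components of $\theta$, never invoking Step~3 of Proposition~\ref{prop:minimal}.
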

\begin{proof}
Validity of the conclusions (i)--(iv) of Proposition~\ref{prop:minimal} is trivial, because they do no rely on Assumption \ref{ass:generic}, and all the other assumption from Proposition~\ref{prop:minimal} are satisfied.
%%%
Identification of $\rho_0$ follows from combining Step 1 of the proof of Proposition~\ref{prop:minimal} with Lemma~\ref{lemma:aux}.
%%%
Identification of $h_0$ and discernibility of any $S$ nested into $\bar{S}$ follows from the same arguments used in the proofs of Proposition~\ref{prop:dop} and Theorem~\ref{thm:main} in the main text.
\end{proof}

Note that, unlike Proposition~\ref{prop:minimal}, Proposition~\ref{prop:two-player} does {not} use Assumption~\ref{ass:generic} to discriminate between rationalizability, collusion, and maxmin. 
%%%
The main difference is that under the assumptions of Proposition~\ref{prop:minimal}, $h_0$ may not be point identified, and the only way to discriminate between solution concepts is to use asymmetry of the multiplicity region. 

\subsection{Proof of Proposition~\ref{prop:u}}
Fix some $w$. For notation simplicity we will drop $w$ from the notation. Since conditional on $\rand{w}=w$ the support of $\rand{z}$ is full and $\rand{z}_{i}$ enters only payoffs of player $i$, we can identify the sign of $\beta_{i}$ for every $i$ since 
\[
\lim_{z_{i}\to-\infty}\Pr\left(\rand{y}_i=0|\rand{z}=z\right)=0 \iff \beta_{i,y_i}>0
\]
for all $i$. Without loss of generality we assume that $\beta_{i}>0$ for all $i$ (if $\beta_{i}<0$ we can always use $-z_{i}$ as a covariate). Next we pick any two different players $i$ and $j$, and a profile of actions for all other players $\{y_k\}_{k\in I\setminus\{i,j\}}$. By sending $z_{k}$, $k\in I\setminus\{i,j\}$ either to $+\infty$ or to $-\infty$ we can guarantee that
\[
\lim_{z_k\to C_k,k\in I\setminus\{i,j\}}\Pr(\rand{y}_k=y_k,\:k\in I\setminus\{i,j\}|\rand{z}=z)=1,
\]
where $C_k=+\infty$ if $y_k=1$ and $C_k=-\infty$ if $y_k=0$. Thus we end up having a two player game.
%%%
Applying Proposition~\ref{prop:two-player} we identify $\beta_{i}$, $\beta_j$, $\alpha_{i,y_{-i}}$, $\alpha_{j,y_{-j}}$, and $\Sigma_{ij}$. The conclusion of the proposition follows from the fact that the choice of $w$, $i$, $j$, and $\{y_k\}_{k\in I\setminus\{i,j\}}$ was arbitrary.

%%%%%%%%%%%%%%%%%%%%%%%%%%%%%%%%%%%%%%%%%%%%%%%%%%%%%%%%%%%%%%%%%%%%%%%
%%%%%%%%%%%%%%%%%%%%%%%%%%%%%%%%%%%%%%%%%%%%%%%%%%%%%%%%%%%%%%%%%%%%%%%
\subsection{Proof of Proposition~\ref{prop:nash}}
Suppose that $S_{R}(\theta_0)\neq S_{N}(\theta_0)$.
%%%
We will construct two rationalizable distributions of play $h,h'\in S_{R}(\theta_0)$ such $h\in S_N(\theta_0)$, $h'\not\in S_N(\theta_0)$, and 
\[
%\addtag\label{eqn:proof-nash-A}
\Exp{h(\rand{x},\rand{e})|\rand{x};\theta_0} = \Exp{h'(\rand{x},\rand{e})|\rand{x};\theta_0}\ \as.
\]	 
%%%
Since $F_\rand{e}$ is absolutely continuous, players are almost surely not indifferent between outcomes. 
%%%
Hence, the fact that $S_{R}(\theta_0)\neq S_{N}(\theta_0)$ 
implies that the payoffs of at least two players must depend on their opponents' actions. 
%%%
That is, there must exist $i,j\in I$ and a profile $y_{-ij}\in \{0,1\}^{d_{I}-2}$ such that
\begin{align*}
\alpha^0_{i,(1,y_{-ij})}(\rand{w})\neq \alpha^0_{i,(0,y_{-ij})}(\rand{w})
\quad\text{and}\quad \alpha^0_{j,(1,y_{-ij})}(\rand{w})\neq \alpha^0_{j,(0,y_{-ij})}(\rand{w})  
\end{align*} 
with positive probability.

Since $\beta_0$ is known, we can make each player $k\not\in\{i,j\}$ play the action specified in $y_{-ij}$ with probability 1 by taking limits as $z_k$ goes to either $+\infty$ or $-\infty$.
%%%
Taking such limits for all $k\not\in\{i,j\}$ is {as if} players $i$ and $j$ were playing a two-player game. 
%%%
Hence, we can assume without loss of generality that $d_I=2$,
and $i=1$ and $j=2$ are the only two players. 
%%%
Moreover, for exposition purposes, we will drop $w$ from the notation. 

With these simplifications, the multiplicity region is characterized by 
\[
E(z) = \left\{
e\in \Real^{d_I}\:\big|\enspace
\min\{\alpha^0_{i,1},\alpha^0_{i,0}\} \leq e_i-\beta_i^0z_i \leq \max\{\alpha^0_{i,1},\alpha^0_{i,0}\} \enspace\text{for}\enspace i=1,2
\right\}.
\]
%%%
For $e\in E(z)$, the best response of each player depends on the action of its opponent.
%%%
For example, if $\alpha_{1,0}^0 > \alpha_{1,1}^0$,
then player $1$ prefers $y_1=1$ to $y_1=0$ if $y_2=0$,
and prefers $y_1=0$ if $y_2=1$.
%%%
Moreover, for $e\in E(z)$, the game has either zero or two PNEs,
and one mixed-strategy NE.
%%%
In the mixed-strategy NE, each firm $i$, $i=1,2$, chooses $y_i=1$ with probability
\[
\dfrac{a_{-i,0}^0 +\beta_{-i}^0 z_{-i} - e_{-i}}{a_{-i,0}^0-a_{-i,1}^0} \in (0,1).
\]
%%%
For $e\not\in E(z)$, there is a unique NE almost everywhere. 

Let $h$ be such that the players play the unique NE when $e\not\in E(z)$,
and play the {mixed-strategy} NE when $e\in E(z)$.
%%%
Let $h'$ be given by $h'(z,e) = h(z,e)$ when $e\not\in E(z)$,
and 
\[
h'(y,z,e) = \int\limits_{E(z)} \! h(y,z,\epsilon) f_{\rand{e}}(\epsilon)\,d\epsilon
\]
when $e\in E(z)$.
%%%
By construction, $h\in S_N(\theta_0)$ and 
\[
\Exp{h(\rand{x},\rand{e})|\rand{x};\theta_0} = \Exp{h'(\rand{x},\rand{e})|\rand{x};\theta_0}\ \as.
\]	 
Hence, it only remains to show that $h'$ does {not} belong to $S_{N}(\theta_0)$.

In the multiplicity region, the game has less than four PNEs.
%%%
Moreover, these PNEs are the same for all pairs $(z,e)$ such that $e\in E(z)$,
because they only depend on the sign of $\alpha_{i,0}^0-\alpha_{i,1}^0$, $i=1,2$.
%%%
Consequently, there exists an outcome $y^*$ that is {not} played in any PNE of the 
multiplicity region. 
%%%
Therefore, for $e\in E(z)$, $h(y^*,z,e)$ is the maximum probability of $y^*$
consistent with $S_N(\theta_0)$.
%%%
By construction, for every $z\in Z$, there exists a set with positive Lebesgue measure $\tilde E(z) \subseteq E(z)$ such that $h'(y^*,z,e)>h(y^*,z,e)$ for all $e\in \tilde E(z)$.
%%%
Therefore, $h'\not\in S_{N}(\theta_0)$, and we can conclude that $S_N$ is {not} discernible. 
\hfill\qed

Note that the distribution of play $h'$ in the proof of Proposition~\ref{prop:nash}
does \emph{not} satisfy our exclusion restriction.
%%%
Assumption~\ref{ass:ER} requires that, conditional on $\rand{w}$,
the joint distribution of distribution over outcomes conditional on the payoff indices
does not depend on $\rand{z}$.
%%%
In contrast, $h'$ allows this distribution to be different for every $z\in Z$.

\subsection{Proof of Proposition~\ref{prop:beyond}}
Fix any $w\in W$ and players $i,j\in I$, $i\neq j$. 
%%%
As in the proof of Proposition~\ref{prop:u}, we can turn the many-player game into a two-player game for $i$ and $j$ and any profile $\{y_k\}_{k\in I\setminus\{i,j\}}$.
%%%
Applying Proposition~\ref{prop:two-player}, we can identify $\beta^0_{i}(w)$, $\beta^0_j(w)$, and $\Sigma^0_{ij}(w)$.
%%%
Since the $w$, $i$, and $j$ are arbitrary, we can identify $\beta_0$ and $\Sigma^0$. 
%%%
Assumption~\ref{ass:ER} together with normality of $\rand{e}$ identifies $h_0$.
%%%
Hence, any $S$ nested into $\bar S$ is discernible.
%%%
The lack of identification of $\alpha_0$ under $S_M$ and $S_{C}$ follows from parts~(iii) and~(iv) of Proposition~\ref{prop:minimal}.

%%%%%%%%%%%%%%%%%%%%%%%%%%%%%%%%%%%%%%%%%%%%%%%%%%%%%%%%%%%%%%%%%%%%%%%
%%%%%%%%%%%%%%%%%%%%%%%%%%%%%%%%%%%%%%%%%%%%%%%%%%%%%%%%%%%%%%%%%%%%%%%
\section{Auxiliary Results}
\label{sec:aux}
The following lemma identifies
the marginal effect of $z_i$ (captured by $\beta_i^0$), 
the nonparametric marginal distributions of error terms ($F_{\rand{e}_i}$),
and the correlation between $\rand{e}_1$ and $\rand{e}_2$ in binary games.

\begin{lemma}
	\label{lemma:aux}
	Let $f_{\rand{e}}$ be the p.d.f.\ of $\rand{e}=(\rand{e}_1,\rand{e}_2)$
	and let $p:\Real^2 \to \Real$ be given by
	\[
	p(z)=\int_{\beta_1 z_1+\delta_1}^{\infty}\int_{\beta_2 z_2+\delta_2}^{\infty}
	\!f_{\rand{e}}(e)\,de_2\,de_1 + q(z),
	\]
	with
	\[
	|q(z)| = 
	\int_{\beta_1 z_1+\ubar\gamma_1}^{\beta_1 z_1+\bar\gamma_1}
	\int_{\beta_2 z_2+\ubar\gamma_2}^{\beta_2 z_2+\bar\gamma_2}
	\! g(\beta_1z_1-e_1,\beta_2z_2-e_2)f_{\rand{e}}(e)\,de_2\,de_1,
	\]
	for some unknown parameters $\beta_i\neq 0$, $\delta_i,\bar\gamma_i,\ubar\gamma_i\in\Real$ with $\bar\gamma_i\geq\ubar\gamma_i$, $i=1,2$, and $g:\Real^2\to[0,1]$.
	%%%
	If (i) $\Exp{\rand{e}_i}=0$ and $\Exp{\rand{e}^2_i}=1$, $i=1,2$; 
	and (ii) $f_\rand{e}$ is continuously differentiable and strictly positive on $\Real^2$;
	%(i) $\rand{z}=(\rand{z}_1,\rand{z}_2)$ is independent of $\rand{e}=(\rand{e}_1,\rand{e}_2)$;
	%(ii) $\rand{z}$ is supported on $\Real^{2}$;
	then, each of $\beta_{i}$, $\delta_i$, and the marginal c.d.f.s $F_{\rand{e}_i}$, $i=1,2$, are identified from knowing the function $p$.
	%%%
	If, moreover, 
	(iii) $\rand{e}_2=\sqrt{1-\rho^2}\bm\xi+\rho\rand{e}_1$ \as, where $\bm{\xi}$ is independent from $\rand{e}_1$ and $\rho\in(-1,1)$;
	and (iv) $f_{\rand{e}}$ satisfies the conditions from Assumption~\ref{ass:tail};
	then the correlation parameter $\rho$ is also identified from $p$.
\end{lemma}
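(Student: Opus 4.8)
The plan is to split identification into two independent parts, exploiting in each case that one can drive the thresholds inside $p$ to the boundary of the support so that the unknown correction $q$ becomes asymptotically negligible. For the marginal objects I would first pin down the sign of each $\beta_i$: since $g\in[0,1]$, the term $q$ is dominated by the probability of a bounded rectangle that drifts off to $+\infty$ as $\beta_i z_i\to+\infty$, so $\lim_{z_i\to+\infty}p(z)=0$ if and only if $\beta_i>0$; after relabeling $z_i\mapsto-z_i$ where needed I may assume $\beta_i>0$. Fixing $i=1$ and sending $z_2\to-\infty$ drives the lower limit $\beta_2 z_2+\delta_2\to-\infty$, which simultaneously removes the second threshold from the leading term and pushes the $q$-rectangle out of the support, so that
\[
\lim_{z_2\to-\infty}p(z) = 1-F_{\rand{e}_1}(\beta_1 z_1+\delta_1) =: P_1(z_1),
\]
a known function of $z_1$.

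Reading $1-P_1$ as the c.d.f.\ of $(\rand{e}_1-\delta_1)/\beta_1$ and invoking the normalizations $\Exp{\rand{e}_1}=0$ and $\Exp{\rand{e}_1^2}=1$, its mean equals $-\delta_1/\beta_1$ and its variance equals $1/\beta_1^2$; since $\beta_1>0$ this yields $\beta_1$, then $\delta_1=-\beta_1\cdot(\text{mean})$, and finally $F_{\rand{e}_1}(t)=1-P_1\big((t-\delta_1)/\beta_1\big)$. The symmetric argument (letting $z_1\to-\infty$) identifies $\beta_2$, $\delta_2$, and $F_{\rand{e}_2}$, completing the first part. Having recovered the $\beta_i$, I rescale the covariates so that $\beta_1=\beta_2=1$, which aligns the threshold offsets with the shifts appearing in Assumption~\ref{ass:tail}.

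For the correlation I would use the factor structure (iii) to make the conditional tail explicit. Writing $t_i=z_i+\delta_i$ and differentiating the leading term gives
\[
\partial_{z_1}P_0(z) = -\,f_{\rand{e}_1}(t_1)\,\Pr\!\left(\bm\xi > \tfrac{t_2-\rho t_1}{\sqrt{1-\rho^2}}\right).
\]
Now move along the ray $t_2=\tau t_1$ and let $z_1\to-\infty$: the argument of the conditional tail equals $(\tau-\rho)t_1/\sqrt{1-\rho^2}$, which tends to $-\infty$ when $\tau>\rho$ and to $+\infty$ when $\tau<\rho$. Hence the normalized derivative $\partial_{z_1}P_0/\big({-}f_{\rand{e}_1}(t_1)\big)$ converges to $1$ for slopes above $\rho$ and to $0$ for slopes below $\rho$, exhibiting a sharp transition exactly at $\tau=\rho$. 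Since $f_{\rand{e}_1}$ is already identified and $p$ is observable, locating this transition in $\partial_{z_1}p/\big({-}f_{\rand{e}_1}(t_1)\big)$ would identify $\rho$ as the critical slope.

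The main obstacle is that $\partial_{z_1}p=\partial_{z_1}P_0+\partial_{z_1}q$, and only the magnitude of $q$ is known, so I must show the correction does not blur the transition. This is precisely what Assumption~\ref{ass:tail} secures: the relevant slice of the $q$-rectangle is bounded by
\[
f_{\rand{e}_1}(z_1+\gamma_1)\big[F_{\rand{e}_2\mid\rand{e}_1}(\tau z_1+\bar\gamma_2\mid z_1+\gamma_1)-F_{\rand{e}_2\mid\rand{e}_1}(\tau z_1+\ubar\gamma_2\mid z_1+\gamma_1)\big],
\]
and after dividing by $f_{\rand{e}_1}(t_1)$ condition (ii) forces this ratio to vanish as $z_1\to-\infty$ for almost every $\tau$, while the tail convexity in condition (i) guarantees that the leading slice integral stays of the same order as $f_{\rand{e}_1}(t_1)$ whenever the conditional tail does not degenerate, so the leading term dominates. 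Consequently the observable normalized derivative inherits the same $0/1$ transition as $P_0$, and $\rho$ is read off as the critical slope. The delicate point to verify carefully is this domination in the far tail, uniformly over the band of directions near $\rho$; everything else reduces to the limit computations above.
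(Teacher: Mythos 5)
Your proposal reproduces the paper's own argument almost step for step: the limits $z_{-i}\to\mp\infty$ that simultaneously kill the correction $q$ and the other threshold, the mean/variance normalizations that convert the recovered function $F_{\rand{e}_i}(\beta_i z_i+\delta_i)$ into $\beta_i$, $\delta_i$, and $F_{\rand{e}_i}$, and the ray argument $z_2=\tau z_1$, $z_1\to-\infty$, in which the normalized derivative of the leading term converges to $0$ or $1$ depending on whether $\tau$ is below or above $\rho$. This is exactly the paper's Step 1 together with its auxiliary Lemma~\ref{lemma:multiplicity}. The one place where your reasoning goes astray is the step you yourself flag as the delicate one, and it is precisely the content of Lemma~\ref{lemma:multiplicity}: showing that $\partial_{z_1}q/f_{\rand{e}_1}(z_1)\to 0$ along the ray.

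You assign condition (i) of Assumption~\ref{ass:tail} the role of guaranteeing that ``the leading slice integral stays of the same order as $f_{\rand{e}_1}(t_1)$.'' That is neither needed nor what it does: the leading term's normalized derivative equals $\Pr\bigl(\bm\xi>(\tau-\rho)t_1/\sqrt{1-\rho^2}\bigr)$ exactly, so it is automatically of order one times $f_{\rand{e}_1}$ with no assumption. What actually requires condition (i) is the bound you wrote down for the correction. After the change of variables $t=e-z$, one has $\partial_{z_1}q=\pm\int\!\!\int g(-t)\,\partial_{e_1}f_{\rand{e}}(z+t)\,dt$ over the fixed rectangle $[\ubar\gamma_1,\bar\gamma_1]\times[\ubar\gamma_2,\bar\gamma_2]$, and since $g$ is unknown (as is the sign of $q$), the only way to dominate this by the boundary-slice expression $f_{\rand{e}_1}(z_1+\gamma_1)\bigl[F_{\rand{e}_2|\rand{e}_1}(\tau z_1+\bar\gamma_2|z_1+\gamma_1)-F_{\rand{e}_2|\rand{e}_1}(\tau z_1+\ubar\gamma_2|z_1+\gamma_1)\bigr]$ is to know that $\partial_{e_1}f_{\rand{e}}\geq 0$ on the shifted rectangle once $z_1$ is far enough in the tail --- that is what condition (i) provides. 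Only then does $\bigl|\int\!\!\int g\,\partial_{e_1}f_{\rand{e}}\bigr|\leq\int\!\!\int\partial_{e_1}f_{\rand{e}}$ hold, and the right-hand side integrates (by the fundamental theorem of calculus in $e_1$) to the difference of the two boundary slices, each of which vanishes relative to $f_{\rand{e}_1}(z_1)$ by condition (ii); note also that no uniformity in $\tau$ is needed, since locating the jump of the a.e.-defined limit function in $\tau$ suffices. Without this signing step, $\int\!\!\int|\partial_{e_1}f_{\rand{e}}|$ admits no boundary-term representation and your claimed domination is unjustified. So your architecture matches the paper's, but the completion of the delicate step must be repaired as above.
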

\begin{proof}
	First note that 
	\[
	\lim_{\abs{z_i}\to+\infty} \abs{q(z)}
	\leq \lim_{\abs{z_i}\to+\infty} F_{\rand{e}_i}(\beta_i z_i+\bar\gamma_i) - F_{\rand{e}_i}(\beta_i z_i+\ubar\gamma_i)=0,
	\]
	for all $i$.
	%%%
	Hence, for all $i$,
	\[
	\lim_{z_{-i}\to +\infty}p(z)=\Char(\beta_{-i}<0)[1-F_{\rand{e}_{i}}(\beta_{i}z_i+\delta_i)]
	\]
	and
	\[
	\lim_{z_{-i}\to -\infty}p(z)=\Char(\beta_{-i}>0)[1-F_{\rand{e}_{i}}(\beta_{i}z_i+\delta_i)].
	\]
	Thus, since $f_{\rand{e}}$ is strictly positive on $\Real^2$, we can identify the sign of $\beta_i$ and $p_i(z_i)=:F_{\rand{e}_i}(\beta_i z_i+\delta_i)$, $i=1,2$.
	%%%
	Since $\beta_i\neq 0$ and we know the mean and the variance of $\rand{e}_i$, for $i=1,2$, we can also identify
	\[
	\int_{-\infty}^\infty\! t\,dp_i(t) 
	= \dfrac{1}{\beta_i}\cdot \Exp{\rand{e}_i-\delta_i}
	=-\dfrac{\delta_i}{\beta_i},
	\]
	and
	\[
	\int_{-\infty}^\infty\! t^2\, dp_i(t)
	= \dfrac{1}{\beta_i^2} \cdot \Exp{(\rand{e}_i-\delta_i)^2}
	= \dfrac{1+\delta_i^2}{\beta_i^2},
	\]
	where we used the change of variables $e_i = \beta_i t + \delta_i$.
	%%%
	As a result, since we already learned the sign of $\beta_i$,
	we can identify $\beta_i$ and $\delta_i$ from
	\[
	\beta_i^2 = 
	\left[{\int_{-\infty}^\infty\! t^2\, dp_i(t) 
		- \left(\int_{-\infty}^\infty \!t\,dp_i(t)\right)^2}\right]^{-1}
	\quad\text{and}\quad
	\delta_i = -\beta_i \int_{-\infty}^\infty\! t\,dp_i(t).
	\]
	
	If $q(z)=0$ for all $z$, 
	then we identify the joint distribution of $\rand{e}=(\rand{e}_1,\rand{e}_2)$ since $\beta_i$ and $\delta_i$, $i=1,2$, are identified and
	\[
	\Pr(\rand{e}_1\geq t_1,\rand{e}_2\geq t_2)=p\left(\dfrac{t_1-\delta_1}{\beta_1},\dfrac{t_2-\delta_2}{\beta_2}\right)
	\]
	for all $t_1,t_2\in \Real$.
	%%%
	Note that, in this case, we do not need to invoke Lemma~\ref{lemma:multiplicity}.
	%%%
	If $q(z)\neq 0$ for some $z$, we can still identify the marginal distribution of the error terms 
	given that
	\[
	F_{\rand{e}_i}(t_i) = p_i\left( \dfrac{t_i-\delta_i}{\beta_i}\right).
	\] 
	
	It only remains to identify $\rho$ in the case $q(z)\neq 0$ for some $z$.
	%%%
	We can always rescale and shift $z_i$, $i=1,2$.
	%%%
	Hence, we can assume without loss of generality that $\beta_i=1$ and $\delta_i=0$, $i=1,2$.
	%%%
	Then, it follows from Lemma~\ref{lemma:multiplicity} and the independence between $\bm\xi$ and $\rand{e}_1$
	that
	\begin{align*}
	1+\lim_{z_1\to-\infty}& \left.\dfrac{\partial_{z_1} p(z)}{f_{\rand{e}_1}(z_1)} \:\right|_{z_2=\tau z_1}
	= \lim_{z_1\to-\infty}F_{\rand{e}_2|\rand{e}_1}(\tau z_1|z_1)=\lim_{z_1\to-\infty}F_{\bm{\xi}}\left(\dfrac{(\tau-\rho)z_{1}}{\sqrt{1-\rho^2}}\right).
	\end{align*}
	for almost all $\tau\in [-1,1]$.
	%%%
	Hence, $\rho$ is the only point of jump discontinuity of the function
	$\psi:[-1,1]\to \Real$ given by
	\[
	\psi(\tau)=
	1+\lim_{z_1\to-\infty} \left. \dfrac{\partial_{z_1} p(z)}{f_{\rand{e}_1}(z_1)} \:\right|_{z_2=\tau z_1}.
	\]
	Hence, $\rho$ is also identified.
\end{proof}

%%%%%%%%%%%%%%%%%%%%%%%%%%%%%%%%%%%%%%%%%%%%%%%%%%%%%%%%%%%%%%%%%%%%%%
%%%%%%%%%%%%%%%%%%%%%%%%%%%%%%%%%%%%%%%%%%%%%%%%%%%%%%%%%%%%%%%%%%%%%%
\begin{lemma}\label{lemma:multiplicity}
	Let $\hat{p}:Z\to \Real$ be given by
	\[
	\hat{p}(z)=\int_{z_1}^{\infty}\int_{z_2}^{\infty} \!f_{\rand{e}}(e)\,de_2\,de_1 + \hat{q}(z),
	\]
	with
	\[
	\abs{\hat{q}(z)} = 
	\int_{z_1+\ubar\gamma_1}^{z_1+\bar\gamma_1}\int_{z_2+\ubar\gamma_2}^{z_2+\bar\gamma_2}
	\! g(z_1-e_1,z_2-e_2)f_{\rand{e}}(e)\,de_2\,de_1,
	\]
	where $\gamma, \tilde\gamma $ are such that $\gamma_i\leq \tilde\gamma_i$, $i=1,2$;
	$f_{\rand{e}}$ is a bivariate p.d.f.\ satisfying Assump\-tion~\ref{ass:tail}; 
	and $g:\Real^2\to[0,1]$ is an arbitrary function.
	%%%
	Then,
	\[
	1+\lim_{z_1\to -\infty} 
	\left.\dfrac{\partial_{z_1} \hat{p}(z) }{f_{\rand{e}_1}(z_1)}\:\right|_{z_2=\tau z_1}
	= \lim_{z_1\to-\infty}F_{\rand{e}_2|\rand{e}_1}(\tau z_1|z_1),
	\]
	for almost all $\tau\in(-1,1)$.
\end{lemma}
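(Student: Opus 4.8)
The plan is to separate $\hat p$ into its product-tail piece $P(z)=\int_{z_1}^{\infty}\int_{z_2}^{\infty}f_{\rand e}(e)\,de_2\,de_1$ and the correction $\hat q$, to show that the asserted identity is produced entirely by $P$, and that the $\hat q$-contribution vanishes after dividing by $f_{\rand e_1}(z_1)$. For $P$ I would differentiate under the outer integral: because $z_1$ is the lower endpoint of the outer integral, $\partial_{z_1}P(z)=-\int_{z_2}^{\infty}f_{\rand e}(z_1,e_2)\,de_2=-f_{\rand e_1}(z_1)\bigl[1-F_{\rand e_2|\rand e_1}(z_2|z_1)\bigr]$. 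Dividing by $f_{\rand e_1}(z_1)$ and adding $1$ gives exactly $F_{\rand e_2|\rand e_1}(z_2|z_1)$; evaluating at $z_2=\tau z_1$ and letting $z_1\to-\infty$ reproduces the right-hand side. This step is a routine computation.

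The substantive step is to show $\partial_{z_1}\hat q(z)/f_{\rand e_1}(z_1)\to 0$ along $z_2=\tau z_1$ for almost every $\tau$. First I would change variables $u_i=e_i-z_i$, which turns the limits of integration into the fixed rectangle $[\ubar\gamma_1,\bar\gamma_1]\times[\ubar\gamma_2,\bar\gamma_2]$ and makes $z_1$ enter only through $f_{\rand e}(z_1+u_1,z_2+u_2)$, so that I can differentiate under the integral with no boundary terms and obtain the integrand $g(-u_1,-u_2)\,\partial_{e_1}f_{\rand e}(z_1+u_1,z_2+u_2)$. Along the ray $z_2=\tau z_1$ the evaluation point satisfies $e_2=\tau e_1+(u_2-\tau u_1)$ with $u_2-\tau u_1$ confined to a bounded interval, so Assumption~\ref{ass:tail}(i) ensures $\partial_{e_1}f_{\rand e}\ge 0$ once $z_1$ is sufficiently negative. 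Using $0\le g\le 1$ together with this sign, I would bound the inner $u_1$-integral by the fundamental theorem of calculus, $\int_{\ubar\gamma_1}^{\bar\gamma_1}\partial_{e_1}f_{\rand e}(z_1+u_1,z_2+u_2)\,du_1=f_{\rand e}(z_1+\bar\gamma_1,z_2+u_2)-f_{\rand e}(z_1+\ubar\gamma_1,z_2+u_2)\le f_{\rand e}(z_1+\bar\gamma_1,z_2+u_2)$.

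Integrating what remains over $u_2$ and factoring $f_{\rand e}=f_{\rand e_1}f_{\rand e_2|\rand e_1}$ then yields
\[
\frac{\abs{\partial_{z_1}\hat q(z)}}{f_{\rand e_1}(z_1)}\bigg|_{z_2=\tau z_1}
\le \frac{f_{\rand e_1}(z_1+\bar\gamma_1)}{f_{\rand e_1}(z_1)}\Bigl[F_{\rand e_2|\rand e_1}(\tau z_1+\bar\gamma_2|z_1+\bar\gamma_1)-F_{\rand e_2|\rand e_1}(\tau z_1+\ubar\gamma_2|z_1+\bar\gamma_1)\Bigr],
\]
which is exactly the quantity that Assumption~\ref{ass:tail}(ii) forces to $0$ (with $\gamma_1=\bar\gamma_1$) for almost every $\tau\in(-1,1)$. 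Adding the two parts gives the claim. The main obstacle is precisely this $\hat q$ estimate: one must use the tail sign of $\partial_{e_1}f_{\rand e}$ so that the factor $g\in[0,1]$ can be discarded without losing control, and then line the resulting ratio up with the exact hypothesis of Assumption~\ref{ass:tail}(ii). A secondary point is to justify that $\hat q$ has constant sign in the tail --- which holds in every application, where $\hat q$ equals $\pm$ the integral defining $\abs{\hat q}$ with a fixed sign --- so that differentiating $\abs{\hat q}$ under the integral is legitimate and delivers the displayed bound.
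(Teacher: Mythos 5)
Your proof is correct and follows essentially the same route as the paper's: the same split of $\hat p$ into the product-tail term and $\hat q$, the same change of variables onto a fixed rectangle, Assumption~\ref{ass:tail}(i) to sign $\partial_{e_1}f_{\rand{e}}$ along the ray so the factor $g\in[0,1]$ can be discarded, and Assumption~\ref{ass:tail}(ii) to send the resulting ratio to zero, plus the routine Leibniz computation for the leading term. The only (harmless) difference is that you drop the nonpositive boundary term $-f_{\rand{e}}(z_1+\ubar\gamma_1,\cdot)$ and thus invoke Assumption~\ref{ass:tail}(ii) once with $\gamma_1=\bar\gamma_1$, whereas the paper keeps both boundary terms and applies it to each.
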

\begin{proof}
	First, using the change of variables $t_i = e_i-z_i$, $i=1,2$, we get that
	\[
	\abs{\hat{q}(z)} = \int_{\ubar\gamma_{1}}^{\bar\gamma_{1}}\int_{\ubar\gamma_{2}}^{\bar\gamma_{2}}
	\! g(-t_1,-t_2) f_{\rand{e}}(t_1+z_1,t_2+z_2) \, dt_2\, dt_1.
	\]
	Next, since $g$ takes values between zero and one and $f_{\rand{e}}$ is nonnegative, 
	\begin{align*}
	&\partial_{z_1} \abs{\hat{q}(z)}=\abs{\partial_{z_1} \hat{q}(z)}
	 = \int_{\ubar\gamma_{1}}^{\bar\gamma_{1}}\int_{\ubar\gamma_{2}}^{\bar\gamma_{2}}
	\! g(-t_1,-t_2) \partial_{z_1} f_\rand{e}(t_1+z_1,t_2+z_2)\,dt_2\,dt_1\\
	& \leq \int_{\ubar\gamma_{1}}^{\bar\gamma_{1}}\int_{\ubar\gamma_{2}}^{\bar\gamma_{2}}
	\!\abs{\partial_{z_1}f_\rand{e}(t_1+z_1,t_2+z_2)}\,dt_2\,dt_1
	 \leq \int_{z_1+\ubar\gamma_{1}}^{z_1+\bar\gamma_{1}}\int_{z_2+\ubar\gamma_{2}}^{z_2+\bar\gamma_{2}}
	\!\abs{\partial_{e_1}f_\rand{e}(e_1,e_2)}\,de_1\,de_2.
	\end{align*}
	
	Fix any $\tau\in (-1,1)$ and set $z_2=\tau z_1$.
	%%%
	For all $e_1 \in [z_1+\ubar\gamma_{1},z_1+\bar\gamma_{1}]$ we have 
	\begin{align*}
        \tau e_1-\max\left\{\tau\ubar\gamma_{1},\tau\bar\gamma_{1}\right\}\leq \tau z_1\leq \tau e_1-\min\left\{\tau\ubar\gamma_{1},\tau\bar\gamma_{1}\right\}.
	\end{align*}
	Moreover, for all $e_2 \in [z_2+\ubar\gamma_{1},z_2+\bar\gamma_{1}]$ we have
	\[
	\tau z_1+\ubar\gamma_{2}\leq e_2\leq \tau z_1+\bar\gamma_{2}.
	\]
	Combining both sets of inequalities yields 
	\[
\tau e_1-\max\{\tau\ubar\gamma_{1},\tau\bar\gamma_{1}\}+\ubar\gamma_{2}\leq e_2\leq \tau e_1-\min\{\tau\ubar\gamma_{1},\tau\bar\gamma_{1}\}+\bar\gamma_{2}.
\]
	
	Let $\ubar e=\ubar\gamma_{2}-\max\{\tau\ubar\gamma_{1},\tau\bar\gamma_{1}\}$ and $\bar e=\bar\gamma_{2}-\min\{\tau\ubar\gamma_{1},\tau\bar\gamma_{1}\}$.
	%%%
	Condition~(i) of Assumption~\ref{ass:tail} then implies that there exists some $e^*$ such that for all $e$ in the integration region, if $e_1<e^*$, then $f_\rand{e}(e)\geq 0$.
	%%%
	Therefore, there exists some $z^*$ such that for all $z_1<z^*$
	\begin{align*}
	\partial_{z_1} \abs{\hat{q}(z)}\, \big|_{z_2=\tau z_1}
	& \leq \left. \int_{\ubar\gamma_1}^{\bar\gamma_1}\int_{\ubar\gamma_2}^{\bar\gamma_2}
	\!\partial_{z_1} f_\rand{e}(t_1+z_1,t_2+z_2)\,dt_2\,dt_1 \:\right|_{z_2=\tau z_1}\\
	& = \left. \partial_{z_1} \int_{z_1+\ubar\gamma_{1}}^{z_1+\bar\gamma_{1}}
	\int_{z_2+\ubar\gamma_{2}}^{z_2+\bar\gamma_{2}}
	\! f_\rand{e}(e) \,de_2\,de_1 \:\right|_{z_2=\tau z_1}\\
	& = f_{\rand{e}_1}(z_1+\bar\gamma_1)\int_{\tau z_1+\ubar\gamma_2}^{\tau z_1+\bar\gamma_2}
	\! f_{\rand{e}_2|\rand{e}_1}(e_2|z_1+\bar\gamma_1)\, de_2 \\
	& \qquad - f_{\rand{e}_1}(z_1+\ubar\gamma_1) \int_{\tau z_1+\ubar\gamma_2}^{\tau z_1+\bar\gamma_2}
	\! f_{\rand{e}_2|\rand{e}_1}(e_2|z_1+\ubar\gamma_1)\, de_2.
	\end{align*}
	Note that, for $\gamma_1\in \{\bar\gamma_1,\ubar\gamma_1\}$ we have that
	\[
	\int_{\tau z_1+\ubar\gamma_2}^{\tau z_1+\bar\gamma_2}
	\! f_{\rand{e}_2|\rand{e}_1}(e_2|z_1+\gamma_1)\, de_2
	= F_{\rand{e}_2|\rand{e}_1}(\tau z_1+\bar\gamma_2|z_1+\gamma_1)
	-F_{\rand{e}_2|\rand{e}_1}(\tau z_1+\ubar\gamma_2|x_1+\gamma_1).
	\]
	Hence, applying condition~(ii) of Assumption~\ref{ass:tail}, it follows that
	\[
	\left. \lim_{z_1\to -\infty} 
	\dfrac{\partial_{z_1} \hat{q}(z)}{f_{\rand{e}_1}(z_1)} \,\right|_{z_2=\tau z_1} = 0,
	\]
	for almost every $\tau\in(-1,1)$.
	%%%
	Finally, the result follows from the fact that
	\[
	\partial_{z_1}\left(\int_{z_1}^{\infty}\int_{z_2}^{\infty} \! f_{\rand{e}}(e)\, de_2\,de_1 \right)
	= - f_{\rand{e}_1}(z_1) \left[1-F_{\rand{e}_2|\rand{e}_1}(z_2|z_1)\right].
	\qedhere
	\]
\end{proof}

\clearpage
%%%%%%%%%%%%%%%%%%%%%%%%%%%%%%%%%%%%%%%%%%%%%%%%%%%%%%%%%%%%%%%%%%%%%%%
%%%%%%%%%%%%%%%%%%%%%%%%%%%%%%%%%%%%%%%%%%%%%%%%%%%%%%%%%%%%%%%%%%%%%%%
\section{Online Supplementary Materials}

\subsection{Games with multiple actions}
\label{sec:app-multi}

Our results can be generalized to games with more than two actions.
%%%
Suppose that firms choose actions from $Y_i = \{1,\ldots,Y_{d_Y}\}$, $d_Y<+\infty$.
%%%
Player $i$'s payoffs from outcome $y$ are given by
\begin{align} 
\alpha_{i,y}(\rand{w}) + [\beta_{i,y_i}(\rand{w}) \rand{z}_{i,y_i} - \rand{e}_{i,y_i}].
\end{align}
Note that in contrast to the main that now we have an action-specific covariate and shock for every firm. 
\par
The following assumption is a standard location and scale normalizations of the payoffs.
\begin{assumption}\label{ass:multi} {}\
	\begin{enumerate}
		\item $\alpha_{i,(0,y_{-i})}(w)=\beta_{i,0}(w)=0$ for all $i$, $y_{-i}$, and $w$; $\rand{e}_{i,0}=0\:\as$ for all $i$.
		\item $\beta_{i,y_i}(w)\neq 0$ for all $i$, $y_i\neq 0$, and $w$.
		%\item The variance of $\rand{e}_{i,1}$ is equal to $1$ for all $i$;
	\end{enumerate}
\end{assumption}
Let $\rand{z}=(\rand{z}_{i,y_i})_{i\in I,y_i\in Y_{i}\setminus\{0\}}$ be a $d_Z=\sum_{i}d_{Y_i}$-dimensional vector of payoff relevant action-specific covariates; $x=(z\tr,w\tr)\tr$ be the vector of all observed covariates; and $\rand{e}=(\rand{e}_{i,y_i})_{i\in I,y_i\in Y_{i}\setminus\{0\}}$ be a vector of payoff shocks.
%%%
We allow shocks to be correlated and we impose no restrictions on the sign of $\alpha_{i,y}(\blank)$. 
\par
We group all payoff parameters and $\Sigma(\blank)$ into a single parameter $\theta\in\Theta$.

\begin{proposition}\label{prop:multi}
	Under assumptions~\ref{ass:z}--\ref{ass:rat}, and~\ref{ass:multi}, both $\theta_{0}$ and $h_{0}$ are identified, and any solution concept 
	$S$ nested into $S_R(\theta)$ is discernible relative to the set of parameters that satisfy assumptions~\ref{ass:z}--\ref{ass:ER}, and~\ref{ass:multi}.
\end{proposition}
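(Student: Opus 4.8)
The plan is to follow the same three-step architecture that underlies Theorem~\ref{thm:main}: first identify the full parameter vector $\theta_0$, then identify the distribution of play $h_0$ from $\theta_0$, and finally deduce discernibility from these two identification results. Since Propositions~\ref{prop:two-player} and~\ref{prop:dop} and Theorem~\ref{thm:main} are already available, the only genuinely new work is the reduction of the many-player, many-action game to the two-player binary games to which Proposition~\ref{prop:two-player} applies.

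For the identification of $\theta_0$, I would exploit the action-specific covariates. Under Assumptions~\ref{ass:z} and~\ref{ass:multi} each action $y_i\neq 0$ of player $i$ carries its own relevant covariate $\rand{z}_{i,y_i}$ with full support, so the sign of $\beta^0_{i,y_i}$ is recovered from the one-sided limits of the choice probabilities exactly as in the proof of Proposition~\ref{prop:u}. Fix a target pair of players $i\neq j$, target actions $y_i,y_j\neq 0$, and an arbitrary profile $\{y_k\}_{k\neq i,j}$. Sending $\rand{z}_{i,y_i'}\to\pm\infty$ (according to the sign of $\beta^0_{i,y_i'}$) for every $y_i'\notin\{0,y_i\}$ makes those actions strictly dominated for player $i$, leaving $i$ to choose effectively between $0$ and $y_i$; doing the same for $j$, and sending the covariates of every $k\neq i,j$ to the limits that fix $k$ at $y_k$ with probability approaching one, collapses the game to a two-player binary game. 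Its shock pair $(\rand{e}_{i,y_i},\rand{e}_{j,y_j})$ is a bivariate-normal marginal of $\rand{e}$ by Assumption~\ref{ass:e}, so all distributional hypotheses of Proposition~\ref{prop:two-player} hold; applying that proposition identifies $\beta^0_{i,y_i}$, $\beta^0_{j,y_j}$, the relevant differences of the $\alpha_{0}$ parameters, and the correlation $\Sigma^0_{(i,y_i),(j,y_j)}$. Because the pair, the target actions, and the fixed profile are arbitrary, assembling these recovers all of $\beta_0$ and $\alpha_0$ together with every off-diagonal entry of $\Sigma^0$ (the diagonal being normalized to one by Assumption~\ref{ass:e}), so $\theta_0$ is identified.

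With $\theta_0$ known, identification of $h_0$ is immediate from the argument of Proposition~\ref{prop:dop}: under Assumption~\ref{ass:ER} the distribution of play depends on $(\rand{x},\rand{e})$ only through $\rand{w}$ and the exclusion index, whose conditional laws given $\rand{x}=x$ form a multivariate-normal, and hence boundedly complete, exponential family, so any two distributions of play inducing the same observed conditional choice probabilities coincide almost surely. Discernibility of any $S$ nested into $S_R$ then follows by the contradiction argument of Theorem~\ref{thm:main}: a pair $(\theta,h)$ with $h\in S(\theta)\subseteq S_R(\theta)$ is automatically rationalizable and therefore subject to the two identification results, which force any observationally equivalent pair to coincide with it, contradicting the requirement that the second pair lie outside $S$. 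The main obstacle is making the reduction step fully rigorous---showing that the covariate limits drive the probability of every dominated action to zero uniformly enough that the limiting choice probabilities are exactly those of the reduced binary game, and that marginalizing to $(\rand{e}_{i,y_i},\rand{e}_{j,y_j})$ preserves both the zero-mean, unit-variance normalization and the rationalizable-play structure required by Proposition~\ref{prop:two-player}.
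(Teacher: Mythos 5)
Your proposal is correct and follows essentially the same route as the paper's proof: reduce the many-action, many-player game to two-player binary games by sending the action-specific covariates $z_{i,y_i}$ to $\pm\infty$, apply Proposition~\ref{prop:two-player} to identify the payoff parameters, identify $h_0$ via the bounded-completeness argument of Proposition~\ref{prop:dop}, and conclude discernibility as in Theorem~\ref{thm:main}. Your write-up is in fact more explicit than the paper's one-sentence reduction, spelling out how non-target actions are made dominated and how the normal shock structure is preserved under marginalization, but the underlying argument is identical.
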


\begin{proof}
	Similar to the proof of Proposition~\ref{prop:u} we can turn a game with many actions to a game with two actions by sending $z_{i,y_i}$ to $+\infty$ or $-\infty$, and then apply Proposition~\ref{prop:two-player} to identify the payoff parameters. Then similarly to the proof of Proposition~\ref{prop:dop}, identification of $h_0$ follows from completeness of the exponential family of distributions. The latter automatically implies discernibility of Nash solution concept in rationalizability.
\end{proof}

\subsection{Proof of Nondiscernibility of PNE and SAA}
\label{sec:eg-proof}
	Let $f_\rand{e}$ denote the p.d.f.\ of $\rand{e}$. 
	%%%
	Our assumptions imply that $f_{\rand{e}}(e_1,e_2) > 0$ and $f_{\rand{e}}(e_1,e_2)=f_{\rand{e}}(e_2,e_1)$ almost everywhere on $\Real^2$.
	For each possible outcome $y\in\{0,1\}^2$, let $p_{\mathrm{PNE}}(y;\eta')$ and $p_{\mathrm{SAA}}(y;\eta)$
	denote the probabilities of the outcome according to each of the two solution concepts under consideration. 
	
	Fix any parameter value $\eta\geq 0$.
	We will show that there exists some $\eta'\geq 0$ such that $p_{\mathrm{PNE}}(y;\eta')=p_{\mathrm{SAA}}(y;\eta)$ for every possible outcome $y$.
	%%%
	If $\eta=0$, then we can simply set $\eta'=0$. 
	%%%
	Hence, for the rest of the proof, we assume that $\eta>0$.
	
	On one hand, if $\eta'=\eta$, then 
	\begin{align*}
		p_\mathrm{PNE}((0,0);\eta') 
			& = \int_{\eta'}^\infty\int_{\eta'}^\infty \! f_\rand{e}(e_1,e_2)\,de_2\,de_1\\
			& < \int_{\eta}^\infty\int_{\eta}^\infty \! f_\rand{e}(e_1,e_2)\,de_2\,de_1
			 	+ \int_{0}^\eta\int_{0}^\eta \! f_\rand{e}(e_1,e_2)\,de_2\,de_1\\
			& = p_\mathrm{SAA}((0,0);\eta). 
	\end{align*}
	(See Figure \ref{fig:eg}).
	On the other hand, if $\eta' = 0$, then 
	\begin{align*}
	p_\mathrm{PNE}((0,0);\eta') 
		& = \int_{0}^\infty\int_{0}^\infty \! f_\rand{e}(e_1,e_2)\,de_2\,de_1\\
		& > \int_{\eta}^\infty\int_{\eta}^\infty \! f_\rand{e}(e_1,e_2)\,de_2\,de_1
			+ \int_{0}^\eta\int_{0}^\eta \! f_\rand{e}(e_1,e_2)\,de_2\,de_1\\
		& = p_\mathrm{SAA}((0,0);\eta). 
	\end{align*}
	Since $p_\mathrm{PNE}((0,0);\eta')$ is continuous in $\eta'$, there exists some 
	$\eta'\in(0,\eta)$ such that $p_\mathrm{PNE}((0,0);\eta') = p_\mathrm{SAA}((0,0);\eta)$. 
	Fix such $\eta'$.
	
	Since $p_\mathrm{PNE}((1,1);\eta') = p_\mathrm{SAA}((1,1);\eta)$ and there are only four possible outcomes
	it follows that
	\[
	p_\mathrm{PNE}((1,0);\eta')+p_\mathrm{PNE}((0,1);\eta') 
	= p_\mathrm{SAA}((1,0);\eta)+p_\mathrm{SAA}((0,1);\eta).
	\]

	Now, we will show that $p_\mathrm{PNE}((1,0);\eta')=p_\mathrm{PNE}((0,1);\eta') $
	and $p_\mathrm{SAA}((1,0);\eta)=p_\mathrm{SAA}((0,1);\eta)$.
	%%%
	This implies that the probabilities of all outcomes are the same under both solution concepts.
	%%%	
	For PNE, we have that 
	\begin{align*}
		p_\mathrm{PNE}((1,0);\eta')
			& = \int_{-\infty}^0\int_0^\infty \! f_\rand{e}(e_1,e_2)\,de_2\,de_1 
			 	+ \int_{0}^{\eta'}\int_{e_1}^{\eta'} \! f_\rand{e}(e_1,e_2)\,de_2\,de_1 \\
			& = \int_{-\infty}^0\int_0^\infty \! f_\rand{e}(e_2,e_1)\,de_1\,de_2 
				+ \int_{0}^{\eta'}\int_{e_2}^{\eta'} \! f_\rand{e}(e_2,e_1)\,de_1\,de_2\\
			& = \int_{-\infty}^0\int_0^\infty \! f_\rand{e}(e_1,e_2)\,de_1\,de_2 
				+ \int_{0}^{\eta'}\int_{e_2}^{\eta'} \! f_\rand{e}(e_1,e_2)\,de_1\,de_2\\
			& = p_\mathrm{PNE}((0,1);\eta'),
	\end{align*} 
	where the second equality follows from using the change of variables $(e_1,e_2)\to(e_2,e_1)$,
	and the third one from the symmetry fo $f_\rand{e}$.
	%%%
	The argument for SAA is completely analogous. \hfill\qed

\subsection{Alternative Entry Subsidy for the Motivating Example}\label{sec:subsidy}

A form of subsidy that is more common in practice consists of 
giving a lump sum subsidy $\hat\tau > 0$ to {any} firm that enters a market with some observable characteristics
(see, e.g., \cite{goolsbee02}). 
%%%
Under the PNE assumption,
every market that would be served without the policy would also be served with the policy. 
%%%
Hence, the policy has an unambiguously positive effect (abstracting from the cost). 
%%%
However, this need not be the case under SAA. 

\begin{proposition}
	Suppose that firms profits are given by 
	\[
	\bm\pi_i(y) = y_i\cdot\big[ \alpha + \eta(1-y_{-i}) - \rand{e}_i \big],
	\]
	firms make entry decisions in accordance with the SAA model, 
	and $\rand{e}$ is
	normally distributed with zero mean and the identity matrix as a covariance matrix.
	%	%%%
	There exists an open set $\Xi \subseteq \Real^2$ and a threshold $\bar\tau$ such that 
	if $(\alpha,\eta)\in \Xi$ and $\hat\tau < \bar\tau$,
	then the probability that a market is not served is increasing in the size of the 
	subsidy. 
\end{proposition}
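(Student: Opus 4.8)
The plan is to reduce the statement to a one-dimensional monotonicity question about an explicit function of the subsidy and then exhibit parameters for which monotonicity holds. First I would describe the best-response structure of the subsidized game. With the lump-sum subsidy $\hat\tau$, firm $i$'s payoff from entering is $\alpha + \eta(1-y_{-i}) - \rand{e}_i + \hat\tau$, so entering is strictly dominant when $\rand{e}_i < a := \alpha + \hat\tau$, staying out is strictly dominant when $\rand{e}_i > b := \alpha + \eta + \hat\tau$, and the best response depends on the opponent when $a < \rand{e}_i < b$. Under SAA the market is unserved (outcome $(0,0)$) precisely on two regions: the no-entry region $\{\rand{e}_1 > b,\ \rand{e}_2 > b\}$, where staying out is dominant for both firms, and the multiplicity region $[a,b]^2$, where each firm has multiple rationalizable actions and the maxmin (ambiguity-averse) choice is to stay out. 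On every other region iterated dominance pins down a unique outcome in which at least one firm enters, so the market is served.

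Since $\rand{e}_1$ and $\rand{e}_2$ are independent standard normal, the probability that the market is unserved is
\[
P(\hat\tau) = \big[1 - \Phi(b)\big]^2 + \big[\Phi(b) - \Phi(a)\big]^2,
\qquad a = \alpha + \hat\tau, \quad b = \alpha + \eta + \hat\tau.
\]
Both $a$ and $b$ are increasing in $\hat\tau$ with unit slope, so differentiating gives
\[
P'(\hat\tau) = -2\phi(b)\big[1 - \Phi(b)\big] + 2\big[\Phi(b) - \Phi(a)\big]\big[\phi(b) - \phi(a)\big].
\]
The first term is the (beneficial) shrinkage of the no-entry region; the second is the effect of sliding the multiplicity square rightward along the diagonal. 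The goal is to find parameters making $P'(\hat\tau) > 0$ on an interval $[0,\bar\tau)$.

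The key step is to show $P'(0) > 0$ for suitable $(\alpha,\eta)$. The plan is to place the multiplicity square in the left tail, so that moving it toward the mode increases its mass faster than the no-entry region shrinks. Concretely, fix the midpoint $m := \alpha + \eta/2 < 0$ and let $\eta \to \infty$, so that at $\hat\tau = 0$ one has $a = m - \eta/2 \to -\infty$ and $b = m + \eta/2 \to +\infty$. Then $\Phi(b) - \Phi(a) \to 1$, and because $|b| = \eta/2 - |m| < \eta/2 + |m| = |a|$ one has $\phi(a)/\phi(b) = \exp(-\eta|m|) \to 0$, so $\phi(b) - \phi(a) \sim \phi(b)$ and the multiplicity term behaves like $2\phi(b)$. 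By the Mills-ratio bound $1 - \Phi(b) \sim \phi(b)/b$, the no-entry term is $-2\phi(b)[1-\Phi(b)] = o(\phi(b))$. Hence $P'(0) \sim 2\phi(b) > 0$ for all large enough $\eta$; alternatively one can simply verify $P'(0) > 0$ numerically at, say, $\alpha = -6$, $\eta = 10$ (so $a = -6$, $b = 4$).

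Finally, $P'$ is jointly continuous in $(\alpha, \eta, \hat\tau)$, being a composition of $\phi$ and $\Phi$. Since $P'$ evaluated at the parameters found above (with $\hat\tau = 0$) is strictly positive, there is an open neighborhood of that point on which $P' > 0$. Choosing $\Xi$ to be an open set of $(\alpha,\eta)$-values and $\bar\tau > 0$ small enough that $\Xi \times [0,\bar\tau)$ lies inside this neighborhood yields $P'(\hat\tau) > 0$ for all $(\alpha,\eta) \in \Xi$ and $\hat\tau \in [0,\bar\tau)$, so the probability of no service is increasing in the subsidy. The main obstacle is the third step: identifying the tail configuration of the multiplicity region in which the adverse (multiplicity-growth) effect provably dominates the beneficial (no-entry-shrinkage) effect, which is exactly what the comparison of Gaussian tail rates delivers.
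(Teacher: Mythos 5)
Your proposal is correct and follows essentially the same route as the paper: the same decomposition of the unserved-market probability into the no-entry square and the multiplicity square, the same expression for $P(\hat\tau)$ and $P'(\hat\tau)$, the same tail-domination argument showing $P'(0)>0$, and the same continuity step to get an open set $\Xi$ and threshold $\bar\tau$. The only difference is the limiting parametrization used to place the multiplicity square in the tail (the paper takes $\alpha\to-\infty$ with $\eta=-\alpha+\sqrt{-\alpha}$, you fix the midpoint $\alpha+\eta/2<0$ and send $\eta\to\infty$), which is immaterial.
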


\begin{proof}
	Under strategic ambiguity
	there is no entry if either $\rand{e}_i > \alpha +\eta+\hat\tau$ for $i=1,2$,
	or $\alpha + \tau < \rand{e}_i < \alpha +\eta+\hat\tau $ for $i=1,2$ 
	(See Figure~\ref{fig:eg}).
	%%%
	Hence, the probability that a market is not served as a function of $\hat\tau$ is given by 
	\begin{align}
	P(\hat\tau) & = \big[ 1 - \Phi(\alpha + \eta + \hat\tau)\big]^2
	+ \big[\Phi(\alpha + \eta + \hat\tau) - \Phi(\alpha + \hat\tau)\big]^2.
	\end{align}
	Taking derivatives
	\begin{align}
	P'(\hat\tau) & = -2 \phi(\alpha + \eta + \hat\tau)\big[ 1 - \Phi(\alpha + \eta + \hat\tau)\big] \ldots\nonumber\\
	& \ldots\ +2 \big[\phi(\alpha + \eta + \hat\tau) - \phi(\alpha + \hat\tau)\big]
	\cdot \big[\Phi(\alpha + \eta + \hat\tau) - \Phi(\alpha + \hat\tau)\big].
	\end{align}
	Evaluating when $\hat\tau=0$, $\alpha < 0$, and $\eta=-\alpha + \sqrt{-\alpha}$ yields
	\begin{align}
	\dfrac{P'(0)}{2 \phi(-\alpha)} & = %2 \phi(-z) \cdot \left(
	\left[\Phi\left(\sqrt{-\alpha}\right)-1\right]
	+ \left[1 - \dfrac{\phi(\alpha)}{\phi\left(\sqrt{-\alpha}\right)}\right]
	\cdot\left[ \Phi\left(\sqrt{-\alpha}\right) - \Phi(\alpha) \right]
	%\right)
	\end{align}
	When $\alpha\Conv -\infty$, the right-hand side converges to $1$.
	%%%
	Hence, we must have $P'(0)>0$ when $-\alpha$ is sufficiently large.
	%%%
	Since $P$ is continuous, this must also be true in an open set. 
\end{proof}
\end{document}